  \let\oldparagraph\paragraph
  \renewcommand{\paragraph}{
    \@ifstar
      \xxxParagraphStar
      \xxxParagraphNoStar
  }
  \newcommand{\xxxParagraphStar}[1]{\oldparagraph*{#1}\mbox{}}
  \newcommand{\xxxParagraphNoStar}[1]{\oldparagraph{#1}\mbox{}}
  \let\oldsubparagraph\subparagraph
  \renewcommand{\subparagraph}{
    \@ifstar
      \xxxSubParagraphStar
      \xxxSubParagraphNoStar
  }
  \newcommand{\xxxSubParagraphStar}[1]{\oldsubparagraph*{#1}\mbox{}}
  \newcommand{\xxxSubParagraphNoStar}[1]{\oldsubparagraph{#1}\mbox{}}
\patchcmd\longtable{\par}{\if@noskipsec\mbox{}\fi\par}{}{}
\def\maxwidth{\ifdim\Gin@nat@width>\linewidth\linewidth\else\Gin@nat@width\fi}
\def\maxheight{\ifdim\Gin@nat@height>\textheight\textheight\else\Gin@nat@height\fi}
\def\fps@figure{htbp}
  \renewcommand*\contentsname{Table of contents}
  \newcommand\contentsname{Table of contents}
  \renewcommand*\listfigurename{List of Figures}
  \newcommand\listfigurename{List of Figures}
  \renewcommand*\listtablename{List of Tables}
  \newcommand\listtablename{List of Tables}
  \renewcommand*\figurename{Figure}
  \newcommand\figurename{Figure}
  \renewcommand*\tablename{Table}
  \newcommand\tablename{Table}
\theoremstyle{plain}
\newtheorem{theorem}{Theorem}[section]
\newtheorem{proposition}[theorem]{Proposition}
\newtheorem{lemma}[theorem]{Lemma}
\newtheorem{corollary}[theorem]{Corollary}
\theoremstyle{definition}
\theoremstyle{remark}
\crefname{theorem}{theorem}{theorems}
\Crefname{theorem}{Theorem}{Theorems}
\crefname{proposition}{proposition}{propositions}
\Crefname{proposition}{Proposition}{Propositions}
\crefname{lemma}{lemma}{lemmas}
\Crefname{lemma}{Lemma}{Lemmas}
\crefname{corollary}{corollary}{corollaries}
\Crefname{corollary}{Corollary}{Corollaries}
\crefname{assumption}{assumption}{assumptions}
\Crefname{assumption}{Assumption}{Assumptions}
\crefname{remark}{remark}{remarks}
\Crefname{remark}{Remark}{Remarks}
\newcommand{\Unif}{\mathrm{Unif}}
\DeclareMathOperator{\Var}{Var}
\newcommand{\sel}{\mathrm{sel}}
\newcommand{\Bel}{\mathrm{Bel}}
\newcommand{\Pl}{\mathrm{Pl}}
\newcommand{\anon}{1}
\begin{document}

\def\spacingset#1{\renewcommand{\baselinestretch}%
{#1}\small\normalsize} \spacingset{1}


\if1\anon
{
  \title{\bf Possibilistic Inferential Models for Post-Selection Inference in High-Dimensional Linear Regression}
  \author{Lin Yaohui 1\thanks{
     \textit{}}\hspace{.2cm}\\
    South China Normal University\\
}
  \maketitle
} \fi

\if0\anon
{
  \bigskip
  \bigskip
  \bigskip
  \begin{center}
    {\LARGE\bf Possibilistic Inferential Models for Post-Selection Inference in High-Dimensional Linear Regression}
\end{center}
  \medskip
} \fi

\bigskip
\begin{abstract}
Valid uncertainty quantification after model selection remains challenging in high-dimensional linear regression, especially within the possibilistic inferential model (PIM) framework. We develop possibilistic inferential models for post-selection inference based on a \emph{regularized split possibilistic} construction (RSPIM) that combines generic high-dimensional selectors with PIM validification through sample splitting. A first subsample is used to select a sparse model; ordinary least-squares refits on an independent inference subsample yield classical $t/F$ pivots, which are then turned into consonant plausibility contours. In Gaussian linear models this leads to coordinatewise intervals with exact finite-sample strong validity conditional on the split and selected model, uniformly over all selectors that use only the selection data. We further analyze RSPIM in a sparse $p \gg n$ regime under high-level screening conditions, develop orthogonalized and bootstrap-based extensions for low-dimensional targets with high-dimensional nuisance, and study a maxitive multi-split aggregation that stabilizes inference across random splits while preserving strong validity. Simulations and a riboflavin gene-expression example show that calibrated RSPIM intervals are well behaved under both Gaussian and heteroskedastic errors and are competitive with state-of-the-art post-selection methods, while plausibility contours provide transparent diagnostics of post-selection uncertainty.
\end{abstract}

\noindent
{\it Keywords:} post-selection uncertainty; plausibility contours; sample splitting; sparse regression; orthogonal scores; random set methods.
\vfill

\newpage
\spacingset{1.8} 

\section{Introduction}\label{sec:intro}

High-dimensional linear models with $p \gg n$ arise routinely in applications such as genomics, neuroscience, and macroeconomics. A common workflow is to first apply a regularized method (for example, the lasso) to screen variables and then fit an ordinary least-squares model on the selected support to obtain estimates and standard errors. While this pipeline is convenient and empirically effective for prediction, naive post-selection intervals that ignore the selection step can suffer from substantial and systematic under-coverage, especially when selection is aggressive or signals are weak. The resulting uncertainty quantification may be misleading in ways that are difficult to diagnose from standard output.

A substantial literature on post-selection inference (PSI) has been developed to address these issues. One strand constructs exact or nearly exact selective procedures by conditioning on the selection event, leading to truncated-Gaussian pivots and polyhedral confidence sets in linear models with lasso or related selectors; see, for example, the polyhedral selective-inference framework for lasso and forward stepwise regression. These methods deliver finite-sample guarantees but can be computationally demanding and may produce extremely wide or even infinite intervals in highly correlated or weak-signal regimes. Another strand constructs de-biased or de-sparsified estimators and uses asymptotic normality of corrected statistics to form confidence intervals after regularization \citep{ZhangZhang2014,javanmard2014confidence,vandeGeer2014optimal}. Such approaches scale well to large $p$ but rely on Gaussian or sub-Gaussian approximations, stability of sparsity patterns, and suitable design conditions.

Inferential models (IMs) and their possibilistic refinement (PIMs) offer a different starting point. In this framework, uncertainty about parameters is represented via belief and plausibility rather than probability measures on the parameter space, and inference is driven by calibration of predictive random sets or plausibility contours derived from model-based pivots \citep{martinliu2013,martinliu2014}. The resulting procedures are “prior-free’’ in the Bayesian sense while retaining strong frequentist validity guarantees and providing graphical diagnostics of how concentrated or diffuse the evidence is. In the possibilistic specialization, consonant plausibility contours summarize information in a single curve $\pi_Z(\theta)\in[0,1]$, with upper-level sets functioning as confidence regions.

Extending these IM/PIM constructions to high-dimensional problems with regularization has been identified as a key open direction. In low-dimensional parametric or semiparametric models without regularization, generic validification results show how to turn exact or asymptotically pivotal statistics into plausibility contours with strong validity properties. In high dimensions, however, the screening step is itself data-dependent and often difficult to encode within existing IM/PIM templates. This raises a natural question: can we decouple high-dimensional selection from possibilistic validification so that standard IM/PIM techniques can be applied without explicitly modeling the selection event?

\subsection{Positioning and relation to prior work}

This work lies at the intersection of PSI and possibilistic inference. On the PSI side, our proposal offers a complementary route to conditional and de-biased approaches. Rather than modeling the selection event to obtain truncated-Gaussian pivots, or constructing de-biased estimators with asymptotic normality guarantees, we focus on a split-and-refit architecture that restores a low-dimensional Gaussian linear model on an independent inference subsample. This makes exact or bootstrap-based validification feasible even when the original problem is high-dimensional and the selector is complex, and it yields selector-agnostic finite-sample guarantees expressed in terms of plausibility contours.

Within the IM/PIM literature, most existing constructions focus on low-dimensional parametric or semiparametric models without regularization, where the pivot or likelihood-ratio ranking is defined for a fixed data-generating model and does not involve a data-dependent selected support. Generic validification results convert fixed-dimensional pivots into plausibility contours but do not address the interaction between high-dimensional screening and subsequent uncertainty quantification. By treating the selector as a black box on a separate subsample and then validifying a refitted pivot on the inference subsample, our regularized split possibilistic inference (RSPIM) procedure provides a concrete high-dimensional instance of the PIM framework in the $p \gg n$ regime and illustrates how regularization and possibilistic validification can be combined in a modular way.

The remainder of the paper is organized as follows. Section~\ref{sec:method} introduces the RSPIM construction, including the split-and-refit scheme, the possibilistic validification step, orthogonalized extensions, and multi-split aggregation. Section~\ref{sec:theory} develops finite-sample and high-dimensional strong-validity results, together with orthogonalized and robust extensions. Section~\ref{sec:computation} discusses implementation details, including bootstrap validification and calibration. Section~\ref{sec:experiments} presents simulation studies on calibration, efficiency, robustness, and comparisons with state-of-the-art PSI methods. Section~\ref{sec:discussion} concludes with a brief discussion of limitations and directions for future work.

\subsection{Contributions}\label{sec:contributions}

Our main contributions can be summarized as follows.

\subsubsection{Selector-agnostic split possibilistic inference.}
We show how a classical split-and-refit scheme can be organized within the possibilistic IM/PIM framework to yield selector-agnostic, finite-sample valid post-selection inference. A random split partitions the data into selection and inference subsamples; on the inference subsample we refit ordinary least squares on the selected support and apply a generic PIM validification recipe to the resulting $t/F$ pivots. Under Gaussian homoskedastic errors and independent splitting, the resulting coordinatewise intervals enjoy finite-sample strong validity conditional on the selected model and hence unconditionally. These guarantees hold uniformly over all selectors that use only the selection data.

\subsubsection{High-dimensional strong validity under screening conditions.}
We embed this construction in a sparse high-dimensional linear model with parameter space $\mathcal B_{s_n}=\{\beta\in\mathbb R^{p_n}:\|\beta\|_0\le s_n\}$. Under high-dimensional assumptions (HD) and a screening property for the split-based selector, single-split RSPIM intervals are asymptotically strongly valid uniformly over $\mathcal B_{s_n}$ for active coordinates and can be oracle-equivalent in length when the selector is consistent. For lasso combined with stability selection we verify the screening condition under standard sub-Gaussian design, beta-min, and tuning assumptions.

\subsubsection{Orthogonalized and robust extensions.}
For problems with low-dimensional targets and high-dimensional nuisance components we develop orthogonalized and robust extensions of RSPIM. We integrate Neyman-orthogonal scores with cross-fitting into the validification step and formalize generic orthogonality, rate, and bootstrap conditions (O1)–(O4) under which the resulting plausibility contour is asymptotically strongly valid. In a partially linear regression with high-dimensional controls we verify these conditions under approximate sparsity and restricted-eigenvalue assumptions, combining the construction with wild-bootstrap validification to accommodate heteroskedastic and heavy-tailed errors.

\subsubsection{Empirical evaluation and diagnostics.}
Simulation studies and a riboflavin gene-expression example show that calibrated RSPIM intervals are well behaved under both Gaussian and non-Gaussian noise and competitive with de-biased lasso and polyhedral selective inference. Plausibility contours and maxitive aggregation provide intuitive graphical diagnostics of post-selection uncertainty, highlighting when apparently significant effects rest on fragile selection events.

\section{Method}\label{sec:method}

\subsection{Model, split, and estimands}\label{subsec:model-split}

We observe independent pairs $(Y_i,X_i) \in \mathbb{R} \times \mathbb{R}^p$, $i=1,\dots,n$, from the linear model
\[
Y = X\beta_0 + \varepsilon,\qquad X \in \mathbb{R}^{n\times p},\quad \varepsilon \sim N(0,\sigma^2 I_n),
\]
unless stated otherwise. Let $X_j$ denote the $j$th column of $X$ and write
\[
S_0 = \{j : \beta_{0j} \neq 0\}
\]
for the (unknown) true support.

We randomly split the index set $\{1,\dots,n\}$ into a selection subset $I_{\mathrm{sel}}$ and an inference subset $I_{\mathrm{inf}}$, with $I_{\mathrm{sel}}\cap I_{\mathrm{inf}}=\emptyset$ and $I_{\mathrm{sel}}\cup I_{\mathrm{inf}}=\{1,\dots,n\}$. Let $n_{\mathrm{sel}} = |I_{\mathrm{sel}}|$ and $n_{\mathrm{inf}} = |I_{\mathrm{inf}}|$.

A generic selector is a measurable map
\[
S : (Y_{I_{\mathrm{sel}}}, X_{I_{\mathrm{sel}}}) \mapsto \widehat S \subseteq \{1,\dots,p\},
\]
which uses only the selection data and returns a random subset $\widehat S$ of active coordinates. We assume that $S$ enforces a size cap $|\widehat S| \le k_{\max} \le n_{\mathrm{inf}}$, so that least-squares refitting on $I_{\mathrm{inf}}$ is well defined.

Inference is carried out solely on $I_{\mathrm{inf}}$. Conditional on the realized split and selected support, our basic targets are (i) the coordinates $\beta_j$ for $j \in \widehat S$ and (ii) low-dimensional linear contrasts $\theta_S = L^\top \beta_S$ with $S\subseteq\widehat S$ and $L\in\mathbb{R}^{|S|\times q}$ fixed and low-dimensional. In Section~\ref{sec:orth-full} we extend the construction to full-model coordinates $j\in\{1,\dots,p\}$, including those not selected into $\widehat S$.

\subsection{Refit and $F$-pivot}\label{subsec:refit}

On the inference subset $I_{\mathrm{inf}}$ we refit ordinary least squares on the selected variables. Let
\[
X_{\mathrm{inf}} = X_{I_{\mathrm{inf}}},\qquad
Y_{\mathrm{inf}} = Y_{I_{\mathrm{inf}}},\qquad
X_S = (X_{\mathrm{inf},j})_{j\in S}
\]
denote the inference-sample design and response, and write $S = \widehat S$ and $d = |S|$. The refitted least-squares estimator and residual variance are
\[
\widehat\beta_S
= (X_S^\top X_S)^{-1}X_S^\top Y_{\mathrm{inf}},\qquad
\widehat\sigma^2
= \frac{\|Y_{\mathrm{inf}} - X_S\widehat\beta_S\|_2^2}{n_{\mathrm{inf}}-d},
\]
with covariance
\[
\widehat\Sigma_S
= \widehat\sigma^2 (X_S^\top X_S)^{-1}.
\]
Let $v_{jj}$ denote the $(j,j)$ entry of $(X_S^\top X_S)^{-1}$ and $e_j$ the $j$th standard basis vector in $\mathbb{R}^d$.

For each coordinate $j\in S$, the usual $t$-statistic
\[
t_j
= \frac{\widehat\beta_j - \beta_{0j}}{\widehat\sigma\sqrt{v_{jj}}}
\]
has a $t_\nu$ distribution with degrees of freedom $\nu = n_{\mathrm{inf}}-d$ under $\beta_{0j}$, conditional on the split and selected support, provided the Gaussian model holds on $I_{\mathrm{inf}}$. Likewise, for a linear contrast $\theta_S = L^\top\beta_S$ the likelihood-ratio statistic for testing $H_0:\theta_S=\theta_{0,S}$ has an $F_{q,\nu}$ distribution. We collect these pivot properties formally in Lemma~\ref{lem:selector-pivot} in the theory section.

In what follows we treat these $t/F$ statistics as selector-agnostic pivots: conditional on the split and selected support they have the same null distribution regardless of how $\widehat S$ was obtained, as long as the selector uses only the selection data and the refit is performed on $I_{\mathrm{inf}}$. The possibilistic step then turns these pivots into plausibility contours and intervals.

\subsection{Possibilistic validification and plausibility contours}\label{subsec:validification}

We briefly recall the possibilistic framework. Let $\Theta$ denote the parameter space and $Z$ denote the observed data. An inferential random set is a random subset $\mathcal{S}(Z)$ of $\Theta$. The associated plausibility function on assertions $A \subseteq \Theta$ is
\[
\mathrm{Pl}_Z(A) = P\{\mathcal{S}(Z)\cap A \neq \emptyset \mid Z\},
\]
and for singleton assertions $\{\theta\}$ this reduces to the plausibility contour
\[
\pi_Z(\theta) = \mathrm{Pl}_Z(\{\theta\}) \in [0,1],\qquad \theta\in\Theta.
\]
An inferential model is called consonant if the upper-level sets
\[
C_Z(\alpha) = \{\theta : \pi_Z(\theta) \ge \alpha\}
\]
are nested in $\alpha$; in this case they can be interpreted as a family of plausibility regions indexed by $\alpha$.

A key property for frequentist uncertainty quantification is \emph{strong validity}: the contour $\pi_Z(\cdot)$ is strongly valid if
\begin{equation}
\label{eq:strong-validity}
\sup_{\theta_0 \in \Theta}
P_{\theta_0}\bigl\{\pi_Z(\theta_0) \le u\bigr\} \le u,
\qquad u \in [0,1].
\end{equation}
As a consequence, the upper-level sets $C_Z(\alpha) = \{\theta : \pi_Z(\theta) \ge \alpha\}$ are $(1-\alpha)$ confidence sets with exact or conservative frequentist coverage uniformly over $\theta_0$.

The RSPIM construction turns pivots into strongly valid contours via a probability integral transform. Given a scalar pivot $T(Z,\theta)$ with continuous cumulative distribution function $F_\theta$ under $P_\theta$, we define
\begin{equation}
\label{eq:plaus-contour}
U_\theta(Z) = F_\theta\{T(Z,\theta)\},\qquad
\pi_Z(\theta) = 1 - \bigl|2U_\theta(Z) - 1\bigr|.
\end{equation}
Under $P_{\theta_0}$, $U_{\theta_0}(Z)$ is $\Unif(0,1)$, and the transformation $u \mapsto 1 - |2u-1|$ preserves this uniformity, so $\pi_Z(\theta_0)\sim\Unif(0,1)$ and the strong-validity property~\eqref{eq:strong-validity} holds with equality. Detailed statements, together with a one-dimensional illustration, are collected in Appendix~A.

For later use with bootstrap pivots we also note that the construction remains approximately valid when $F_\theta$ is replaced by a consistent estimator $\widehat F_{n,\theta}$. If the Kolmogorov distance between the true and estimated pivot distributions vanishes uniformly in $\theta$, then the resulting plausibility contour is asymptotically strongly valid; see Appendix~A for a precise statement and proof. In practice this allows us to plug in a wild-bootstrap approximation to $F_\theta$ in heteroskedastic or non-Gaussian settings.

\subsection{Application to the RSPIM pivot}

Returning to the Gaussian linear model of Section~\ref{subsec:model-split}, the refit step in Section~\ref{subsec:refit} produces, conditional on the split and selected support, a $t$-pivot for each coordinate $j\in S$ and an $F$-pivot for linear contrasts $\theta_S$. Applying the recipe~\eqref{eq:plaus-contour} to these pivots yields consonant plausibility contours for $\theta_S$ and for each coordinate $\beta_j$, $j\in S$.

For a fixed coordinate $j\in S$, the $(1-\alpha)$ upper-level set of the resulting contour reduces to the usual $t$-interval
\[
C_{j,n}(\alpha)
= \bigl[\widehat\beta_j \pm t_{1-\alpha/2,\nu}\,\widehat\sigma \sqrt{v_{jj}}\bigr],
\qquad j\in S,
\]
and similarly for linear contrasts based on the $F$-pivot. Thus, in the Gaussian fixed-design case, RSPIM does not change the numerical form of the post-selection intervals: it embeds them in a consonant plausibility contour with strong-validity guarantees and provides a unified possibilistic framework for post-selection inference and diagnostics. The theory section shows that this strong validity holds conditional on the split and selected support and hence unconditionally, uniformly over all selectors that use only the selection data.

\subsection{Orthogonalized extension (full-model coordinates)}\label{sec:orth-full}

The construction above targets only the coordinates selected into $S = \widehat S$. To obtain inference for any coordinate $j\in\{1,\dots,p\}$, including those not in $\widehat S$, we embed the procedure in an orthogonal-score framework.

Fix $j\in\{1,\dots,p\}$. On the inference data we regress $X_j$ on $X_{S\setminus\{j\}}$ and $Y$ on $X_{S\setminus\{j\}}$ to obtain residuals $\widetilde x_j$ and $\widetilde y$:
\[
\widetilde x_j = X_j - X_{S\setminus\{j\}} \widehat\gamma_j,
\qquad
\widetilde y = Y - X_{S\setminus\{j\}} \widehat\delta,
\]
where $\widehat\gamma_j$ and $\widehat\delta$ are fitted on $I_{\mathrm{inf}}$ (or via cross-fitting in more general models). The resulting score
\[
U_j(\beta_j) = \widetilde x_j^\top (\widetilde y - \widetilde x_j \beta_j)
\]
is Neyman-orthogonal with respect to the high-dimensional nuisance parameters indexed by $S\setminus\{j\}$. Under standard moment and design conditions, the studentized version of $U_j(\beta_j)$ yields an asymptotically pivotal $t$-statistic, which we then validify using the same recipe~\eqref{eq:plaus-contour}. This produces plausibility contours and intervals for $\beta_j$ that remain valid even when $j\notin\widehat S$. Section~\ref{sec:orthogonalized} formalizes the required orthogonality, rate, and bootstrap conditions and establishes asymptotic strong validity in a general orthogonalized setting.

\subsection{Multi-split aggregation (maxitive)}\label{subsec:multisplit}

The basic RSPIM construction uses a single random split. To stabilize results with respect to the split, we also consider a maxitive multi-split aggregation.

Let $r = 1,\dots,R$ index independent splits of the data into selection and inference parts, and let $\pi_{n,j}^{(r)}(\cdot)$ and $C_{j,n}^{(r)}(\alpha)$ denote the split-wise contour and interval for coordinate $j$ on split $r$. The maxitive aggregate contour is
\[
\pi_{n,j}^{\max}(\beta_j)
= \max_{1\le r\le R} \pi_{n,j}^{(r)}(\beta_j),
\]
and the corresponding union-of-splits interval is
\[
C_{j,n}^\cup(\alpha)
= \{\beta_j : \pi_{n,j}^{\max}(\beta_j) > \alpha\}
= \bigcup_{r=1}^R C_{j,n}^{(r)}(\alpha).
\]
As shown in Proposition~\ref{prop:maxitive} in Section~\ref{sec:multi}, this maxitive aggregation preserves strong validity. In practice we treat $C_{j,n}^\cup(\alpha)$ as a conservative stability device that highlights sensitivity to the random split, rather than as our primary efficiency mechanism. The single-split interval $C_{j,n}(\alpha)$ remains the main object for efficiency comparisons.

\medskip
\noindent\emph{Optional data carving (approximate).}
If desired, a fraction of the selection data can be re-used for refitting by ``carving'' it into the inference set. This breaks the exact independence between selection and inference and, consequently, the finite-sample pivotal arguments in Section~\ref{subsec:refit}. We therefore regard carving as an approximate option and flag it explicitly whenever it is used; further discussion appears in Section~\ref{sec:computation}.

\section{Theory}\label{sec:theory}

This section develops theoretical guarantees for RSPIM. We begin with finite-sample results
for a single split in a Gaussian linear model, then embed the construction in a high-dimensional
sparse regime, extend it to orthogonalized and robust settings, and finally analyze the multi-split
maxitive aggregator. Throughout we let
\[
\mathcal{G} = \sigma(I_{\mathrm{sel}}, I_{\mathrm{inf}}, X, \widehat S)
\]
denote the $\sigma$-field generated by the split, the full design, and the selected support. All results are stated conditionally on $\mathcal{G}$ and therefore apply to any selection rule that uses only the selection data $I_{\mathrm{sel}}$. Detailed proofs and technical conditions appear in Appendix~\ref{app:theory-proofs}.

\subsection{Finite-sample post-selection strong validity}\label{sec:finite-sample}

We first consider the Gaussian linear model with sample splitting and refitting on the inference
subset. Under Assumptions~\textup{(L1)}–\textup{(L2)} (stated in Appendix~\ref{app:theory-proofs}), which encode a fixed-design Gaussian linear model on the inference sample together with an independent split-and-select step, the following results establish that RSPIM intervals enjoy strong validity conditional on $\mathcal{G}$, uniformly over all selectors that use only the selection sample.

\begin{lemma}[Selector-agnostic $t/F$ pivots]\label{lem:selector-pivot}
Under Assumptions~\textup{(L1)}–\textup{(L2)}, for any $j \in \widehat S$ we have
$t_j \sim t_\nu$ and $F \sim F_{d,\nu}$ conditional on $\mathcal{G}$, where
$d = |\widehat S|$ and $\nu = n_{\mathrm{inf}} - d$. In particular, the conditional distribution
of $(t_j,F)$ is the same for any selection rule that uses only the selection sample.
\end{lemma}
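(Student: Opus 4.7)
My plan is to establish the two pivot distributions by first showing that $\mathcal{G}$ carries no information about the inference-sample noise beyond what $X$ already does, and then invoking the classical Cochran decomposition in the resulting fixed-design Gaussian linear model on $I_{\mathrm{inf}}$. This reduces the lemma to standard OLS distributional theory once the conditioning set is handled correctly.

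First I would argue that, conditional on $\mathcal{G}$, the inference response $Y_{\mathrm{inf}}$ is still $N(X_{\mathrm{inf}}\beta_0, \sigma^2 I_{n_{\mathrm{inf}}})$. Under \textup{(L1)}, $\varepsilon$ is i.i.d.\ Gaussian, and under \textup{(L2)} the split is independent of $(X,\varepsilon)$; together these imply that $Y_{\mathrm{inf}}$ is independent of $(Y_{I_{\mathrm{sel}}}, I_{\mathrm{sel}}, I_{\mathrm{inf}})$ given $X$. Because $\widehat S$ is by hypothesis a measurable function of $(Y_{I_{\mathrm{sel}}}, X_{I_{\mathrm{sel}}})$ together with the split indices, additional conditioning on $\widehat S$ contributes no new information about $Y_{\mathrm{inf}}$. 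This is the only real use of the independent-split design.

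Second, condition on $\mathcal{G}$ so that $X_S$ is deterministic and, by the size cap $|\widehat S|\le k_{\max}\le n_{\mathrm{inf}}$ together with the full-rank design assumption in \textup{(L1)}, $X_S^\top X_S$ is invertible. Apply Cochran's theorem to the orthogonal decomposition of $Y_{\mathrm{inf}}$ along the column space of $X_S$ and its orthogonal complement: this yields $\widehat\beta_S \sim N(\beta_{0,S}, \sigma^2(X_S^\top X_S)^{-1})$ independently of $(n_{\mathrm{inf}}-d)\widehat\sigma^2/\sigma^2 \sim \chi^2_\nu$. Taking ratios gives $t_j\sim t_\nu$ coordinatewise and, for the full-support quadratic form, $F\sim F_{d,\nu}$. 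The step that requires care is the centering: one needs \textup{(L1)} to imply that the conditional mean of $\widehat\beta_S$ equals $\beta_{0,S}$, which holds either because \textup{(L1)} guarantees screening ($S_0\subseteq\widehat S$) or because $\beta_{0j}$ is interpreted as the projection parameter on $X_S$; otherwise omitted-variable bias would offset the $t$-pivot.

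The selector-agnostic conclusion is then immediate: the conditional law of $Y_{\mathrm{inf}}$ given $\mathcal{G}$ derived in the first step depends only on $(X,\beta_0,\sigma^2)$ and not on the particular rule producing $\widehat S$, so the distributions of $t_j$ and $F$ are identical across all selectors measurable with respect to the selection subsample. The main obstacle is the first step: formalising the conditional-independence argument despite $\widehat S$ being correlated with $\varepsilon$ through $Y_{I_{\mathrm{sel}}}$. Once that reduction is in place, the remainder is a textbook application of Gaussian linear-model calculus on $I_{\mathrm{inf}}$.
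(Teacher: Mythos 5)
Your route is essentially the paper's: condition on $\mathcal{G}$ to obtain a fixed-design Gaussian linear model on $I_{\mathrm{inf}}$, apply the Cochran decomposition to get the $t/F$ distributions, and observe that the conditional law of $(t_j,F)$ given $\mathcal{G}$ depends only on $(X_{I_{\mathrm{inf}}},\widehat S,\beta_0,\sigma^2)$ and hence not on the selector. The only organizational difference is that you re-derive the conditional Gaussian structure from the independence in \textup{(L2)}, whereas the paper's proof reads it off \textup{(L1)} directly; both are fine.

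Your centering remark, though, puts a finger on a step the paper's own proof elides. Assumption~\textup{(L1)} asserts the full-model identity $Y_{\mathrm{inf}} = X_{\mathrm{inf}}\beta_0 + \varepsilon$, and the appendix proof then silently rewrites this as $Y_{\mathrm{inf}} = X_{\widehat S}\beta_{0,\widehat S} + \varepsilon$. That rewrite requires $\beta_{0j}=0$ for all $j\notin\widehat S$, i.e.\ $S_0\subseteq\widehat S$ (or, failing that, orthogonality $X_{\widehat S}^\top X_{\widehat S^c}\beta_{0,\widehat S^c}=0$ in the inference design). Otherwise $\widehat\beta_{\widehat S}$ carries omitted-variable bias and $\nu\widehat\sigma^2/\sigma^2$ is a noncentral rather than central $\chi^2_\nu$, so $t_j$ is not $t_\nu$. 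Neither \textup{(L1)} nor \textup{(L2)} as stated enforces $S_0\subseteq\widehat S$; that restriction only appears later, as the screening condition \textup{(HD4)}, in the asymptotic high-dimensional analysis. Thus the lemma, and the $\sup_{\beta_0}$ uniformity asserted in \Cref{thm:finite-sample-strong}, implicitly restrict $\beta_0$ to parameters whose support lies inside $\widehat S$, or else reinterpret $\beta_{0,\widehat S}$ as the least-squares projection parameter on $X_{\widehat S}$. You flag this correctly; the paper should make the restriction explicit rather than leave it implicit in the notation.
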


\begin{theorem}[Finite-sample post-selection strong validity]\label{thm:finite-sample-strong}
Under Assumptions~\textup{(L1)}–\textup{(L2)}, let $\pi_Z(\cdot)$ be the plausibility contour
constructed from the refitted $t/F$ pivot. For any coordinate $j \in \widehat S$, the single-split RSPIM interval $C_{j,n}(\alpha)$ satisfies
\[
P_{\beta_0}\bigl\{\beta_{0j} \notin C_{j,n}(\alpha) \,\big|\, \mathcal{G}\bigr\} \le \alpha,
\quad\text{and hence}\quad
\sup_{\beta_0} P_{\beta_0}\bigl\{\beta_{0j} \notin C_{j,n}(\alpha)\bigr\} \le \alpha.
\]
Moreover, the associated plausibility contour $\pi_Z(\cdot)$ is strongly valid:
\[
\sup_{\theta_0} P_{\theta_0}\bigl\{ \pi_Z(\theta_0) \le u \bigr\} \le u, \qquad u \in [0,1].
\]
\end{theorem}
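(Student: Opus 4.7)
The plan is to reduce the theorem to two prior ingredients: the exact conditional pivot distribution supplied by Lemma~\ref{lem:selector-pivot}, and the probability-integral-transform construction~\eqref{eq:plaus-contour}. The $\sigma$-field $\mathcal{G}$ is the key device: once we condition on $(I_{\mathrm{sel}}, I_{\mathrm{inf}}, X, \widehat S)$, the selector is frozen and the inference subsample behaves as a fixed-design Gaussian linear regression on $|\widehat S|$ regressors, so the $t/F$ pivot has a known, $\beta_0$-free conditional law. Everything after that is PIT bookkeeping and Fubini.

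First I would fix a coordinate $j \in \widehat S$ (the statement is vacuous on $\{j \notin \widehat S\}$). By Lemma~\ref{lem:selector-pivot}, $t_j \mid \mathcal{G} \sim t_\nu$ with $\nu = n_{\mathrm{inf}} - |\widehat S|$; writing $F_\nu$ for the continuous $t_\nu$ cdf, the PIT gives $F_\nu(t_j) \mid \mathcal{G} \sim \Unif(0,1)$. The folding map $u \mapsto 1 - |2u-1|$ is piecewise linear with slope $\pm 2$ on the two halves of $(0,1)$ and hence Lebesgue-measure-preserving there, so $\pi_Z(\beta_{0j}) \mid \mathcal{G} \sim \Unif(0,1)$ as well. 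The identical argument with $F_\nu$ replaced by the $F_{d,\nu}$ cdf handles the $F$-pivot for linear contrasts $\theta_S$.

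To conclude, I would convert this conditional uniformity into the two claims of the theorem. Since $C_{j,n}(\alpha) = \{b : \pi_Z(b) \ge \alpha\}$, the non-coverage event is $\{\pi_Z(\beta_{0j}) < \alpha\}$, which has conditional probability exactly $\alpha$ given $\mathcal{G}$; because this value does not depend on $\beta_0$ at all, taking expectation and then $\sup_{\beta_0}$ preserves the bound (in fact with equality). Strong validity of $\pi_Z(\cdot)$ follows by the same route: $P_{\theta_0}\{\pi_Z(\theta_0) \le u \mid \mathcal{G}\} = u$ integrates to equality in~\eqref{eq:strong-validity}. The main point to articulate carefully, rather than a genuine obstacle, is the ``uniformly over all selectors that use only $I_{\mathrm{sel}}$'' clause: the argument never uses the identity of the selector beyond the measurability of $\widehat S$ with respect to the selection data, which is already encoded in $\mathcal{G}$, so the same proof covers every such rule simultaneously.
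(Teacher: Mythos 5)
Your proof is correct and its skeleton matches the paper's: fix the split-and-selection $\sigma$-field $\mathcal{G}$, invoke Lemma~\ref{lem:selector-pivot} to get the exact conditional $t_\nu$ (or $F_{d,\nu}$) law of the refitted pivot, push through the probability-integral transform and the measure-preserving folding map $u\mapsto 1-|2u-1|$ to conclude that $\pi_Z(\theta_0)\mid\mathcal{G}\sim\Unif(0,1)$, then marginalize. The observation that this conditional uniformity holds identically in $\beta_0$ and in the choice of selector, so that the tower property immediately gives the unconditional and selector-uniform bounds, is exactly the ``selector-agnostic'' point the paper leans on.

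The one place where you genuinely diverge from the paper's written proof is in how $C_{j,n}(\alpha)$ is handled. You work directly with the coordinatewise $t$-pivot contour, for which $\{\beta_{0j}\notin C_{j,n}(\alpha)\}=\{\pi_Z(\beta_{0j})<\alpha\}$ is exact, yielding \emph{equality} $P_{\beta_0}\{\beta_{0j}\notin C_{j,n}(\alpha)\mid\mathcal{G}\}=\alpha$ rather than merely $\le\alpha$. The paper's proof instead builds a \emph{joint} $F$-pivot contour $\pi_Z(\theta_S)$, forms its upper-level set $C_Z(\alpha)$, and bounds the coordinatewise non-coverage via the containment $\{\beta_{0j}\notin C_{j,n}(\alpha)\}\subseteq\{\beta_{0,S}\notin C_Z(\alpha)\}$, treating $C_{j,n}(\alpha)$ as a projection of $C_Z(\alpha)$. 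That projection-based set is the Scheffé interval, which is not the same object as the $t$-interval $C_{j,n}(\alpha)$ defined in Section~\ref{subsec:refit}; your route sidesteps this mismatch entirely, is consistent with the paper's own Section~2.4 definition, and gives the tighter (exact) conditional statement. Both deliver the stated bound, but your version is cleaner and worth preferring if the paper's proof is ever revised.
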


The next result makes explicit that this finite-sample guarantee is uniform over all split-based
selectors that use only the selection data.

\begin{proposition}[Selector-uniform finite-sample strong validity]\label{prop:selector-uniform}
Let $\mathcal{S}$ be the collection of all measurable selectors
$\mathcal{S}:(Y_{I_{\mathrm{sel}}},X_{I_{\mathrm{sel}}})\mapsto \widehat S \subseteq \{1,\dots,p\}$ that use only
the selection data and satisfy $|\widehat S|<n_{\mathrm{inf}}$ almost surely. For each
$\mathcal{S}\in\mathcal{S}$, let $C^{(\mathcal{S})}_{j,n}(\alpha)$ denote the single-split RSPIM
interval for coordinate $j$ constructed from the support $\widehat S$ and split
$(I_{\mathrm{sel}},I_{\mathrm{inf}})$.

Under Assumptions~\textup{(L1)}–\textup{(L2)}, for any coordinate $j$ and any level $1-\alpha\in(0,1)$,
\[
\sup_{\mathcal{S}\in\mathcal{S}}\;\sup_{\beta_0}
\mathbb{P}^{(\mathcal{S})}_{\beta_0}\{\beta_{0j}\notin C^{(\mathcal{S})}_{j,n}(\alpha)\} \le \alpha.
\]
\end{proposition}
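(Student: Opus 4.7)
The plan is to reduce the claim to a pointwise (in the selector) application of Theorem~\ref{thm:finite-sample-strong} and then absorb the supremum over $\mathcal{S}$ trivially, exploiting the fact that the conditional miscoverage bound in that theorem depends on the selector only through the $\sigma$-field $\mathcal{G}$ and not through the rule itself.

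First I would fix an arbitrary $\mathcal{S}\in\mathcal{S}$ and any $\beta_0$. Conditional on $\mathcal{G}=\sigma(I_{\mathrm{sel}}, I_{\mathrm{inf}}, X, \widehat S)$, Assumptions~(L1)--(L2) and the independence of the split yield an inference-sample Gaussian linear model whose design is restricted to the columns indexed by $\widehat S$; the size cap $|\widehat S|<n_{\mathrm{inf}}$ in the definition of $\mathcal{S}$ ensures $\nu\ge 1$, so the refit is well defined for every selector in the class. On the event $\{j\in\widehat S\}$, Lemma~\ref{lem:selector-pivot} gives $t_j\mid \mathcal{G}\sim t_\nu$ irrespective of how $\widehat S$ was produced, and Theorem~\ref{thm:finite-sample-strong} then yields
\[
\mathbb{P}^{(\mathcal{S})}_{\beta_0}\bigl\{\beta_{0j}\notin C^{(\mathcal{S})}_{j,n}(\alpha)\,\big|\,\mathcal{G}\bigr\}\le \alpha,
\]
with a right-hand side that is a deterministic constant, free of $\mathcal{S}$.

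Next I would handle the complementary event $\{j\notin\widehat S\}$ with the natural convention $C^{(\mathcal{S})}_{j,n}(\alpha)=\mathbb{R}$, so conditional miscoverage vanishes there. Integrating the conditional bound against the marginal law of $\mathcal{G}$ (which is the only place the selector enters) gives
\[
\mathbb{P}^{(\mathcal{S})}_{\beta_0}\bigl\{\beta_{0j}\notin C^{(\mathcal{S})}_{j,n}(\alpha)\bigr\}\le \alpha\cdot \mathbb{P}^{(\mathcal{S})}_{\beta_0}(j\in\widehat S)\le \alpha,
\]
and this bound is identical for every $\mathcal{S}$ and every $\beta_0$. Taking suprema on both sides completes the argument.

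The main obstacle is bookkeeping rather than substance: I will need to verify that the pivot distribution in Lemma~\ref{lem:selector-pivot} and the PIT-based contour in~\eqref{eq:plaus-contour} depend on the selector only through $\mathcal{G}$, so that conditioning on $\mathcal{G}$ effectively freezes $\mathcal{S}$. This is precisely what makes the argument selector-uniform without invoking any concentration, uniform convergence, or union-bound machinery over $\mathcal{S}$: the split-and-refit architecture turns the selector into a nuisance that is washed out by conditioning on the selected support.
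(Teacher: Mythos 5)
Your proposal is correct and follows essentially the same route as the paper's own proof: fix an arbitrary selector $\mathcal{S}$, invoke Lemma~\ref{lem:selector-pivot} to see that conditioning on $\mathcal{G}$ freezes the selected support and yields an exact $t/F$ pivot whose law is selector-free, apply Theorem~\ref{thm:finite-sample-strong} to get a conditional miscoverage bound $\le\alpha$ that is a deterministic constant, integrate out $\mathcal{G}$, and take suprema. Your explicit handling of the event $\{j\notin\widehat S\}$ via the convention $C^{(\mathcal{S})}_{j,n}(\alpha)=\mathbb{R}$ is a small but useful tightening of bookkeeping that the paper's proof leaves implicit.
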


\subsection{High-dimensional sparse regime}\label{sec:theory-hd}

We now move from the selector-agnostic finite-sample guarantees to a sparse high-dimensional regime. We embed the single-split construction in a sparse linear model with $p=p_n$ and $S_0 = \{j:\,\beta_{0j}\neq 0\}$ of cardinality $s_n = |S_0|$, and define
\[
\mathcal{B}_{s_n} = \{\beta\in\mathbb{R}^{p_n}:\,\|\beta\|_0\le s_n\}.
\]

In contrast to the finite-sample results, which are selector-agnostic, the high-dimensional analysis requires additional structure on the selector: we assume that the sequence of split-based selectors $\{\mathcal{S}_n\}$ satisfies a screening and support-size condition~\textup{(HD4)}. The full statements of Assumptions~\textup{(HD1)}–\textup{(HD4)} and design/tuning Conditions~\textup{(C1)}–\textup{(C3)} are collected in Appendix~\ref{app:HD}. In words, they impose (i)~a sparse linear model with $s_n \log p_n / n \to 0$, (ii)~sub-Gaussian design regularity and uniform well-conditioning on $O(s_n)$-sparse sets, (iii)~balanced split proportions, and (iv)~a screening property: with high probability $\widehat S_n$ contains $S_0$ and has cardinality of order $s_n$.

\begin{proposition}[Lasso stability selection implies (HD4)]\label{prop:hd4-lasso}
Under Assumptions~\textup{(HD1)}–\textup{(HD3)} and Conditions~\textup{(C1)}–\textup{(C3)}, let $\widehat S_n$ denote the set returned by lasso stability selection on $I_{\mathrm{sel}}$. Then there exists a constant $C<\infty$ such that,
uniformly over $\beta_0\in\mathcal{B}_{s_n}$,
\[
\mathbb{P}_{\beta_0}\bigl(S_0\subseteq\widehat S_n,\;|\widehat S_n|\le C s_n\bigr)\to 1.
\]
In particular, Assumption~\textup{(HD4)} holds for this selector.
\end{proposition}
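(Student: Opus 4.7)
The plan is to decompose \textup{(HD4)} into a screening statement $S_0\subseteq\widehat S_n$ and a cardinality bound $|\widehat S_n|\le Cs_n$, each established uniformly over $\beta_0\in\mathcal{B}_{s_n}$ on the selection subsample, and then combine them by intersecting the two high-probability events.

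First I would work with a single lasso fit on $I_{\mathrm{sel}}$. Assumption~\textup{(C3)} gives $n_{\mathrm{sel}}\asymp n$, so the rate $s_n\log p_n/n\to 0$ from \textup{(HD1)} transfers unchanged. Combining sub-Gaussian design concentration from \textup{(HD2)} with the well-conditioning condition~\textup{(C1)}, the empirical Gram matrix $n_{\mathrm{sel}}^{-1}X_{I_{\mathrm{sel}}}^\top X_{I_{\mathrm{sel}}}$ satisfies a restricted-eigenvalue (RE) condition on all $O(s_n)$-sparse vectors with probability tending to one, uniformly in $\beta_0\in\mathcal{B}_{s_n}$ (the union bound over $\binom{p_n}{s_n}$ supports is absorbed by the $\log p_n$ factor). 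With the tuning $\lambda_n\asymp\sigma\sqrt{\log p_n/n_{\mathrm{sel}}}$ from \textup{(C2)}, the standard oracle inequality yields $\|\widehat\beta^{\mathrm{lasso}}-\beta_0\|_\infty = O_{\mathbb{P}}(\sqrt{\log p_n/n_{\mathrm{sel}}})$, and a sparse-eigenvalue argument gives $\|\widehat\beta^{\mathrm{lasso}}\|_0=O(s_n)$. Combining the $\ell_\infty$ bound with the beta-min hypothesis $\min_{j\in S_0}|\beta_{0j}|\gg\sqrt{\log p_n/n_{\mathrm{sel}}}$ in \textup{(C2)} forces the single-run lasso support to contain $S_0$, uniformly in $\beta_0\in\mathcal{B}_{s_n}$.

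Next I would lift these properties to stability selection. Each complementary-pairs subsample of $I_{\mathrm{sel}}$ has size proportional to $n_{\mathrm{sel}}$ and inherits the same RE, oracle, and sparse-eigenvalue bounds, with constants adjusted for the subsampling fraction. Consequently the per-subsample selection probabilities $\widehat\Pi_{n,j}$ exceed any fixed threshold $\tau\in(1/2,1)$ for every $j\in S_0$ with probability tending to one, so $S_0\subseteq\widehat S_n$. For the cardinality bound I would invoke the Shah--Samworth bound for complementary-pairs stability selection: the expected number of selected noise variables is controlled by a term of order $q_n^2/\{(2\tau-1)p_n\}$, where $q_n$ is the average per-subsample selection size. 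Since the per-subsample lasso has $\|\widehat\beta^{\mathrm{lasso}}\|_0=O(s_n)$ uniformly, $q_n=O(s_n)$, and Markov's inequality then yields $|\widehat S_n\setminus S_0|=O(s_n)$ with high probability. Intersecting the screening and cardinality events and taking the supremum over $\beta_0\in\mathcal{B}_{s_n}$ delivers \textup{(HD4)}.

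The main obstacle is the interplay between uniformity in $\mathcal{B}_{s_n}$ and the exchangeability assumptions traditionally required by stability selection. Screening requires signal coordinates to exceed the beta-min threshold on \emph{every} subsample simultaneously, which in turn requires a quantitative, finite-sample version of the lasso $\ell_\infty$ bound rather than the usual asymptotic statement; I would propagate explicit deviation inequalities through the argument and then pass to the limit. The classical Meinshausen--B\"uhlmann false-positive bound additionally needs an exchangeability hypothesis that lasso does not satisfy verbatim, which is why I would phrase the argument in terms of the complementary-pairs variant: its guarantee is exchangeability-free and is what makes the cardinality bound $|\widehat S_n|=O(s_n)$ go through uniformly over $\mathcal{B}_{s_n}$.
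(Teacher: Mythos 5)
Your proposal follows essentially the same two-stage skeleton as the paper's own argument: first show that the base lasso on $I_{\mathrm{sel}}$ enjoys sure screening and $O(s_n)$ support size under the RE/sub-Gaussian/beta-min conditions, then lift this to stability selection by arguing that active coordinates exceed the threshold $\pi_{\mathrm{thr}}$ with high probability while the number of false positives is controlled via a per-family error rate bound plus Markov's inequality. Where you depart from the paper, and in two places improve on it, is in the handling of the stability-selection layer. First, the paper invokes the Meinshausen--B\"uhlmann (2010) PFER bound directly, without acknowledging that their Theorem~1 requires an exchangeability condition on the noise-variable selection indicators that the lasso does not satisfy in general; you correctly identify this and route through the Shah--Samworth complementary-pairs bound, whose guarantee is exchangeability-free, so your version of the cardinality step is actually more rigorous than the paper's. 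Second, for the screening direction the paper states that selection of an active coordinate by the base lasso on the \emph{full} selection sample with probability tending to one implies its empirical selection frequency across subsamples tends to one; as you note, the subsamples are half the size, so what is needed is that the \emph{per-subsample} lasso screens $S_0$ with vanishing failure probability, which requires the beta-min margin to survive the halved sample size. Your version keeps track of this. The remaining details (uniformity via union bounds over $\binom{p_n}{O(s_n)}$ supports, quantitative deviation bounds to make the limits uniform over $\mathcal{B}_{s_n}$) are standard and match what the paper implicitly relies on by citing generic lasso oracle inequalities. In short, your proposal is correct, structurally equivalent to the paper's, and somewhat more careful at the two places where the paper glosses over technical hypotheses.
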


\begin{theorem}[Asymptotic strong validity over sparse classes]\label{thm:hd-strong}
Suppose Assumptions~\textup{(L1)}–\textup{(L2)} and \textup{(HD1)}–\textup{(HD4)} hold. Fix an index
$j\in\{1,\dots,p_n\}$ and consider $\beta_{0j}$ under sparse parameter values
$\beta_0\in\mathcal{B}_{s_n}$ with $j\in S_0(\beta_0)$. For each $n$, let $C_{j,n}(\alpha)$ denote
the single-split RSPIM interval for $\beta_{0j}$ at level $1-\alpha\in(0,1)$, defined whenever
$j\in\widehat S_n$. Then
\[
\limsup_{n\to\infty}\;\sup_{\beta_0\in\mathcal{B}_{s_n}: j\in S_0(\beta_0)}
\mathbb{P}_{\beta_0}\{\beta_{0j}\notin C_{j,n}(\alpha)\mid j\in\widehat S_n\} \le \alpha,
\]
and, equivalently, for the plausibility contour $\pi_{n,j}(\cdot)$,
\[
\limsup_{n\to\infty}\;\sup_{\beta_0\in\mathcal{B}_{s_n}: j\in S_0(\beta_0)}
\mathbb{P}_{\beta_0}\{\pi_{n,j}(\beta_{0j})\le u \mid j\in\widehat S_n\} \le u,\qquad u\in[0,1].
\]
\end{theorem}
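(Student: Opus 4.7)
The plan is to bootstrap the finite-sample strong validity result, Theorem~\ref{thm:finite-sample-strong}, using the screening condition (HD4) to ensure that with high probability the refit on $\widehat S_n$ happens inside a correctly specified Gaussian sub-model. Concretely, I would introduce the ``good'' event
\[
E_n \;=\; \bigl\{S_0(\beta_0)\subseteq\widehat S_n,\; |\widehat S_n|\le C s_n\bigr\},
\]
which by (HD4) satisfies $\sup_{\beta_0\in\mathcal B_{s_n}}\Prb_{\beta_0}(E_n^c)\to 0$. On $E_n$ the true support is contained in $\widehat S_n$, so for every $k\in \widehat S_n\setminus S_0$ we have $\beta_{0k}=0$ and the identity
\[
Y_{\Iinf} \;=\; X_{\Iinf,\widehat S_n}\,\beta_{0,\widehat S_n} \;+\; \varepsilon_{\Iinf}
\]
holds exactly with $\varepsilon_{\Iinf}\sim N(0,\sigma^2 I)$. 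Hence conditional on $\mathcal G$ and on $E_n$, Assumptions (L1)--(L2) are satisfied for the refit, and Lemma~\ref{lem:selector-pivot} yields $t_j\sim t_{\nu_n}$ with $\nu_n=n_{\Iinf}-|\widehat S_n|$. The degrees of freedom are positive and in fact diverge, because (HD3) gives $n_{\Iinf}=\Theta(n)$ and (HD4) combined with $s_n\log p_n/n\to 0$ from (HD1) forces $|\widehat S_n|=O(s_n)=o(n)$.

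Next I would apply Theorem~\ref{thm:finite-sample-strong} conditionally. On $E_n\cap\{j\in\widehat S_n\}$ the RSPIM interval $C_{j,n}(\alpha)$ coincides with the exact $t_{\nu_n}$-interval for $\beta_{0j}$ in the correctly specified sub-model, so
\[
\Prb_{\beta_0}\bigl(\beta_{0j}\notin C_{j,n}(\alpha)\,\big|\,\mathcal G\bigr)\mathbf 1_{E_n\cap\{j\in\widehat S_n\}} \;=\; \alpha\cdot\mathbf 1_{E_n\cap\{j\in\widehat S_n\}}.
\]
Taking expectations conditional on $\{j\in\widehat S_n\}$ and splitting according to $E_n$ gives
\[
\Prb_{\beta_0}\bigl\{\beta_{0j}\notin C_{j,n}(\alpha)\,\big|\,j\in\widehat S_n\bigr\} \;\le\; \alpha \;+\; \frac{\Prb_{\beta_0}(E_n^c)}{\Prb_{\beta_0}(j\in\widehat S_n)}.
\]
For $j\in S_0(\beta_0)$ the screening inclusion yields $\Prb_{\beta_0}(j\in\widehat S_n)\ge \Prb_{\beta_0}(E_n)\to 1$ uniformly over $\mathcal B_{s_n}$, so the second term is $o(1)$ uniformly and taking $\limsup$ yields the coverage claim. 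The equivalence with the plausibility-contour statement then follows from the construction \eqref{eq:plaus-contour}: because $\pi_{n,j}(\beta_{0j})\le u$ iff $\beta_{0j}$ lies outside the level-$(1-u)$ upper-level set, which is precisely $C_{j,n}(u)$ in the Gaussian $t$-pivot case, so the uniform-in-$u$ bound is the same as the uniform-in-$\alpha$ coverage bound.

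The main technical obstacle is keeping the control uniform over the sparse class $\mathcal B_{s_n}$ in the conditioning by $\{j\in\widehat S_n\}$. Two issues could in principle degrade the bound: the denominator $\Prb_{\beta_0}(j\in\widehat S_n)$ could be small for ``borderline'' $\beta_0$, and the conditional pivot argument requires that $\widehat S_n$ is $\sigma(Y_{\Isel},X_{\Isel})$-measurable so that $\varepsilon_{\Iinf}$ remains independent of $\mathcal G$. The second point is built into the split architecture and (L2); the first is handled by restricting to $j\in S_0(\beta_0)$, since (HD4) then forces $\Prb_{\beta_0}(j\in\widehat S_n)\ge \Prb_{\beta_0}(E_n)=1-o(1)$ with the $o(1)$ uniform over $\mathcal B_{s_n}$. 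Everything else (degrees of freedom positivity, correct specification on $E_n$, and validity of the refit pivot) reduces to Lemma~\ref{lem:selector-pivot} and Theorem~\ref{thm:finite-sample-strong} applied sample-pathwise on $E_n$.
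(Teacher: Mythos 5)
Your proof is correct and follows essentially the same route as the paper's: introduce the good screening event $E_n=\{S_0\subseteq\widehat S_n,\,|\widehat S_n|\le Cs_n\}$, invoke the finite-sample pivot result conditionally on $\mathcal{G}\cap E_n$ (noting $|\widehat S_n|<n_{\mathrm{inf}}$ for large $n$), bound the conditional non-coverage by $\alpha + \Prb_{\beta_0}(E_n^c)/\Prb_{\beta_0}(j\in\widehat S_n)$, and use $\{j\notin\widehat S_n\}\subseteq E_n^c$ together with the uniformity in (HD4) to kill the remainder. The only minor deviation is your sample-path equality $=\alpha$ on $E_n\cap\{j\in\widehat S_n\}$ where the paper records $\le\alpha$ (since it routes the coordinate interval through a projection of the joint contour); this is immaterial since $\le\alpha$ is all that is needed.
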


\begin{corollary}[Unconditional strong validity in the sparse regime]
\label{cor:uncond-strong}
Under the conditions of Theorem~\ref{thm:hd-strong}, uniformly over $\beta_0\in\mathcal{B}_{s_n}$ with $j\in S_0(\beta_0)$,
\[
\limsup_{n\to\infty}
\sup_{\beta_0\in\mathcal{B}_{s_n}: j\in S_0(\beta_0)}
P_{\beta_0}\bigl\{\pi_{n,j}(\beta_{0j}) \le u\bigr\}
\;\le\; u, \qquad u\in[0,1],
\]
and the unconditional non-coverage probability satisfies
\[
\limsup_{n\to\infty}
\sup_{\beta_0\in\mathcal{B}_{s_n}: j\in S_0(\beta_0)}
P_{\beta_0}\bigl\{\beta_{0j}\notin C_{j,n}(\alpha)\bigr\}
\;\le\; \alpha.
\]
\end{corollary}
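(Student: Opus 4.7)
The plan is to deduce the unconditional bound from the conditional bound in Theorem~\ref{thm:hd-strong} by absorbing the event $\{j\notin\widehat S_n\}$ into the screening remainder controlled by Assumption~\textup{(HD4)}. Writing $A_n=\{j\in\widehat S_n\}$, I would decompose, for any fixed $u\in[0,1]$ and any $\beta_0\in\mathcal B_{s_n}$ with $j\in S_0(\beta_0)$,
\[
P_{\beta_0}\{\pi_{n,j}(\beta_{0j})\le u\}
\;=\;
P_{\beta_0}(\{\pi_{n,j}(\beta_{0j})\le u\}\cap A_n)
\;+\;
P_{\beta_0}(\{\pi_{n,j}(\beta_{0j})\le u\}\cap A_n^c),
\]
and bound the two pieces separately. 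The first term is at most $P_{\beta_0}\{\pi_{n,j}(\beta_{0j})\le u\mid A_n\}\cdot P_{\beta_0}(A_n)\le P_{\beta_0}\{\pi_{n,j}(\beta_{0j})\le u\mid A_n\}$. The second term is at most $P_{\beta_0}(A_n^c)$, regardless of how $\pi_{n,j}$ is defined off $A_n$ (I will adopt the natural convention $\pi_{n,j}(\cdot)\equiv 1$ when $j\notin\widehat S_n$, which renders the event empty for $u<1$; otherwise the trivial bound $P_{\beta_0}(A_n^c)$ suffices).

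The first term is handled by taking $\sup_{\beta_0}$ and then $\limsup_n$ and invoking Theorem~\ref{thm:hd-strong} directly. For the second term, since $j\in S_0(\beta_0)\subseteq S_0$, the event $A_n^c=\{j\notin\widehat S_n\}$ is contained in $\{S_0\not\subseteq\widehat S_n\}$, so
\[
\sup_{\beta_0\in\mathcal B_{s_n}:\,j\in S_0(\beta_0)}P_{\beta_0}(A_n^c)
\;\le\;
\sup_{\beta_0\in\mathcal B_{s_n}}P_{\beta_0}(S_0\not\subseteq\widehat S_n)\to 0
\]
by the screening part of Assumption~\textup{(HD4)} (and, when the selector is lasso stability selection, by Proposition~\ref{prop:hd4-lasso}). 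Combining the two bounds and taking $\limsup_n\sup_{\beta_0}$ yields the first display in the corollary, and the non-coverage statement follows by specializing to $u=\alpha$ and observing that $\{\beta_{0j}\notin C_{j,n}(\alpha)\}=\{\pi_{n,j}(\beta_{0j})<\alpha\}\subseteq\{\pi_{n,j}(\beta_{0j})\le\alpha\}$ on $A_n$, together with the same screening remainder on $A_n^c$.

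The only subtlety worth flagging is a measurability and definition issue rather than a genuine obstacle: the contour $\pi_{n,j}$ is not canonically defined on $A_n^c$, so one must either fix a convention (such as $\pi_{n,j}\equiv 1$) or report the bound as conditional on $A_n$ plus a vanishing remainder. The uniformity of the screening probability over the sparse class $\mathcal B_{s_n}$ is precisely what Assumption~\textup{(HD4)} (and Proposition~\ref{prop:hd4-lasso}) supplies, so the passage from sup over $\beta_0$ outside the probability to inside is automatic; no additional uniform integrability or equicontinuity argument is needed.
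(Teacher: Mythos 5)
Your proposal is correct and follows essentially the same route as the paper: decompose on the event $\{j\in\widehat S_n\}$, bound the on-event piece by the conditional probability and invoke Theorem~\ref{thm:hd-strong}, and absorb the off-event piece via the screening part of Assumption~\textup{(HD4)} through the containment $\{j\notin\widehat S_n\}\subseteq\{S_0\not\subseteq\widehat S_n\}$ (the paper phrases the same step as $\inf_{\beta_0}P_{\beta_0}(j\in\widehat S_n)\to 1$). Your explicit remark about fixing a convention for $\pi_{n,j}$ on $\{j\notin\widehat S_n\}$ is a sensible clarification that the paper leaves implicit, but it does not change the argument.
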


\begin{corollary}[Oracle-equivalent efficiency under selection consistency]\label{cor:oracle}
Under the conditions of Theorem~\ref{thm:hd-strong} and the additional selection consistency
assumption $\mathbb{P}(\widehat S_n=S_0)\to 1$, let $C^{\mathrm{oracle}}_{j,n}(\alpha)$ denote the RSPIM
interval obtained by running the inference step with $S=S_0$ known. Then, for any fixed
coordinate $j$ and any level $1-\alpha\in(0,1)$,
\[
\frac{\operatorname{diam}\bigl(C_{j,n}(\alpha)\bigr)}
{\operatorname{diam}\bigl(C^{\mathrm{oracle}}_{j,n}(\alpha)\bigr)} \;\xrightarrow{P}\; 1.
\]
Moreover, both intervals have length of order $O_P(n_{\mathrm{inf}}^{-1/2})$.
\end{corollary}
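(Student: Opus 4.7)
The plan is to leverage the equality, not just approximation, of $C_{j,n}(\alpha)$ and $C^{\mathrm{oracle}}_{j,n}(\alpha)$ on the high-probability event of correct recovery, and then separately bound the length of the oracle interval via standard concentration under the high-dimensional design assumptions.

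Define $E_n=\{\widehat S_n=S_0\}$. On $E_n$, both intervals are built from the same refit on $I_{\mathrm{inf}}$: the same design submatrix $X_{S_0}$, the same OLS estimate $\widehat\beta_{S_0}$, the same residual variance $\widehat\sigma^2$, the same diagonal entry $v_{jj}$ of $(X_{S_0}^\top X_{S_0})^{-1}$, and the same $t_\nu$ quantile with $\nu=n_{\mathrm{inf}}-|S_0|$. Consequently $C_{j,n}(\alpha)=C^{\mathrm{oracle}}_{j,n}(\alpha)$ and the diameter ratio equals exactly $1$ on $E_n$. Since by hypothesis $\mathbb P_{\beta_0}(E_n)\to 1$, the ratio converges to $1$ in probability, which handles the first claim.

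For the length statement, it suffices to bound the diameter of $C^{\mathrm{oracle}}_{j,n}(\alpha)$, since on $E_n$ the two agree and $\mathbb P(E_n^c)\to 0$ makes the complement asymptotically negligible. The oracle diameter equals $2\,t_{1-\alpha/2,\nu}\,\widehat\sigma\sqrt{v_{jj}}$. I would control the three factors separately: (i) by Assumption~\textup{(HD2)}, the minimum eigenvalue of $X_{S_0}^\top X_{S_0}/n_{\mathrm{inf}}$ is bounded below by a positive constant with probability tending to one uniformly over $\mathcal B_{s_n}$, so $v_{jj}=O_P(n_{\mathrm{inf}}^{-1})$; (ii) $\widehat\sigma^2$ is computed from an $n_{\mathrm{inf}}$-sample OLS refit on a model containing $S_0$, so standard $\chi^2$-based arguments combined with $s_n/n_{\mathrm{inf}}\to 0$ give $\widehat\sigma\xrightarrow{P}\sigma$; (iii) since $\nu=n_{\mathrm{inf}}-|S_0|\to\infty$, $t_{1-\alpha/2,\nu}\to z_{1-\alpha/2}$. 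Combining these three observations yields $\operatorname{diam}(C^{\mathrm{oracle}}_{j,n}(\alpha))=O_P(n_{\mathrm{inf}}^{-1/2})$, and by the coincidence argument the same rate transfers to $C_{j,n}(\alpha)$.

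The main obstacle is not conceptual but lies in step (i): the high-probability lower bound on the minimum eigenvalue of $X_{S_0}^\top X_{S_0}/n_{\mathrm{inf}}$ must hold uniformly over $\beta_0\in\mathcal B_{s_n}$, and in particular over the random support $S_0$, which is where Assumption~\textup{(HD2)} (uniform well-conditioning on $O(s_n)$-sparse sets) is invoked. Once this uniform eigenvalue control is in place, steps (ii) and (iii) are routine, and the ratio claim reduces to the trivial equality on $E_n$ noted above.
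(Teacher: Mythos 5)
Your proof is correct and follows essentially the same strategy as the paper's: the two intervals coincide exactly on the event $\{\widehat S_n = S_0\}$, whose probability tends to one by assumption, and the rate then follows from standard linear-model estimates for the OLS $t$-interval on a low-dimensional refit. You merely unpack the paper's phrase ``by standard linear-model theory'' into the three explicit factors $t_{1-\alpha/2,\nu}$, $\widehat\sigma$, and $\sqrt{v_{jj}}$, invoking (HD2) for the eigenvalue bound on $v_{jj}$, which is exactly the mechanism the paper relies on.
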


\subsection{Orthogonalized and robust extension}\label{sec:orthogonalized}

We next consider inference on a low-dimensional target in the presence of high-dimensional nuisance components.
Our starting point is the orthogonal-score construction with cross-fitting, formulated in a way tailored to RSPIM.
We construct a Neyman-orthogonal score $\psi(W;\theta,\eta)$, estimate nuisance components via flexible machine learning or high-dimensional regression, and then form a studentized score whose distribution is asymptotically pivotal under suitable rate and regularity conditions.

Let $\psi(W;\theta,\eta)$ denote a Neyman-orthogonal score for $\theta$, where $W$ collects the observed variables. The orthogonal RSPIM construction proceeds by cross-fitting nuisance estimators, forming a studentized score statistic
\begin{equation}\label{eq:orth-score}
T_n(\theta) = \frac{\sqrt{n_{\mathrm{inf}}}\,\widehat U_n(\theta)}{\widehat\sigma(\theta)},
\end{equation}
where $\widehat U_n(\theta)$ is the cross-fitted score and $\widehat\sigma(\theta)$ is a consistent estimator of the standard deviation of $\sqrt{n_{\mathrm{inf}}}\widehat U_n(\theta)$ conditional on $\mathcal{G}$. The PIM validification recipe~\eqref{eq:plaus-contour} is then applied to $T_n(\theta)$ to obtain a consonant plausibility contour $\pi_n^{\mathrm{orth}}(\theta)$.

Under Assumptions~\textup{(O1)}–\textup{(O4)} (stated in Appendix~\ref{app:theory-proofs}), which encode Neyman orthogonality, nuisance convergence at rate $o_P(n_{\mathrm{inf}}^{-1/4})$, an asymptotic linearity and CLT, and a uniform bootstrap approximation, the following holds.

\begin{theorem}[Orthogonalized and robust RSPIM]
\label{thm:orth-RSPIM}
Suppose Assumptions {\rm(O1)}–{\rm(O4)} hold and the validification step is carried out with the
studentized orthogonal score $T_n(\theta)$ in~\eqref{eq:orth-score}.
Then the resulting
plausibility contour $\pi_n^{\mathrm{orth}}(\theta)$ satisfies
\[
\limsup_{n\to\infty}
\sup_{\theta_0} P_{\theta_0}\bigl\{ \pi_n^{\mathrm{orth}}(\theta_0) \le u \bigr\}
\;\le\; u, \qquad u \in [0,1],
\]
so the associated consonant sets deliver asymptotically strongly valid inference for~$\theta_0$.
If heteroskedastic or heavy-tailed errors are present but Assumption {\rm(O4)} holds for a wild
bootstrap, the same conclusion remains valid.
\end{theorem}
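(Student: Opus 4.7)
The plan is to reduce strong validity of $\pi_n^{\mathrm{orth}}$ to asymptotic uniformity on $[0,1]$ of the plug-in probability integral transform $U_{\theta_0}(Z) = \widehat F_{n,\theta_0}(T_n(\theta_0))$, and then invoke the asymptotic validification lemma recalled in Appendix~A, which states that a vanishing uniform Kolmogorov distance between $\widehat F_{n,\theta}$ and the pivot limit yields asymptotic strong validity of the contour defined via~\eqref{eq:plaus-contour}. Since $u \mapsto 1-|2u-1|$ is a measure-preserving map from $\Unif(0,1)$ to itself, it suffices to show $U_{\theta_0}(Z) \xrightarrow{d} \Unif(0,1)$ with convergence uniform in $\theta_0$.

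\textbf{Step 1: asymptotic pivotality.} First I would establish that $T_n(\theta_0) \xrightarrow{d} N(0,1)$ uniformly over $\theta_0$. Expanding the cross-fitted score around the true nuisance $\eta_0$, Neyman orthogonality~(O1) annihilates the first-order nuisance term, and the remaining second-order product remainder is controlled by $o_P(n_{\mathrm{inf}}^{-1/4}) \cdot o_P(n_{\mathrm{inf}}^{-1/4}) = o_P(n_{\mathrm{inf}}^{-1/2})$ via~(O2), so it is asymptotically negligible after $\sqrt{n_{\mathrm{inf}}}$-scaling. Cross-fitting across folds of $I_{\mathrm{inf}}$ ensures independence between the nuisance estimator and the evaluation sample within each fold, bypassing Donsker-type empirical-process conditions. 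Combined with the asymptotic linear expansion and CLT in~(O3) and consistency of $\widehat\sigma(\theta_0)$ for its population limit, Slutsky's lemma delivers the required uniform CLT for $T_n(\theta_0)$.

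\textbf{Step 2: bootstrap approximation and conclusion.} Assumption~(O4) provides $\sup_{t} |\widehat F_{n,\theta_0}(t) - F_*(t)| = o_P(1)$ uniformly in $\theta_0$, with $F_*$ equal to the standard normal CDF in the Gaussian case and to the corresponding wild-bootstrap target in the robust case. Therefore
\[
\bigl|U_{\theta_0}(Z) - F_*(T_n(\theta_0))\bigr|
\le \sup_{t}\bigl|\widehat F_{n,\theta_0}(t) - F_*(t)\bigr| = o_P(1),
\]
and Step~1 together with continuity of $F_*$ gives $F_*(T_n(\theta_0)) \xrightarrow{d} \Unif(0,1)$ by the probability integral transform, so Slutsky yields $U_{\theta_0}(Z) \xrightarrow{d} \Unif(0,1)$ uniformly in $\theta_0$. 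Transferring this convergence through the Lipschitz tent map $u \mapsto 1-|2u-1|$ yields
\[
\limsup_{n\to\infty}\sup_{\theta_0}P_{\theta_0}\bigl\{\pi_n^{\mathrm{orth}}(\theta_0)\le u\bigr\} \le u, \qquad u\in[0,1],
\]
which is the claimed strong validity. The wild-bootstrap extension runs the same argument with $\widehat F_{n,\theta}$ reinterpreted as the multiplier-bootstrap CDF, once~(O4) has been verified for that resampling scheme.

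\textbf{Main obstacle.} The chief difficulty is the uniformity in $\theta_0$ at every stage. A uniform CLT for $T_n(\theta_0)$ requires a Lindeberg-type triangular-array argument together with uniform control of the nuisance rate and of the variance normalizer $\widehat\sigma(\theta_0)/\sigma_*(\theta_0) \to_P 1$. The more delicate ingredient is~(O4) in the wild-bootstrap case: although stated as a high-level assumption, its verification typically reduces to a Gaussian multiplier CLT in the style of Chernozhukov--Chetverikov--Kato together with a plug-in bias argument, and must accommodate both heteroskedasticity (through appropriate residual rescaling) and the dependence of the bootstrap weights on the cross-fitted residuals. All remaining steps are routine applications of Slutsky, the continuous mapping theorem, and the Appendix~A validification recipe.
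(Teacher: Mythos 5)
Your strategy is close in spirit to the paper's, but the paper's route is cleaner and you mischaracterize Assumption~(O4) in a way that matters.

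The paper applies Proposition~\ref{prop:approx-pivot} \emph{directly}: it writes $U_{n,\theta_0}(Z) = \widehat F_{n,\theta_0}(T_n(\theta_0))$ with $\widehat F_{n,\theta_0}$ the bootstrap cdf and $F_{n,\theta_0}$ the \emph{true finite-sample conditional} cdf of $T_n$, uses the exact PIT $F_{n,\theta_0}(T_n(\theta_0)) \sim \Unif(0,1)$ conditionally on $\mathcal{G}$, and controls the deviation of $U_{n,\theta_0}$ from uniformity by the Kolmogorov distance $\Delta_n(\theta_0) = \sup_t |F_{n,\theta_0}(t) - \widehat F_{n,\theta_0}(t)|$, which is exactly what (O4) makes vanish uniformly. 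This gives a finite-$n$ inequality $P_{\theta_0}\{\pi_n^{\mathrm{orth}}(\theta_0)\le u\mid\mathcal{G}\}\le u+2\sup_{\theta_0}\Delta_n(\theta_0)$, after which one takes $\limsup$ and integrates out $\mathcal{G}$. No uniform CLT and no normal limit $\Phi$ are needed at any stage.

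Your Step~2 instead introduces the normal cdf $F_*=\Phi$ as an intermediary, asserting that ``(O4) provides $\sup_t|\widehat F_{n,\theta_0}(t)-F_*(t)|=o_P(1)$.'' That is not what (O4) says: (O4) compares the bootstrap cdf to the \emph{true} conditional cdf $F_{n,\theta_0}$ of $T_n$, not to the limit $\Phi$. To get your version you would need to combine (O4) with a \emph{uniform} CLT for $T_n(\theta_0)$ over $\theta_0$, which (O3) states only pointwise, and which you yourself flag as the ``chief difficulty.'' Your route therefore opens exactly the gap the paper's argument is designed to avoid. Related imprecisions: you invoke the ``probability integral transform'' to conclude $\Phi(T_n(\theta_0))\to\Unif(0,1)$, when the PIT gives exact uniformity only with the true cdf $F_{n,\theta_0}$ (for $\Phi$ you need the continuous mapping theorem applied to the weak limit), and you appeal to ``Slutsky'' for a uniform-in-$\theta_0$ convergence statement where Slutsky's lemma does not apply. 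You also never condition on $\mathcal{G}$ nor integrate it out, a step the paper tracks explicitly. The fix is simply to drop the detour through $\Phi$, take $F_* = F_{n,\theta_0}$, use (O4) as stated, and invoke Proposition~\ref{prop:approx-pivot}; then Step~1 becomes unnecessary for the theorem itself (it is only needed when verifying (O4) in specific models).
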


\paragraph{Example: partially linear regression.}
Consider i.i.d.\ observations $Z_i = (Y_i,D_i,X_i)$ from the partially linear model
\[
Y = D\,\theta_0 + g_0(X) + \varepsilon,\qquad \mathbb{E}(\varepsilon\mid D,X)=0,
\]
where $D$ is a scalar treatment, $X\in\mathbb{R}^{p_n}$ is high-dimensional, and $\theta_0$ is the scalar target of interest.
We use the standard Neyman-orthogonal score
\[
\psi(Z;\theta,\eta)
= \{D-m(X)\}\{Y-g(X)-\theta\{D-m(X)\}\},\qquad \eta = (g,m),
\]
where $m_0(x) = \mathbb{E}(D\mid X=x)$ and $\eta_0 = (g_0,m_0)$. Under Conditions~\textup{(PL1)}–\textup{(PL3)} (stated in Appendix~\ref{app:theory-proofs}), which impose moment bounds, approximate sparsity with $s_n^2(\log p_n)^2/n_{\mathrm{inf}}\to 0$, and restricted eigenvalue conditions, we obtain the following.

\begin{corollary}[Orthogonal RSPIM in the partially linear model]\label{cor:pl-orth}
Suppose the data follow the partially linear model above and that
\textup{(PL1)}–\textup{(PL3)} hold.
Construct the orthogonal, cross-fitted
score and its wild-bootstrap calibration as described in
Section~\ref{sec:orthogonalized}.
Then the resulting plausibility contour
$\pi_{n,\mathrm{orth}}(\theta)$ is asymptotically strongly valid for $\theta_0$:
\[
  \limsup_{n\to\infty}\;\sup_{\theta_0}
  \mathbb{P}_{\theta_0}\bigl\{\pi_{n,\mathrm{orth}}(\theta_0)\le u\bigr\} \le u,
  \qquad u\in[0,1].
\]
\end{corollary}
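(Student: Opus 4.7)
The plan is to prove Corollary~\ref{cor:pl-orth} by verifying that the partially linear setup together with conditions (PL1)–(PL3) implies the generic Assumptions~(O1)–(O4), after which the conclusion follows immediately from Theorem~\ref{thm:orth-RSPIM}. The moving parts are: Neyman orthogonality of the score $\psi$, cross-fitted nuisance rates, asymptotic linearity with a Gaussian CLT, and a uniform wild-bootstrap approximation to the studentized score distribution. I will take each in turn and then combine them.

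First I would verify (O1) by computing the pathwise Gateaux derivative of $\theta\mapsto \E\,\psi(Z;\theta_0,\eta_0+t(\eta-\eta_0))$ in the nuisance direction $(\eta-\eta_0)$ at $t=0$. Using $\E(\varepsilon\mid D,X)=0$ and $\E(D-m_0(X)\mid X)=0$, the two partial derivatives with respect to $g$ and $m$ vanish, which is the standard orthogonality check for the partially linear score. Next I would verify (O2): fitting $\hat g$ and $\hat m$ by lasso (or an equivalent penalized estimator) on $I_{\mathrm{sel}}$ and applying them on $I_{\mathrm{inf}}$ via cross-fitting, condition (PL3)'s restricted-eigenvalue assumption gives the standard oracle bound $\|\hat g-g_0\|_{P,2} + \|\hat m-m_0\|_{P,2} = O_P(\sqrt{s_n\log p_n/n_{\mathrm{inf}}})$. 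The product rate $\|\hat g-g_0\|_{P,2}\,\|\hat m-m_0\|_{P,2}=O_P(s_n\log p_n/n_{\mathrm{inf}})$ is $o_P(n_{\mathrm{inf}}^{-1/2})$ precisely because (PL2) imposes $s_n^2(\log p_n)^2/n_{\mathrm{inf}}\to 0$, and each individual rate is $o_P(n_{\mathrm{inf}}^{-1/4})$, which is exactly what (O2) requires.

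For (O3), I would expand $\widehat U_n(\theta_0)$ around the oracle score $\psi(Z;\theta_0,\eta_0)$. Cross-fitting decouples $(\hat g,\hat m)$ from the evaluation fold, so the usual DML decomposition applies: the first-order bias vanishes by (O1), the second-order remainder is dominated by the product rate from (O2), and the empirical-process term is $o_P(1)$ under the moment bounds in (PL1). The leading term $n_{\mathrm{inf}}^{-1/2}\sum_{i\in I_{\mathrm{inf}}}\psi(Z_i;\theta_0,\eta_0)$ has mean zero and finite variance $V_0 = \E[(D-m_0(X))^2\varepsilon^2]$ by (PL1), so Lindeberg's CLT gives $\sqrt{n_{\mathrm{inf}}}\widehat U_n(\theta_0)\Rightarrow N(0,V_0)$. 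A standard moment-based argument then shows $\hat\sigma(\theta_0)^2\xrightarrow{P}V_0$, yielding $T_n(\theta_0)\Rightarrow N(0,1)$ and hence asymptotic linearity as needed.

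The main obstacle is (O4): a uniform Kolmogorov-distance approximation of the distribution of $T_n(\theta_0)$ by a wild-bootstrap analogue. I would construct multiplier weights $\xi_i$ (e.g.\ Rademacher or Mammen) on $I_{\mathrm{inf}}$ and form $T_n^*(\theta_0) = n_{\mathrm{inf}}^{-1/2}\hat\sigma(\theta_0)^{-1}\sum_{i\in I_{\mathrm{inf}}}\xi_i\,\hat\psi_i(\theta_0)$, where $\hat\psi_i(\theta_0)$ plugs in the cross-fitted residuals. Conditional on the data, this is a sum of independent mean-zero variables; a Berry–Esseen-type bound plus the consistency of $\hat\sigma$ gives $\sup_t|P^*\{T_n^*\le t\} - \Phi(t)| = o_P(1)$, which combined with the CLT in (O3) yields the uniform Kolmogorov approximation required by (O4). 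The delicate points here are (a) handling the randomness of the plug-in residuals without reintroducing nuisance bias, which again leverages cross-fitting and the product rate in (O2), and (b) uniformity over $\theta_0$ in the supremum, which is obtained by showing the bootstrap distribution is continuous and varies smoothly with $\theta_0$ under (PL1). Once (O1)–(O4) are in hand, Theorem~\ref{thm:orth-RSPIM} directly delivers $\limsup_n\sup_{\theta_0}P_{\theta_0}\{\pi_{n,\mathrm{orth}}(\theta_0)\le u\}\le u$, completing the proof.
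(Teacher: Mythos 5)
Your proposal is correct and follows essentially the same route as the paper: verify Assumptions (O1)--(O4) under (PL1)--(PL3) via Neyman orthogonality, lasso $L_2$-rates with the product-rate bound, asymptotic linearity plus a Lindeberg CLT, and a conditional wild-bootstrap Berry--Esseen argument, then invoke Theorem~\ref{thm:orth-RSPIM}; this is precisely the content of Lemma~\ref{lem:pl-conditions} and its auxiliary lemmas in the appendix. One minor slip: you attribute the restricted-eigenvalue condition to (PL3) and the rate $s_n^2(\log p_n)^2/n_{\mathrm{inf}}\to 0$ to (PL2), whereas in the paper these appear in (PL2) and (PL3) respectively---the substance is right, only the labels are swapped.
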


\subsection{Maxitive multi-split aggregation}\label{sec:multi}

Finally, we analyze the maxitive multi-split aggregation introduced in Section~\ref{subsec:multisplit}. Let
$C^{(r)}_{j,n}(\alpha)$ denote the consonant interval obtained from split $r$ at level $1-\alpha$,
and let $\pi_n^{(r)}$ be the corresponding plausibility contour for $\beta_j$. Define the maxitive
aggregate contour and union-of-splits interval by
\[
\pi^{\max}_{n,j}(\beta_j) = \max_{1\le r\le R} \pi^{(r)}_{n,j}(\beta_j),\qquad
C^{\max}_{j,n}(\alpha)
= \{\beta_j:\,\pi^{\max}_{n,j}(\beta_j)>\alpha\}
= \bigcup_{r=1}^R C^{(r)}_{j,n}(\alpha).
\]

\begin{proposition}[Maxitive aggregation preserves strong validity]\label{prop:maxitive}
If each split-wise plausibility contour $\pi_n^{(r)}$ is strongly valid conditional on its split-specific
$\sigma$-field $\mathcal{G}^{(r)}$, then for every coordinate $j$, every level $1-\alpha\in(0,1)$, and
every $u\in[0,1]$,
\[
\sup_{\theta_0}\mathbb{P}_{\theta_0}\{\pi^{\max}_{n,j}(\theta_0)\le u\} \le u
\quad\text{and}\quad
\mathbb{P}_{\theta_0}\{\beta_{0j}\in C^{\max}_{j,n}(\alpha)\} \ge 1-\alpha.
\]
In particular, $C^{\max}_{j,n}(\alpha)$ is the $(1-\alpha)$ upper-level set of a strongly valid contour.
\end{proposition}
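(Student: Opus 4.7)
The proof will exploit the observation that taking a maximum of plausibility contours can only make the strong-validity inequality easier to satisfy: if each $\pi_n^{(r)}(\theta_0)$ stochastically dominates $\mathrm{Unif}(0,1)$ (from below), then so does $\max_r \pi_n^{(r)}(\theta_0)$. No independence or joint distributional information about the splits is required; the argument is purely order-theoretic.

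The first step is to rewrite the event of interest as an intersection:
\[
\{\pi^{\max}_{n,j}(\theta_0)\le u\} \;=\; \bigcap_{r=1}^R \{\pi_n^{(r)}(\theta_0)\le u\}.
\]
In particular, this event is contained in $\{\pi_n^{(1)}(\theta_0)\le u\}$. The second step is to pass from the conditional strong validity on $\mathcal{G}^{(1)}$ to an unconditional statement by the tower property: from
\[
P_{\theta_0}\{\pi_n^{(1)}(\theta_0)\le u \mid \mathcal{G}^{(1)}\} \le u \quad\text{a.s.},
\]
integrating gives $P_{\theta_0}\{\pi_n^{(1)}(\theta_0)\le u\}\le u$. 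Combining with the containment from step one,
\[
P_{\theta_0}\{\pi^{\max}_{n,j}(\theta_0)\le u\} \le P_{\theta_0}\{\pi_n^{(1)}(\theta_0)\le u\} \le u,
\]
and taking the supremum over $\theta_0$ yields the strong-validity claim for $\pi^{\max}_{n,j}$.

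For the coverage statement I will use the equivalence
\[
\{\beta_{0j}\notin C^{\max}_{j,n}(\alpha)\} = \{\pi^{\max}_{n,j}(\beta_{0j})\le \alpha\},
\]
which follows directly from the definition of $C^{\max}_{j,n}(\alpha)$ as the $\alpha$-upper-level set of $\pi^{\max}_{n,j}$. Setting $u=\alpha$ in the strong-validity bound gives $P_{\theta_0}\{\beta_{0j}\notin C^{\max}_{j,n}(\alpha)\}\le \alpha$, i.e., coverage at least $1-\alpha$. The union representation $C^{\max}_{j,n}(\alpha)=\bigcup_{r=1}^R C^{(r)}_{j,n}(\alpha)$ falls out of the same equivalence between max-contour superlevel sets and unions of individual superlevel sets.

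There is no real obstacle here: the monotonicity of the max does all the work, and the only mild subtlety is making explicit that conditional strong validity on each $\mathcal{G}^{(r)}$ implies unconditional strong validity for that split, so the bound can be applied marginally without knowing the joint law across splits. I will flag this explicitly and note that the argument would fail for a \emph{minimum} (or average) aggregation, which is precisely why maxitivity is the right rule for preserving validity while losing only efficiency, in accordance with the role assigned to $C^{\max}_{j,n}(\alpha)$ in Section~\ref{subsec:multisplit}.
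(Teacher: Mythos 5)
Your proof is correct and takes essentially the same approach as the paper: write $\{\pi^{\max}_{n,j}(\theta_0)\le u\}$ as the intersection $\bigcap_r \{\pi_n^{(r)}(\theta_0)\le u\}$, bound the probability of that intersection by a single term, apply conditional strong validity plus the tower property to get $\le u$ unconditionally, and take $u=\alpha$ for the coverage claim. The only cosmetic difference is that the paper bounds the intersection by $\min_r P_{\theta_0}\{\pi^{(r)}_{n,j}(\theta_0)\le u\}$ whereas you just drop to the $r=1$ factor, which is an equally valid shortcut; your explicit articulation of the tower-property step, and the closing remark that a minimum or average aggregation would fail, are a welcome clarification of what the paper leaves implicit.
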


For completeness, we also record the intersection-of-splits interval
$C^{\cap}_{j,n}(\alpha) = \bigcap_{r=1}^R C^{(r)}_{j,n}(\alpha)$,
which can be viewed as an aggressive efficiency summary and may undercover. We report
$C^{\cap}_{j,n}(\alpha)$ only as a diagnostic in the simulation study; it is not used as a primary
inferential device.

\section{Computation}\label{sec:computation}

\subsection{Single-split implementation}

All refitting and validification steps are performed on the inference subsample $I_{\mathrm{inf}}$. On the selected support $S = \widehat S$ we compute the least-squares estimate
\[
\widehat\beta_S = (X_S^\top X_S)^{-1} X_S^\top Y,
\qquad
\widehat\sigma^2 
= \frac{1}{n_{\mathrm{inf}} - |S|}\bigl\|Y - X_S \widehat\beta_S\bigr\|_2^2,
\]
where $X_S = X_{I_{\mathrm{inf}},S}$ and $Y = Y_{I_{\mathrm{inf}}}$. Writing $V_S = (X_S^\top X_S)^{-1}$ and $v_{jj}$ for its $(j,j)$ entry, the basic RSPIM construction uses closed-form $t/F$ quantiles to obtain consonant plausibility sets and their coordinate-wise projections. For likelihood-ratio-based contours, we also report the Wilks-style approximation
\[
\pi_z(\theta) \approx 1 - F_{|S|}\{-2\log R(z,\theta)\},
\]
where $R(z,\theta)$ is the profile likelihood ratio and $F_{|S|}$ is the $\chi^2_{|S|}$ distribution function.

\subsection{Multi-split aggregation}

Independent splits are trivially parallelizable. For each split we cache $(X_S^\top X_S)^{-1}$ and store the per-split intervals, which are then combined by union or intersection according to the maxitive aggregation rule in Section~\ref{sec:multi}. The cost of this aggregation step is negligible relative to selection and refitting; in practice, computation is dominated by the initial lasso or other high-dimensional selector.

\subsection{Carving and bootstrap extensions}

If optional data carving is used, we allow part of the selection sample to be reused for refitting by enlarging $I_{\mathrm{inf}}$. This breaks the exact independence between selection and inference, so the finite-sample pivotal arguments from Section~\ref{sec:finite-sample} no longer apply. We therefore label carved variants as approximate and treat their guarantees as asymptotic.

For heteroskedastic or heavy-tailed errors, we replace the exact $t/F$ pivot by a wild-bootstrap approximation on $I_{\mathrm{inf}}$: residuals are multiplied by i.i.d.\ multipliers (e.g., Mammen's two-point or Rademacher weights), and the corresponding $t$- or likelihood-ratio statistics are recomputed. The validification step then uses the empirical bootstrap distribution in place of the parametric reference distribution; only the quantile entering the contour construction changes. The orthogonalized extension in Section~\ref{sec:orthogonalized} is implemented in the same way, with the score-based statistic substituted for the refitted $t/F$ pivot.

\subsection{Calibration and default settings}

The theoretical guarantees in Section~\ref{sec:theory} are stated for the \emph{raw} RSPIM intervals, with no rescaling of the pivotal statistics; equivalently, they correspond to the case $c=1$ in the notation of Section~\ref{sec:experiments}. In the simulation study, when comparing RSPIM to other post-selection procedures, we sometimes introduce a scalar ``shrink factor'' $c>0$ that rescales the per-coordinate pivot or its critical value, and choose $c$ so that the empirical conditional coverage across selected coordinates is close to the target level $1-\alpha$. This calibration is applied symmetrically to all methods and is purely an engineering device for reporting efficiency at equal conditional coverage. The finite-sample and asymptotic strong-validity guarantees in Section~\ref{sec:theory} always refer to the uncalibrated RSPIM with $c=1$ and remain valid regardless of whether such empirical tuning is used in a given application.

\section{Experiments}\label{sec:experiments}

We organize the empirical study by first assessing \emph{calibration}, then comparing \emph{efficiency} at equal conditional coverage, and finally probing \emph{robustness} to non-Gaussian noise. Calibration is evaluated using three standard diagnostics: (i) histograms and QQ plots of null plausibility/$p$-values, (ii) coverage-versus-nominal level curves for plausibility intervals, and (iii) a ``false-confidence'' check that records the frequency with which plausibility exceeds $1-\alpha$ on selected null coordinates; precise definitions are given in Appendix~\ref{app:calibration-diagnostics}. Unless otherwise noted, the default selector is lasso with stability selection; de-biased lasso and exact selective inference are used as baselines. We consider high-dimensional linear models with Gaussian features, varying correlation $\rho\in\{0,0.5\}$, sparsity $s\in\{5,10\}$, sample sizes $n\in\{100,200\}$, and dimensions $p\in\{500,2000\}$; robustness is assessed under heteroskedastic and heavy-tailed noise. Detailed experimental designs, calibration diagnostics, and efficiency comparison methodology are summarized in Appendix~\ref{app:experiments}.

\subsection{Module A: Baseline calibration under no post-selection}
Under Gaussian noise without post-selection, the empirical null plausibility (equivalently, $p$-values under our pivot) closely matches $\mathrm{Unif}(0,1)$, with QQ plots showing only minor conservative deviations. This supports the finite-sample calibration of our pivot prior to any selection effects.
\begin{figure}[htbp]
  \centering
  \includegraphics[width=.95\linewidth]{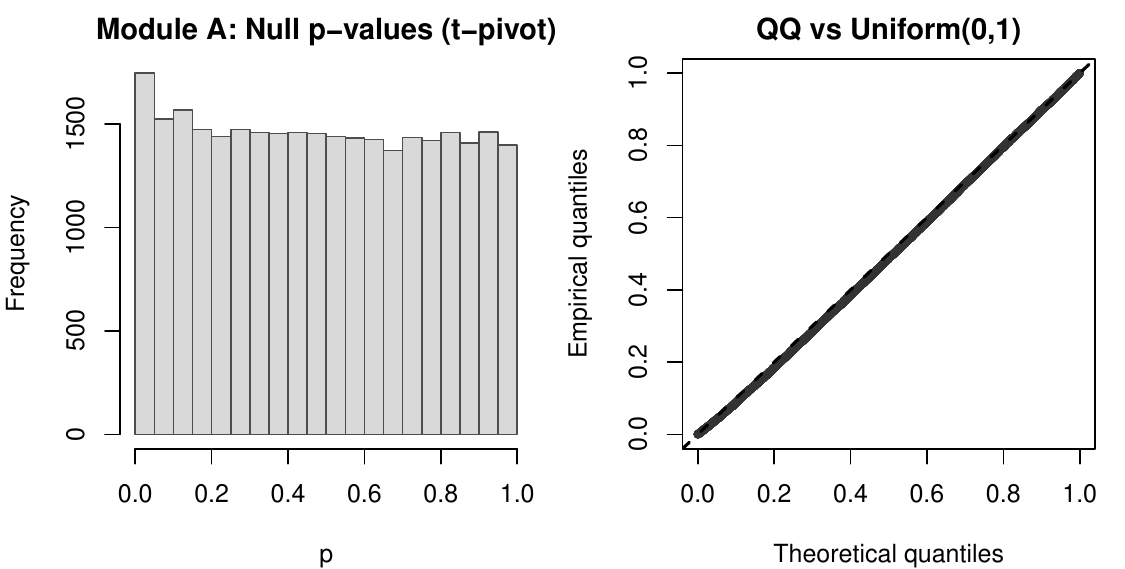}
  \caption{Module A: Null plausibility/$p$-value diagnostics under the Gaussian design. Histograms and QQ-plots against $\mathrm{Unif}(0,1)$ demonstrate strong validity; slight conservatism in the extreme tails is benign for IM/PIM.}
  \label{fig:A-pvals}
\end{figure}

\subsection{Module B: Gaussian high-dimensional setting}
Figure~\ref{fig:B-contour} displays representative plausibility contours for a selected coefficient under single-split and union-of-splits constructions. The single-split contour is sharply peaked around the refitted estimate, whereas the maxitive union produces a flatter contour that still assigns substantial plausibility in a neighborhood of the true coefficient. This provides a genuinely possibilistic diagnostic: maxitive aggregation trades local sharpness for ``strong but diffuse'' evidence, revealing when repeated splits lead to diffuse post-selection uncertainty. At equal conditional coverage ($1-\alpha=0.90$), the single-split achieves shorter intervals with comparable power (Figures~\ref{fig:B-calib}--\ref{fig:B-milpower}), making it our default for high-dimensional inference.

\begin{figure}[htbp]
  \centering
  \includegraphics[width=.7\linewidth]{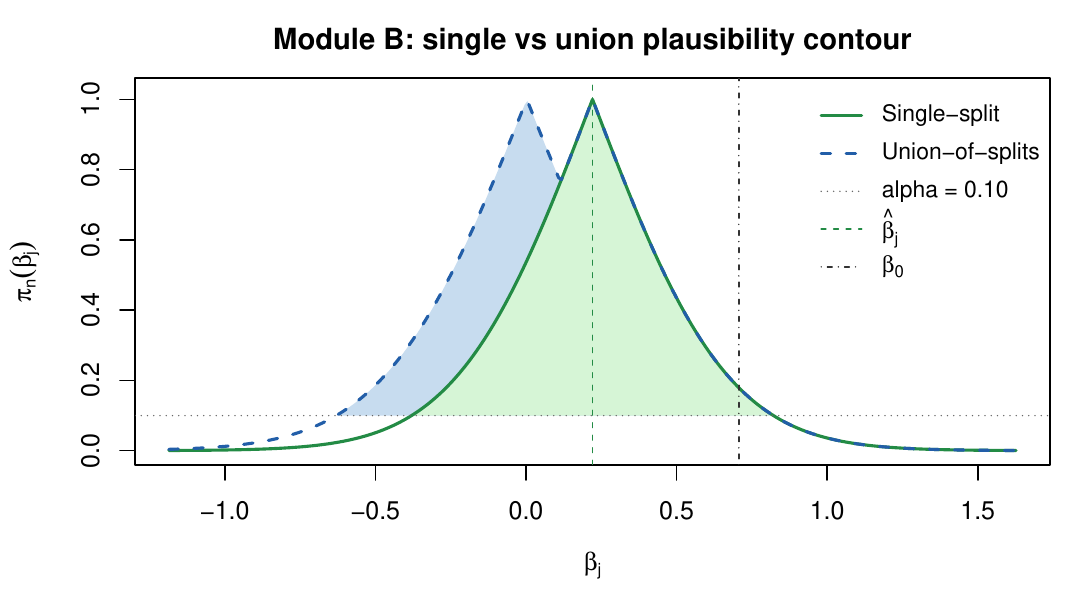}
  \caption{Module B: single-split versus union-of-splits plausibility contour for a representative selected coefficient in the high-dimensional Gaussian design. The single-split contour is more sharply peaked around the refitted estimate, while the maxitive union yields a flatter but still high-plausibility region that continues to cover the true coefficient. Design: $n=100$, $p=500$, $s=10$, $\rho=0.5$, $\mathrm{SNR}=5$.}
  \label{fig:B-contour}
\end{figure}

\begin{figure}[htbp]
  \centering
  \includegraphics[width=.95\linewidth]{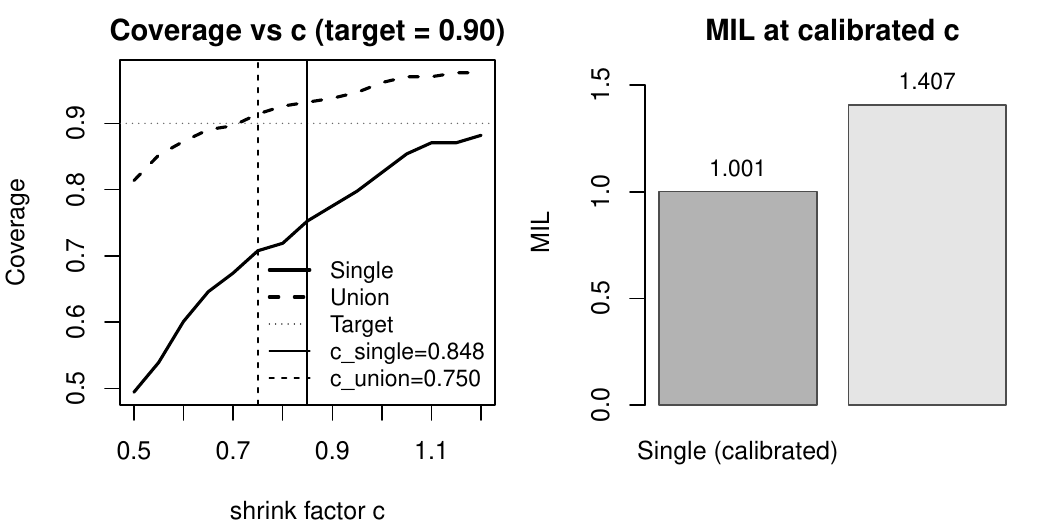}\\[0.5em]
  \includegraphics[width=.95\linewidth]{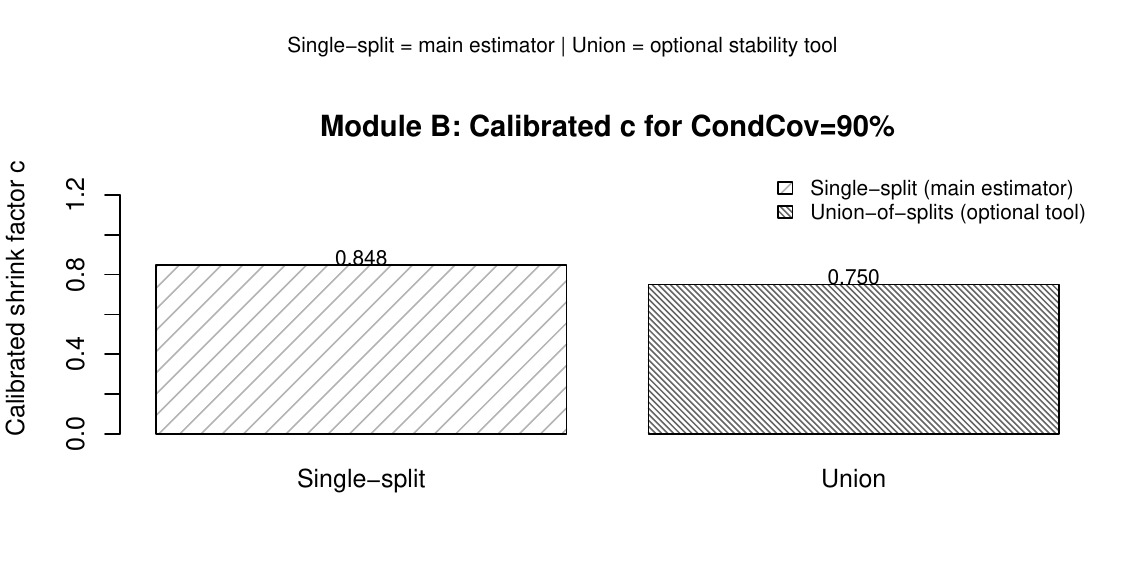}
  \caption{Module B: (Top) empirical coverage of $(1-\alpha)$ plausibility intervals versus nominal level, for a grid of calibration factors $c$, under the high-dimensional Gaussian design with stability-selected lasso. (Bottom) selected calibration factors $c_m^\star$ at target conditional coverage $1-\alpha=0.90$ across methods. Shaded ribbons (when present) indicate Monte Carlo variability. The union-of-splits aggregator is typically slightly more conservative than the single-split procedure.}
  \label{fig:B-calib}
\end{figure}

\begin{figure}[htbp]
  \centering
  \includegraphics[width=.95\linewidth]{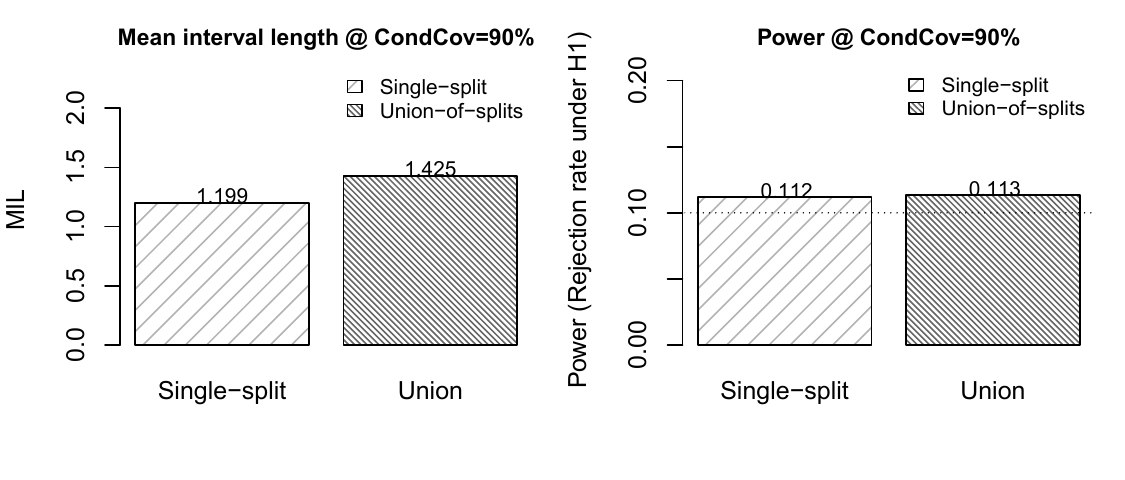}
  \caption{Module B: efficiency at equal conditional coverage ($1-\alpha=0.90$). For each method we compare median interval length (MIL) and empirical power, conditional on selection, using the calibrated $c_m^\star$ from Figure~\ref{fig:B-calib}. The single-split RSPIM intervals are uniformly shorter than, or comparable to, those from the union-of-splits aggregator, with essentially no loss in power.}
  \label{fig:B-milpower}
\end{figure}

\subsection{Module C: Robustness to heteroskedasticity and heavy tails}
We assess robustness under heteroskedastic and heavy-tailed noise using wild-bootstrap $t$ pivots. Figure~\ref{fig:C-pvals} shows null plausibility/$p$-values remain near-uniform with slight conservatism, confirming robust calibration. After calibration to $1-\alpha=0.90$, the single-split maintains conditional coverage with modest increases in interval length (Figures~\ref{fig:C-calib}--\ref{fig:C-milpower}). The union-of-splits may produce no effective intervals in stress configurations and thus serves as a conservative stability check rather than the primary estimator.
\begin{figure}[htbp]
  \centering
  \includegraphics[width=.95\linewidth]{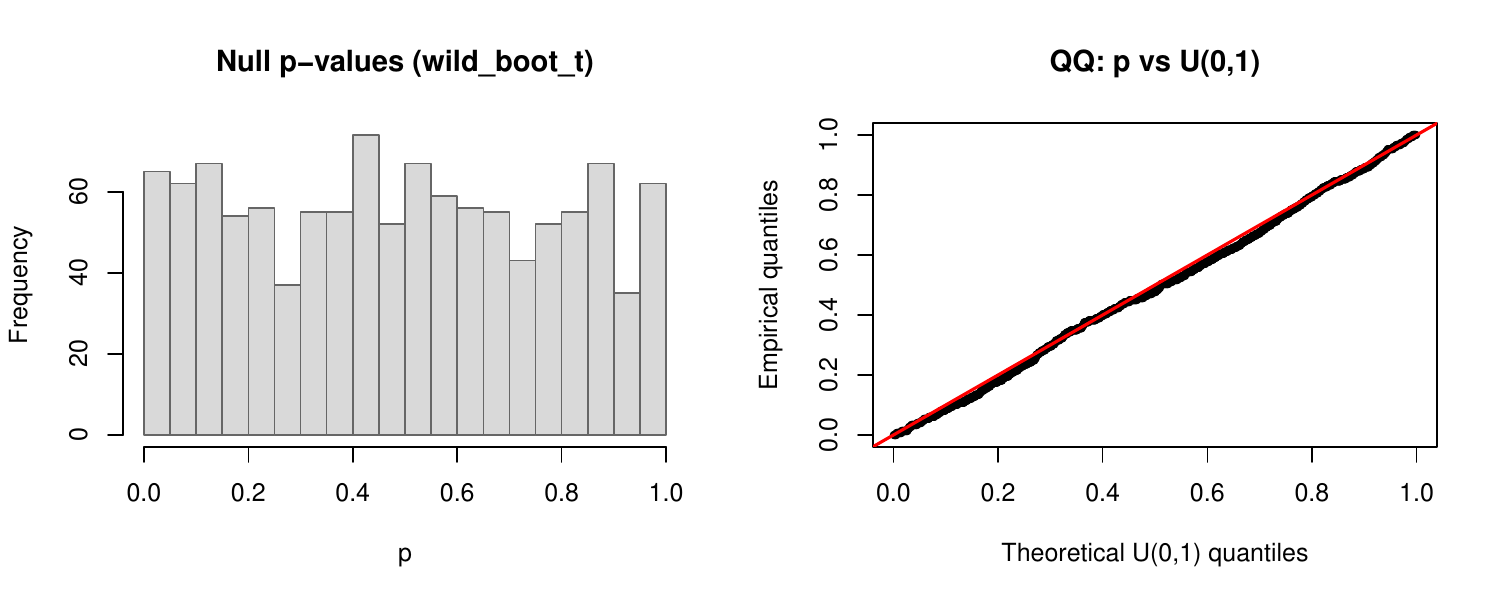}
  \caption{Module C (wild bootstrap): Null plausibility/$p$-value diagnostics under heteroskedastic noise. Near-uniform histograms and QQ plots track $\mathrm{Unif}(0,1)$ with slight conservatism, supporting robust calibration.}
  \label{fig:C-pvals}
\end{figure}

\begin{figure}[htbp]
  \centering
  \includegraphics[width=.95\linewidth]{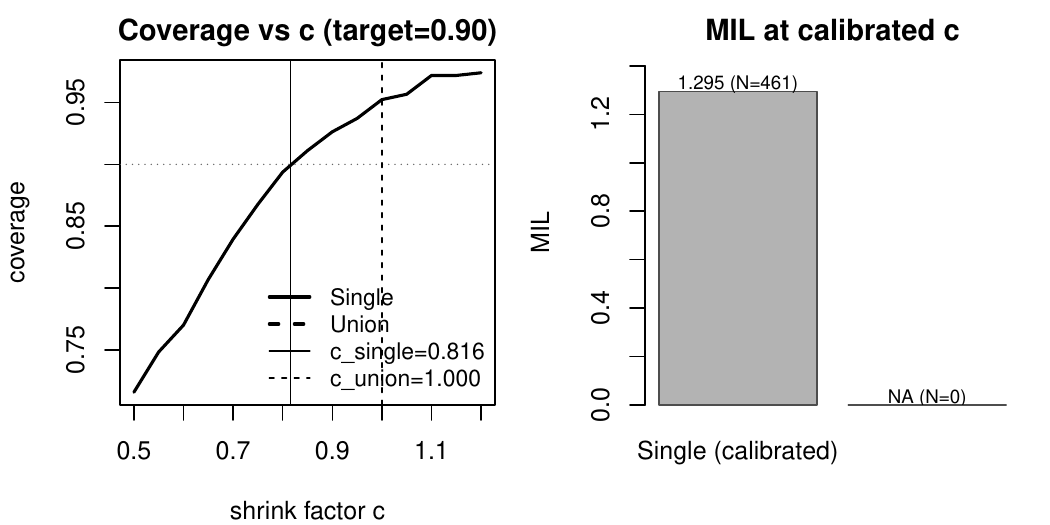}
  \caption{Module C (wild bootstrap): Coverage-versus-$c$ under robust noise. Calibrated pivots achieve nominal coverage across levels; the union-of-splits curve is more conservative.}
  \label{fig:C-calib}
\end{figure}

\begin{figure}[htbp]
  \centering
  \includegraphics[width=.95\linewidth]{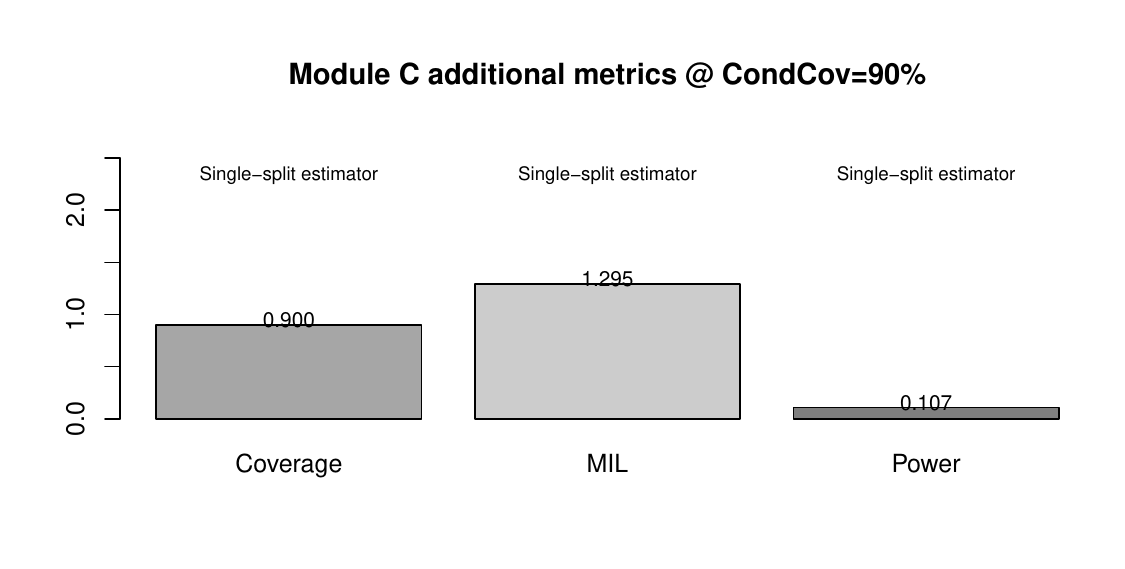}
  \caption{Module C (wild bootstrap): Efficiency at equal conditional coverage (calibrated $c$). Left: median interval length (MIL). Right: null rejection rate at nominal~$\alpha$. Union-of-splits results are omitted when $N=0$.}
  \label{fig:C-milpower}
\end{figure}

\subsection{Module D: Orthogonalized inference vs.\ de-biased lasso}
We compare orthogonalized, cross-fitted wild-bootstrap pivots with de-biased lasso. Figure~\ref{fig:D-null} confirms strong validity after calibration. At equal conditional coverage ($1-\alpha=0.90$), orthogonalized pivots typically require $c\approx 1$ and exhibit stable calibration under heteroskedastic/heavy-tailed noise, while de-biased lasso achieves the target after larger rescaling, yielding more aggressive intervals with increased sensitivity (Figures~\ref{fig:D-calib}--\ref{fig:D-milpower}, Table~\ref{tab:D_condcov}).

\begin{figure}[htbp]
  \centering
  \includegraphics[width=.95\linewidth]{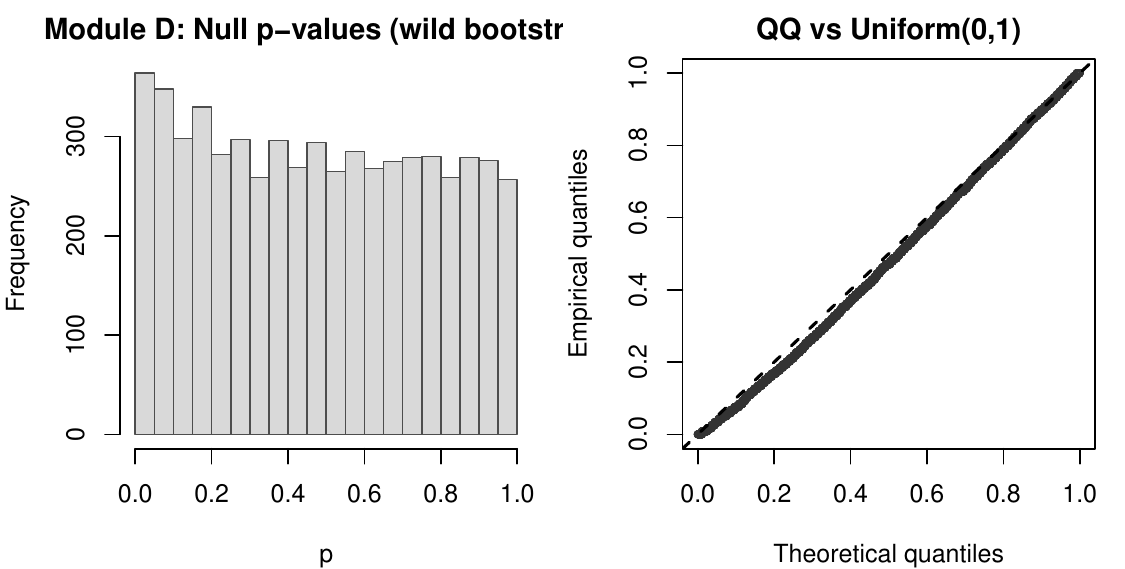}
  \caption{Module D: Null plausibility/$p$-value diagnostics for the orthogonalized, cross-fitted construction. Histograms and QQ-plots against $\mathrm{Unif}(0,1)$ confirm strong validity after calibration.}
  \label{fig:D-null}
\end{figure}

\begin{figure}[htbp]
  \centering
  \includegraphics[width=.95\linewidth]{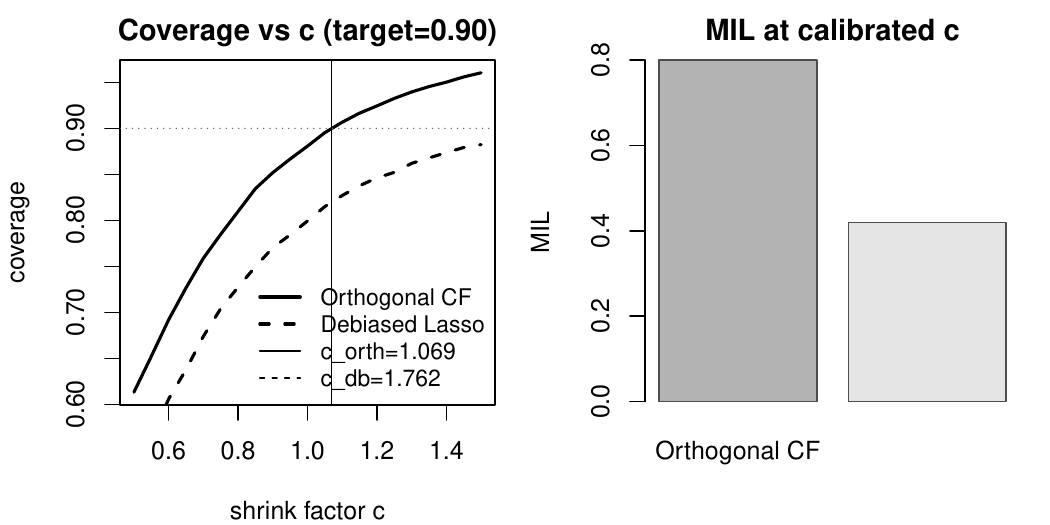}
  \caption{Module D: Coverage-versus-$c$ for orthogonal, cross-fitted pivots versus de-biased lasso. After calibration, both attain nominal coverage; orthogonal pivots typically operate with $c\approx 1$.}
  \label{fig:D-calib}
\end{figure}

\begin{figure}[htbp]
  \centering
  \includegraphics[width=.95\linewidth]{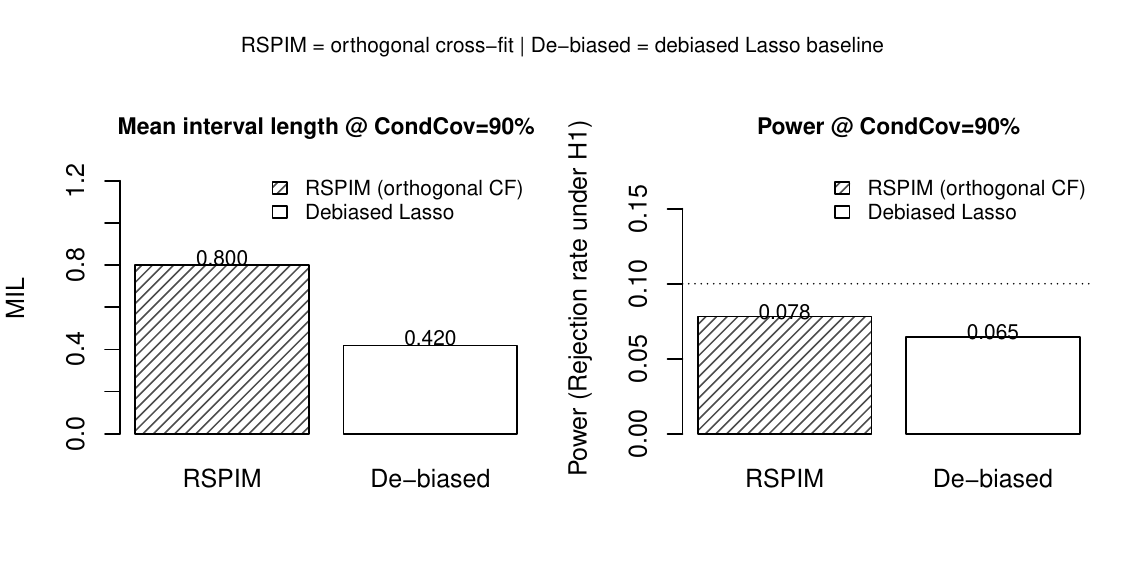}
  \caption{Module D: Efficiency at equal conditional coverage (calibrated $c$). Left: median interval length (MIL), relative to de-biased lasso. Right: null rejection at nominal~$\alpha$ (reference line shown).}
  \label{fig:D-milpower}
\end{figure}

\begin{table}[htbp]
\centering
\begin{tabular}{lcccc}
\hline
Method & Coverage & MIL & Power & $c$ \\ \hline
Orthogonal CF & 0.9010 & 0.800 & 0.0781 & 1.069 \\
Debiased Lasso & 0.9008 & 0.420 & 0.0646 & 1.762 \\ \hline
\end{tabular}
\caption{Equal-conditional-coverage comparison (Module D). Intervals are formed from cross-fitted orthogonal scores (ours) and from debiased Lasso; both are calibrated to the same conditional coverage before comparing MIL/Power.}
\label{tab:D_condcov}
\end{table}

\subsection{Module E: Comparison with polyhedral exact selective inference}
We benchmark RSPIM against polyhedral selective inference \citep{Lee2016} in a challenging regime: $(n,p,\rho)=(100,200,0.9)$ with weak signals ($\beta_{0j}\approx 0.4$). Figure~\ref{fig:E-poly-stress} shows that polyhedral intervals achieve shorter medians but exhibit heavy-tailed distributions with a non-negligible fraction of infinite-length intervals (red asterisks) due to near-singular conditioning. RSPIM pays a fixed efficiency tax from sample splitting but delivers uniformly stable, finite uncertainty quantification, providing a robust fail-safe alternative when exact conditional inference becomes numerically unstable.

\begin{figure}[htbp]
  \centering
  \includegraphics[width=.95\linewidth]{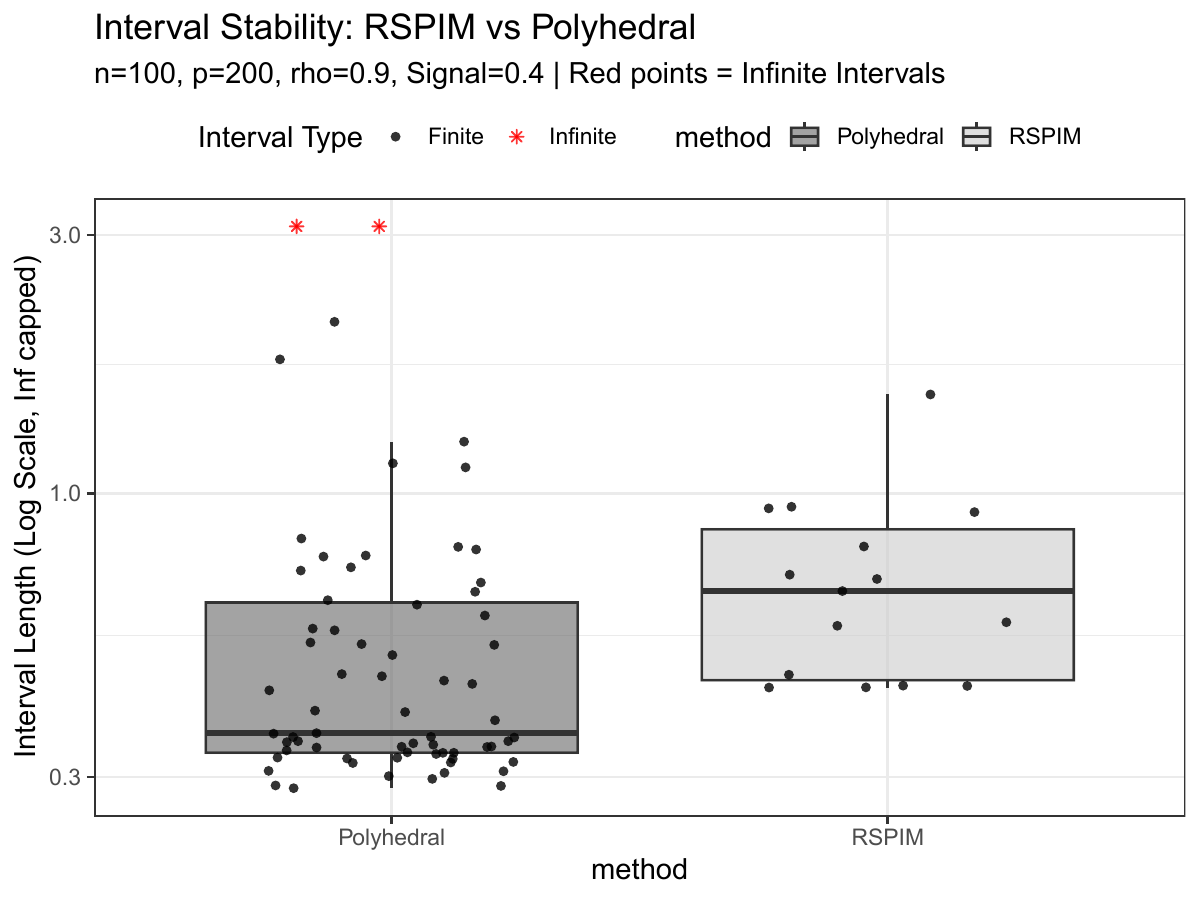}
  \caption{Module E: Interval stability under stress $(n=100,p=200,\rho=0.9)$ with weak
  signals of size $0.4$. Boxplots show the log interval lengths for the polyhedral exact
  selective-inference method and for RSPIM; red asterisks mark infinite-length intervals
  produced by the polyhedral method. While polyhedral intervals can be shorter in median,
  they exhibit catastrophic instability and high variance, whereas RSPIM yields uniformly
  finite, stable intervals.}
  \label{fig:E-poly-stress}
\end{figure}

\subsection{Real-data illustration: riboflavin gene expression}\label{sec:real-data}
We illustrate RSPIM on the riboflavin data ($n=71$, $p=4088$) from the \texttt{hdi} package. Using lasso stability selection with $R=50$ random splits, the three most frequently selected genes are LYSC\_at, XLYA\_at, and YOAB\_at (selection frequencies $0.19$--$0.30$). Table~\ref{tab:riboflavin-intervals} compares union-of-splits RSPIM intervals ($1-\alpha=0.90$, uncalibrated $c=1$) with de-biased lasso intervals. RSPIM intervals are wider and typically contain zero, reflecting deliberate conservatism after maxitive aggregation, while de-biased lasso intervals exclude zero for all three genes. Figure~\ref{fig:real-plaus} shows the plausibility contour for LYSC\_at is relatively flat around zero, illustrating that RSPIM regards both near-zero and moderately negative effects as highly plausible, highlighting post-selection uncertainty at this sample size.

\begin{table}[htbp]
  \centering
  \caption{Riboflavin data: union-of-splits RSPIM and de-biased lasso $90\%$ intervals for the
  three most frequently selected genes (interval endpoints rounded to two decimal places).}
  \label{tab:riboflavin-intervals}

  \resizebox{\linewidth}{!}{%
    \begin{tabular}{lccc}
      \toprule
      Gene & Selection frequency & RSPIM union-of-splits $90\%$ interval & De-biased lasso $90\%$ interval \\
      \midrule
      LYSC\_at & $0.30$ & $[-0.81,\; 0.07]$ & $[-0.39,\; -0.06]$ \\
      XLYA\_at & $0.30$ & $[0.03,\; 0.91]$  & $[0.17,\; 0.49]$ \\
      YOAB\_at & $0.19$ & $[-0.78,\; 0.10]$ & $[-0.44,\; -0.10]$ \\
      \bottomrule
    \end{tabular}%
  }
\end{table}

\begin{figure}[htbp]
  \centering
  \includegraphics[width=.7\linewidth]{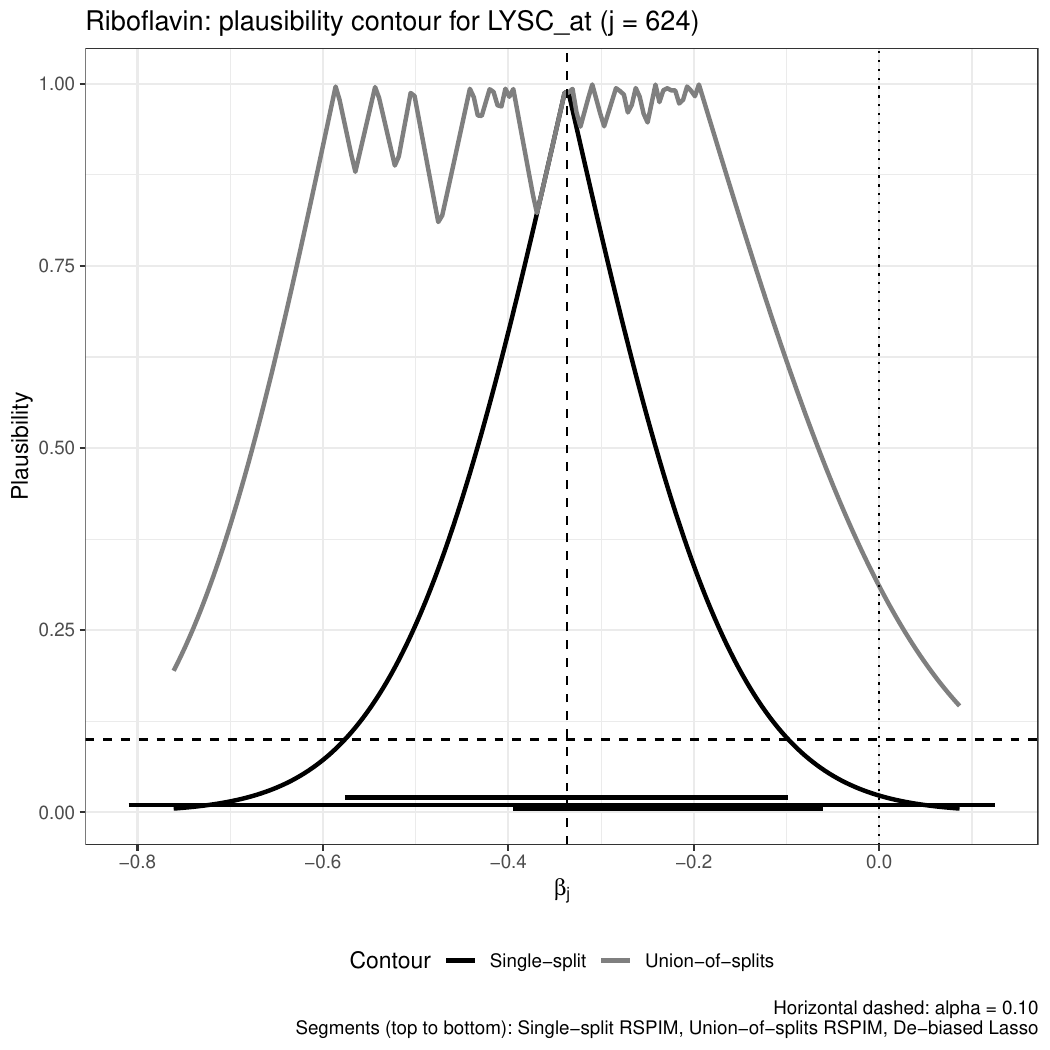}
  \caption{Real-data illustration (riboflavin). Plausibility contour for the coefficient of gene
  LYSC\_at under the union-of-splits RSPIM construction (solid line), together with the corresponding
  $90\%$ RSPIM interval (horizontal bar) and the $90\%$ de-biased lasso interval (dashed bar).}
  \label{fig:real-plaus}
\end{figure}

\subsection{Summary of empirical findings}
The empirical study supports three main messages: (i) validified pivots are well calibrated or mildly conservative under both Gaussian and non-Gaussian noise, with null plausibility/$p$-values near $\mathrm{Unif}(0,1)$ and coverage tracking nominal levels; (ii) at equal conditional coverage, single-split RSPIM intervals are competitive with state-of-the-art methods (de-biased lasso, polyhedral selective inference), while maxitive aggregation yields a conservative but stable union-of-splits summary; and (iii) plausibility contours provide transparent diagnostics of how strongly and stably the data support post-selection effect sizes, revealing when apparently significant signals rest on fragile selection events. Additional results with alternative selectors and detailed numerical comparisons are provided in Appendix~\ref{app:experiments}.

\section{Discussion}\label{sec:discussion}

\subsection{Sample splitting and efficiency}
Sample splitting inevitably trades some efficiency for selector-agnostic finite-sample guarantees. In our construction this trade-off is controlled and transparent: the refitted model on the inference subsample is low-dimensional and admits exact $t/F$ pivots, while the selector is treated as a black box on the selection subsample. The resulting intervals are typically longer than those of procedures that reuse the full sample without splitting, but they remain numerically stable and plausibility-calibrated across designs, including highly correlated or weak-signal regimes where conditioning-based methods can fail.

\subsection{Novelty and scope}
Methodologically, our main contribution is to organize a standard split-and-refit scheme within the possibilistic IM/PIM framework. This leads to selector-agnostic plausibility contours that inherit classical linear-model structure but are equipped with strong validity guarantees and visual diagnostics. Conceptually, RSPIM shows how high-dimensional regularization can be combined with IM/PIM ideas in a principled way: regularization is used solely for screening on one subsample, while validification is carried out on an independent subsample where the model is effectively low-dimensional.

\subsection{Orthogonalized and robust extensions}
For problems with low-dimensional targets and high-dimensional nuisance components, we augment the basic RSPIM template with Neyman-orthogonal scores and cross-fitting. Under high-level orthogonality and rate conditions, these extensions yield asymptotically strongly valid plausibility contours even under heteroskedastic or heavy-tailed errors. The partially linear example illustrates how these ideas can be implemented in semiparametric models with high-dimensional controls, and how wild-bootstrap validification can be combined with orthogonalization to preserve robustness while retaining the interpretation of plausibility contours as diagnostic tools.

\subsection{Choice of default selector}
In the empirical study we take lasso combined with stability selection as the default selector, balancing power and screening reliability in typical sparse designs. The theoretical development, however, is deliberately modular: the finite-sample guarantees in Gaussian linear models hold uniformly over all selectors that use only the selection data, and the high-dimensional results apply to any selector satisfying the screening property (HD4). Additional selectors, including knockoffs and forward stepwise, are considered in the supplement to illustrate this modularity.

\subsection{Limitations and future directions}
The strongest finite-sample guarantees are obtained for post-selection coordinates in Gaussian linear models with homoskedastic errors. Extensions to full-model coordinates and to non-Gaussian or heteroskedastic settings rely on asymptotic arguments or bootstrap approximations and are therefore conservative rather than exact. Maxitive intersection aggregation can undercover and is used only as an efficiency diagnostic, while multi-split procedures incur additional computational cost in exchange for stability information. It would be of interest to investigate adaptive splitting schemes, refinements under weaker design assumptions, and extensions beyond linear models, for example to generalized linear, time-to-event, or nonparametric models.

\subsection{Concluding remarks}
RSPIM provides a modular possibilistic framework for high-dimensional post-selection inference that separates regularization from validification through sample splitting. The resulting plausibility contours retain familiar $t/F$-based interval forms while supplying strong validity guarantees and graphical diagnostics of post-selection uncertainty. In designs where Gaussian assumptions hold, finite-sample calibration is exact; in more general settings, orthogonal and bootstrap-based versions offer asymptotically valid and typically conservative uncertainty quantification. We view this as a step toward integrating IM/PIM methodology with contemporary high-dimensional practice in a way that is both theoretically interpretable and practically implementable.

\bibliographystyle{agsm}
\bibliography{bibliography/references}

\appendix
\section{Appendix: Proofs and technical details}\label{app:appendix}

\subsection{Detailed statements for possibilistic validification}
\label{app:method-details}

This subsection provides the detailed technical statements of the lemmas and propositions that underpin the possibilistic validification framework in Section~2.3. We first recall the random-set formulation of inferential and possibilistic models in a simple parametric setting \citep[see, e.g.,][]{martinliu2013,martinliu2014}.

Let $\Theta$ denote the parameter space and let $Z$ denote the observed data.
An inferential random set is a random subset $\mathcal{S}(Z)$ of $\Theta$, defined on an auxiliary
probability space. Given $\mathcal{S}(Z)$, we define a belief function and a plausibility function
on assertions $A \subseteq \Theta$ by
\[
\Bel_Z(A) = P\{\mathcal{S}(Z) \subseteq A \mid Z\},
\qquad
\Pl_Z(A) = P\{\mathcal{S}(Z) \cap A \neq \varnothing \mid Z\},
\]
where the probability is with respect to the auxiliary randomness under the observed $Z$. Intuitively,
$\Bel_Z(A)$ measures how strongly the random set supports $A$, while $\Pl_Z(A)$ measures how
compatible $A$ is with the data and the random set.

For singleton assertions $\{\theta\}$, the plausibility function reduces to the \emph{plausibility contour}
\[
\pi_Z(\theta) = \Pl_Z(\{\theta\}) \in [0,1], \qquad \theta \in \Theta.
\]
We say that an inferential model is \emph{consonant} if the random sets $\mathcal{S}(Z)$ are nested,
in the sense that for any two realizations $S_1,S_2$ of $\mathcal{S}(Z)$ we have either $S_1 \subseteq S_2$
or $S_2 \subseteq S_1$. In this case the plausibility function has the representation
\[
\Pl_Z(A) = \sup_{\theta \in A} \pi_Z(\theta), \qquad A \subseteq \Theta,
\]
so that $\pi_Z$ plays the role of a possibility distribution and $\Pl_Z$ is a possibility measure in the
usual sense. Throughout the paper we restrict attention to such consonant possibilistic inferential
models (PIMs), with $\pi_Z$ as the primary object.

The following standard result establishes that the probability integral transform produces a uniform distribution.

\begin{lemma}[Probability integral transform]
\label{lem:PIT}
Suppose that, for each $\theta \in \Theta$, the pivot $T(Z,\theta)$ has a continuous cumulative
distribution function $F_\theta$ under $P_\theta$. Then
\[
U_\theta(Z) := F_\theta\{T(Z,\theta)\} \sim \Unif(0,1)
\quad \text{under $P_\theta$}.
\]
\end{lemma}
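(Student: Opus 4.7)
The plan is to compute the distribution function of $U := U_\theta(Z) = F_\theta\{T(Z,\theta)\}$ directly under $P_\theta$ and verify that it coincides with the identity on $[0,1]$. Throughout I would fix $\theta \in \Theta$, suppress the subscript for readability, and write $F = F_\theta$ and $T = T(Z,\theta)$, so that $T \sim F$ under $P_\theta$ by hypothesis. Since $F$ takes values in $[0,1]$, the random variable $U$ is automatically $[0,1]$-valued, so $P_\theta(U \le u) = 0$ for $u < 0$ and $P_\theta(U \le u) = 1$ for $u \ge 1$; the substantive content is to show $P_\theta(U \le u) = u$ for every $u \in (0,1)$.

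For this middle range I would introduce the generalized inverse $F^{-1}(u) = \inf\{t : F(t) \ge u\}$ and record the key consequence of continuity: $F\{F^{-1}(u)\} = u$ for all $u \in (0,1)$. Splitting on whether $T$ lies at or beyond $F^{-1}(u)$ gives the decomposition
\[
P_\theta\{F(T) \le u\} = P_\theta\bigl\{T \le F^{-1}(u)\bigr\} + P_\theta\bigl\{T > F^{-1}(u),\, F(T) \le u\bigr\},
\]
where the first term equals $F\{F^{-1}(u)\} = u$ directly from the definition of a CDF.

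The only subtle point — and the main (mild) obstacle in an otherwise routine argument — is controlling the second term, which would vanish trivially if $F$ were strictly increasing but can in principle be nonzero when $F$ has flat pieces. I would handle it by monotonicity: on $\{T > F^{-1}(u)\}$ we have $F(T) \ge u$, so $F(T) \le u$ forces $F(T) = u$, meaning $T$ lies in some interval on which $F$ is constant at height $u$. Any such flat interval $[a,b]$ satisfies $P_\theta(T \in [a,b]) = F(b) - F(a^-) = 0$ because $F$ is continuous and constant on $[a,b]$; summing (or taking a supremum) over such flat pieces shows the second term is zero. Combining the two pieces gives $P_\theta(U \le u) = u$ on $(0,1)$, which together with the endpoint cases proves $U \sim \Unif(0,1)$ under $P_\theta$.
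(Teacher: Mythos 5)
Your proposal is correct and follows the same basic route as the paper --- introduce the generalized inverse $F^{-1}$, use continuity to get $F\{F^{-1}(u)\}=u$, and read off $P_\theta\{T\le F^{-1}(u)\}=u$. In fact you handle a point the paper's proof passes over silently: the paper simply equates $P_\theta\{F(T)\le u\}$ with $P_\theta\{T\le F^{-1}(u)\}$, whereas these events can differ on the set where $F$ is flat at height $u$; your decomposition and the observation that any such flat interval has $P_\theta$-measure zero (by continuity) is exactly the justification needed to make that step airtight.
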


The next result shows how an arbitrary contour can be turned into a consonant inferential model.

\begin{lemma}[Contours and consonant IMs]
\label{lem:contour-consonant}
Let $\pi_Z : \Theta \to [0,1]$ be a measurable function and let $U \sim \Unif(0,1)$ be independent of~$Z$.
Define the random set
\[
\mathcal{S}(Z,U) := \{\theta \in \Theta : \pi_Z(\theta) \ge U\}.
\]
Then:
\begin{enumerate}
\item For fixed $Z$, the sets $\mathcal{S}(Z,u)$ are nested in $u$, so the corresponding inferential
model is consonant.
\item For any measurable $A \subseteq \Theta$,
\(
\Pl_Z(A) = P\{\mathcal{S}(Z,U) \cap A \neq \varnothing \mid Z\}
 = \sup_{\theta \in A} \pi_Z(\theta).
\)
\end{enumerate}
\end{lemma}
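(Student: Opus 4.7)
The plan is to dispose of part~1 by elementary set-theoretic considerations and reduce part~2 to a single probability integral transform calculation for the auxiliary uniform $U$.

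For part~1, I would simply observe that for any $u_1 \le u_2$ we have $\mathcal{S}(Z,u_1) \supseteq \mathcal{S}(Z,u_2)$, since both are upper level sets of the single fixed function $\theta \mapsto \pi_Z(\theta)$. The family $\{\mathcal{S}(Z,u): u\in[0,1]\}$ is therefore totally ordered by inclusion, which is exactly the consonance property recalled immediately above Lemma~\ref{lem:contour-consonant}. No probabilistic content is needed for this part.

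For part~2, I would first rewrite the hitting event in supremum form. Set $s = s(Z) := \sup_{\theta \in A} \pi_Z(\theta)$, a function of $Z$ alone. By the definition of $\mathcal{S}(Z,U)$, the event $\{\mathcal{S}(Z,U) \cap A \neq \varnothing\}$ coincides with $\{\exists\, \theta \in A : \pi_Z(\theta) \ge U\}$, and a short case analysis yields the sandwich $\{U < s\} \subseteq \{\mathcal{S}(Z,U) \cap A \neq \varnothing\} \subseteq \{U \le s\}$: if $U<s$ the definition of supremum produces a witness $\theta \in A$, while if $U>s$ every $\theta\in A$ satisfies $\pi_Z(\theta)\le s<U$. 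Conditioning on $Z$ and using that $U \sim \Unif(0,1)$ is independent of $Z$, both endpoints of the sandwich have conditional probability $s$, because the continuous distribution of $U$ assigns mass zero to the singleton $\{s(Z)\}$. Hence $\Pl_Z(A) = P\{\mathcal{S}(Z,U) \cap A \neq \varnothing \mid Z\} = s = \sup_{\theta \in A} \pi_Z(\theta)$, and specializing to $A = \{\theta\}$ recovers the contour $\pi_Z(\theta)$, confirming consistency with the definition.

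The main obstacle I anticipate is not probabilistic but measure-theoretic: one must ensure that the hitting event is jointly measurable in $(Z,U)$ and that $Z \mapsto s(Z)$ is measurable as a function of $Z$ alone, so the conditional-probability manipulation above makes sense. When $\Theta$ is Polish and $(Z,\theta) \mapsto \pi_Z(\theta)$ is jointly measurable, measurable projection supplies both properties. In the applications of interest $\Theta$ is a separable subset of $\mathbb{R}^k$ and $\pi_Z(\cdot)$ is at least upper semicontinuous, so the supremum can be taken over a countable dense subset and everything reduces to countable unions and intersections of measurable events. I would record one such regularity hypothesis at the outset of the proof so that the sandwich and the conditional expectation identity are unambiguous, and then the calculation above completes the argument.
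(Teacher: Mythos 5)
Your part~1 argument coincides with the paper's. For part~2, the paper's proof attempts to establish an exact identity of events, $\{\mathcal{S}(Z,U)\cap A \neq \varnothing\} = \{U \le m_A(Z)\}$ with $m_A(Z) = \sup_{\theta\in A}\pi_Z(\theta)$, and then reads off the conditional probability as $m_A(Z)$ directly. Your proof instead establishes only the one-sided sandwich $\{U < s\} \subseteq \{\mathcal{S}(Z,U)\cap A \neq \varnothing\} \subseteq \{U \le s\}$ and then uses continuity of $U$ to see that both bracketing events have conditional probability $s$. This is not merely a stylistic difference: the paper's claimed exact event identity is in fact false on the probability-zero set where $U = m_A(Z)$ and the supremum is not attained in $A$ (its argument, extracting $\theta_k$ with $\pi_Z(\theta_k)>U$ from a sequence increasing to the supremum, breaks down precisely when $U$ equals that supremum), so your sandwich argument is the cleaner way to say what the paper intends. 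You also explicitly surface the measurability requirements — joint measurability of $(Z,\theta)\mapsto\pi_Z(\theta)$ and measurability of $Z\mapsto\sup_{\theta\in A}\pi_Z(\theta)$ — which the paper's proof leaves implicit. Overall your approach is the same decomposition but executed more carefully, and both the sandwich and the measurability remark are genuine improvements worth keeping.
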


We now present the main results on validification of plausibility contours constructed from pivots.

\begin{proposition}[Exact pivot validification]
\label{prop:pivot-validity}
Suppose that, for each $\theta \in \Theta$, the pivot $T(Z,\theta)$ has a continuous cumulative
distribution function $F_\theta$ under $P_\theta$. Define
\begin{equation}
\label{eq:plaus-contour-app}
U_\theta(Z) = F_\theta\{T(Z,\theta)\}, \qquad
\pi_Z(\theta) = 1 - \bigl| 2 U_\theta(Z) - 1 \bigr|.
\end{equation}

Then:
\begin{enumerate}
\item The function $\pi_Z$ is a plausibility contour and, by \Cref{lem:contour-consonant}, the random sets
$\mathcal{S}(Z,U) = \{\theta : \pi_Z(\theta) \ge U\}$ yield a consonant IM.
\item For any $\theta_0 \in \Theta$,
\(
\pi_Z(\theta_0) \sim \Unif(0,1)
\)
under $P_{\theta_0}$. In particular, the strong-validity property holds
with equality.
\end{enumerate}
\end{proposition}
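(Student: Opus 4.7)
The plan is to verify the two claims separately, with the bulk of the work concentrated in part (ii), where everything reduces to showing that the tent-map transform preserves uniformity. Part (i) is essentially a bookkeeping check on the construction, while part (ii) is a short computation that leverages \Cref{lem:PIT}.

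For claim (i), I would observe that by construction $U_\theta(Z) \in [0,1]$ whenever $F_\theta$ is a valid distribution function, hence $|2U_\theta(Z)-1|\in[0,1]$ and so $\pi_Z(\theta)\in[0,1]$. The map $\theta\mapsto\pi_Z(\theta)$ is measurable as a composition of measurable functions. This is exactly the input required by \Cref{lem:contour-consonant}: introducing an independent auxiliary variable $U\sim\Unif(0,1)$ and defining $\mathcal{S}(Z,U)=\{\theta:\pi_Z(\theta)\ge U\}$ produces nested random sets, so the induced IM is consonant and its plausibility function has the supremum representation $\Pl_Z(A)=\sup_{\theta\in A}\pi_Z(\theta)$.

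For claim (ii), I would first apply \Cref{lem:PIT} to conclude that under $P_{\theta_0}$ the pivot transform $U_{\theta_0}(Z)=F_{\theta_0}\{T(Z,\theta_0)\}$ is $\Unif(0,1)$. The remaining and key step is to show that the tent map $g(u)=1-|2u-1|$, which sends $[0,1/2]$ to $[0,1]$ linearly with slope $2$ and $[1/2,1]$ to $[0,1]$ linearly with slope $-2$, pushes $\Unif(0,1)$ forward to $\Unif(0,1)$. I would do this by direct computation: for $v\in[0,1]$, using $U\sim\Unif(0,1)$,
\[
P\{g(U)\le v\}
= P\{|2U-1|\ge 1-v\}
= P\{U\le v/2\} + P\{U\ge 1-v/2\}
= v/2 + v/2 = v,
\]
which yields $\pi_Z(\theta_0)\sim\Unif(0,1)$. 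Strong validity follows immediately: for every $\theta_0$ and every $u\in[0,1]$, $P_{\theta_0}\{\pi_Z(\theta_0)\le u\}=u$, and taking the supremum over $\theta_0$ preserves the equality, giving the strong-validity bound with equality rather than the generic inequality.

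The only potential obstacle is handling pathological boundary behavior of $F_\theta$, for instance atoms at the pivot's extremes or ties in the tent-map preimage. The continuity assumption on $F_\theta$ rules out atoms and makes the PIT step exact, while the tent-map preimage computation above is valid for all $v\in[0,1]$, including the endpoints, since the two half-intervals $\{U\le v/2\}$ and $\{U\ge 1-v/2\}$ are disjoint for $v<1$ and meet only on a null set when $v=1$. No further analytic machinery is required.
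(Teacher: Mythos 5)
Your proposal is correct and follows essentially the same route as the paper: part (i) is the same direct invocation of \Cref{lem:contour-consonant}, and part (ii) applies \Cref{lem:PIT} and then computes the cdf of the tent-mapped uniform. The only cosmetic difference is that you compute $P\{g(U)\le v\}$ in one step as a sum of two disjoint tail probabilities, whereas the paper first establishes $V=|2U-1|\sim\Unif(0,1)$ and then transforms $\pi=1-V$; both computations are equivalent.
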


\noindent
The proof is a short application of the probability integral
transform: under $P_{\theta_0}$, $U_{\theta_0}(Z)$ is $\Unif(0,1)$, and the transformation
$u \mapsto 1 - |2u-1|$ preserves the uniform distribution.

For later use with bootstrap pivots we also state an approximate version.

\begin{proposition}[Approximate pivot validification]
\label{prop:approx-pivot}
For each sample size $n$ and parameter $\theta\in\Theta$, let $T_n(Z,\theta)$ be a scalar statistic with
true cumulative distribution function $F_{n,\theta}$ under $P_\theta$. Let $\widehat F_{n,\theta}$ be
a (possibly data-dependent) reference distribution function and define
\[
U_{n,\theta}(Z) = \widehat F_{n,\theta}\{T_n(Z,\theta)\},
\qquad
\pi_{n,Z}(\theta) = 1 - \bigl| 2 U_{n,\theta}(Z) - 1 \bigr|.
\]
Suppose that the approximation $\widehat F_{n,\theta}$ is uniformly consistent in Kolmogorov
distance, in the sense that
\[
\sup_{\theta_0 \in \Theta} \Delta_n(\theta_0)
:= \sup_{\theta_0 \in \Theta} \sup_{t \in \mathbb{R}}
  \bigl| F_{n,\theta_0}(t) - \widehat F_{n,\theta_0}(t) \bigr|
\to 0
\quad \text{as $n \to \infty$}.
\]
Then, for any $u \in [0,1]$,
\[
\sup_{\theta_0 \in \Theta} P_{\theta_0}\bigl\{ \pi_{n,Z}(\theta_0) \le u \bigr\}
\;\le\;
u + 2 \sup_{\theta_0 \in \Theta} \Delta_n(\theta_0),
\]
so that
\[
\limsup_{n\to\infty}
\sup_{\theta_0 \in \Theta}
P_{\theta_0}\bigl\{ \pi_{n,Z}(\theta_0) \le u \bigr\}
\;\le\; u, \qquad u \in [0,1].
\]
\end{proposition}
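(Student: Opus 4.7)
The plan is to reduce the contour bound to a pair of one-sided tail events for the transformed pivot, couple the approximate PIT variable $U_{n,\theta_0}(Z)$ to the exact one via the Kolmogorov distance, and then apply the standard probability integral transform to the latter. First I would record the deterministic identity
\[
\{\pi_{n,Z}(\theta_0)\le u\}
= \{U_{n,\theta_0}(Z)\le u/2\}\cup\{U_{n,\theta_0}(Z)\ge 1-u/2\},
\]
which is immediate from $\pi_{n,Z}(\theta_0)=1-|2U_{n,\theta_0}(Z)-1|$: the contour falls below $u$ precisely when $U_{n,\theta_0}(Z)$ lies within $u/2$ of either endpoint of $[0,1]$.

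Next I would introduce the exact-PIT variable $V_{n,\theta_0}(Z):=F_{n,\theta_0}\{T_n(Z,\theta_0)\}$ and note the sample-path bound
\[
|U_{n,\theta_0}(Z)-V_{n,\theta_0}(Z)|
=\bigl|\widehat F_{n,\theta_0}\{T_n(Z,\theta_0)\}-F_{n,\theta_0}\{T_n(Z,\theta_0)\}\bigr|
\le \Delta_n(\theta_0),
\]
which holds for every realization of the data by definition of Kolmogorov distance. This implies the event inclusions
\[
\{U_{n,\theta_0}\le u/2\}\subseteq\{V_{n,\theta_0}\le u/2+\Delta_n(\theta_0)\},\qquad
\{U_{n,\theta_0}\ge 1-u/2\}\subseteq\{V_{n,\theta_0}\ge 1-u/2-\Delta_n(\theta_0)\}.
\]
By \Cref{lem:PIT}, under $P_{\theta_0}$ the variable $V_{n,\theta_0}(Z)$ is $\Unif(0,1)$, so each of the two enlarged tail events has probability at most $u/2+\Delta_n(\theta_0)$, and a union bound yields $P_{\theta_0}\{\pi_{n,Z}(\theta_0)\le u\}\le u+2\Delta_n(\theta_0)$. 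Taking the supremum over $\theta_0$ on both sides and using that the coupling bound is controlled uniformly by $\sup_{\theta_0}\Delta_n(\theta_0)$ gives the stated non-asymptotic inequality; the $\limsup$ claim is then immediate from $\sup_{\theta_0}\Delta_n(\theta_0)\to 0$.

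The argument is largely bookkeeping once the coupling step is in place, so the ``hard part'' is conceptual rather than technical: one must recognize that uniform control in Kolmogorov distance on $\widehat F_{n,\theta_0}$ lifts pointwise to the transformed variable $U_{n,\theta_0}(Z)$, which is the mechanism that transfers the loss $\Delta_n(\theta_0)$ symmetrically to both boundaries of the unit interval. A minor subtlety is the tacit use of continuity of $F_{n,\theta_0}$, inherited from \Cref{prop:pivot-validity}, so that \Cref{lem:PIT} yields exact uniformity (and in particular $P_{\theta_0}(V_{n,\theta_0}\ge 1-t)=t$); if this were weakened to stochastic dominance only, the upper-tail step would need an extra continuity correction, but the resulting slack is again of order $\Delta_n(\theta_0)$ and can be absorbed into the same bound without any change to the final inequality.
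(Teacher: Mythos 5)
Your proof is correct and rests on the same key idea as the paper's proof — coupling $U_{n,\theta_0} = \widehat F_{n,\theta_0}(T_n)$ to the exact probability-integral-transform variable $F_{n,\theta_0}(T_n)$ via the Kolmogorov distance $\Delta_n(\theta_0)$ and then invoking uniformity of the latter — but your bookkeeping is leaner: you decompose $\{\pi_{n,Z}(\theta_0)\le u\}$ directly into the two (disjoint for $u<1$) tail events $\{U_{n,\theta_0}\le u/2\}$ and $\{U_{n,\theta_0}\ge 1-u/2\}$, push each through the pointwise coupling bound, and finish with a union bound, whereas the paper first proves a full two-sided Kolmogorov bound on the cdf of $U_{n,\theta_0}$, transfers it to $|2U_{n,\theta_0}-1|$, and only then converts to the contour. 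Your flag that continuity of $F_{n,\theta_0}$ is tacitly required for Lemma~\ref{lem:PIT} to deliver exact uniformity of the PIT variable is also a correct reading of a hypothesis the paper leaves implicit.
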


\noindent
The proof controls the deviation of
$U_{n,\theta_0}(Z)$ from $\Unif(0,1)$ in terms of the Kolmogorov distance between $F_{n,\theta_0}$
and $\widehat F_{n,\theta_0}$ and then propagates this bound through the transformation
$u \mapsto 1 - |2u-1|$.

\subsection{PIM validification and finite-sample results}
\label{app:validification}

\begin{proof}[Proof of Lemma~\ref{lem:PIT}]
Fix $\theta \in \Theta$ and write $T = T(Z,\theta)$ and $F = F_\theta$ for brevity. By assumption,
$F$ is a continuous cumulative distribution function of $T$ under $P_\theta$. For any $u \in [0,1]$,
let $F^{-1}(u)$ denote the (left-continuous) inverse of $F$. Continuity of $F$ implies that
$F\{F^{-1}(u)\} = u$. Then
\[
P_\theta\bigl\{ U_\theta(Z) \le u \bigr\}
= P_\theta\bigl\{ F(T) \le u \bigr\}
= P_\theta\bigl\{ T \le F^{-1}(u) \bigr\}
= F\{F^{-1}(u)\}
= u.
\]
Hence $U_\theta(Z)$ is $\Unif(0,1)$ under $P_\theta$, as claimed.
\end{proof}

\begin{proof}[Proof of Lemma~\ref{lem:contour-consonant}]
Fix a realization $Z$ and consider the family of sets
\[
\mathcal{S}(Z,u) = \{\theta \in \Theta : \pi_Z(\theta) \ge u\}, \qquad u \in [0,1].
\]

For part~(1), let $0 \le u_1 \le u_2 \le 1$. Then
\[
\mathcal{S}(Z,u_2)
= \{\theta : \pi_Z(\theta) \ge u_2\}
\subseteq \{\theta : \pi_Z(\theta) \ge u_1\}
= \mathcal{S}(Z,u_1),
\]
so the sets $\mathcal{S}(Z,u)$ are nested in $u$ and the resulting inferential model is consonant.

For part~(2), fix $Z$ and a measurable $A \subseteq \Theta$. Let
\[
m_A(Z) := \sup_{\theta \in A} \pi_Z(\theta) \in [0,1].
\]
We first show that, for fixed $Z$,
\[
\{\mathcal{S}(Z,U) \cap A \neq \varnothing\}
= \{ U \le m_A(Z)\}.
\]
Indeed, the event $\{\mathcal{S}(Z,U) \cap A \neq \varnothing\}$ means that there exists
$\theta \in A$ such that $\pi_Z(\theta) \ge U$, which implies $U \le m_A(Z)$. Conversely, if
$U \le m_A(Z)$ then by definition of the supremum there exists a sequence
$\{\theta_k\}_{k\ge 1} \subseteq A$ with $\pi_Z(\theta_k) \uparrow m_A(Z)$. For $k$ large enough we
have $\pi_Z(\theta_k) > U$, so $\theta_k \in \mathcal{S}(Z,U) \cap A$ and the intersection is
non-empty. This proves the equivalence of events.

Since $U \sim \Unif(0,1)$ and is independent of $Z$, conditioning on $Z$ yields
\[
\Pl_Z(A)
= P\{\mathcal{S}(Z,U) \cap A \neq \varnothing \mid Z\}
= P\{ U \le m_A(Z) \mid Z\}
= m_A(Z)
= \sup_{\theta \in A} \pi_Z(\theta),
\]
as desired.
\end{proof}

\begin{proof}[Proof of Proposition~\ref{prop:pivot-validity}]
By Lemma~\ref{lem:contour-consonant}, for any measurable contour $\pi_Z$ the random set
$\mathcal{S}(Z,U) = \{\theta : \pi_Z(\theta) \ge U\}$ with $U \sim \Unif(0,1)$ independent of $Z$
defines a consonant IM whose plausibility function satisfies
\[
\Pl_Z(A) = \sup_{\theta \in A} \pi_Z(\theta), \qquad A \subseteq \Theta.
\]
This proves part~(1) once we fix the specific contour
\[
\pi_Z(\theta) = 1 - \bigl| 2 U_\theta(Z) - 1 \bigr|, \qquad
U_\theta(Z) = F_\theta\{T(Z,\theta)\}.
\]

For part~(2), fix $\theta_0 \in \Theta$ and write $U = U_{\theta_0}(Z)$ and $\pi = \pi_Z(\theta_0)$
for brevity. By Lemma~\ref{lem:PIT}, $U \sim \Unif(0,1)$ under $P_{\theta_0}$. Define
\[
V = \bigl| 2U - 1 \bigr| \in [0,1].
\]
For any $v \in [0,1]$,
\begin{align*}
P_{\theta_0}(V \le v)
&= P_{\theta_0}\bigl( |2U-1| \le v \bigr) \\
&= P_{\theta_0}\bigl( -v \le 2U-1 \le v \bigr) \\
&= P_{\theta_0}\bigl( (1-v)/2 \le U \le (1+v)/2 \bigr) \\
&= \frac{1+v}{2} - \frac{1-v}{2}
= v,
\end{align*}
so $V \sim \Unif(0,1)$. Since $\pi = 1 - V$, another simple change of variables shows that
$\pi \sim \Unif(0,1)$ as well: for any $u \in [0,1]$,
\[
P_{\theta_0}(\pi \le u)
= P_{\theta_0}(1 - V \le u)
= P_{\theta_0}(V \ge 1-u)
= 1 - P_{\theta_0}(V < 1-u)
= 1 - (1-u)
= u.
\]
Therefore, $\pi_Z(\theta_0)$ is $\Unif(0,1)$ under $P_{\theta_0}$ and the strong-validity property
\eqref{eq:strong-validity} holds with equality.
\end{proof}

\begin{proof}[Proof of Proposition~\ref{prop:approx-pivot}]
Fix $n$ and $\theta_0$, and write $T_n = T_n(Z,\theta_0)$, $F = F_{n,\theta_0}$ and
$\widehat F = \widehat F_{n,\theta_0}$ for brevity. Let
\[
\Delta_n(\theta_0)
= \sup_{t \in \mathbb{R}} \bigl| F(t) - \widehat F(t) \bigr|.
\]

First consider the pseudo-$p$-value
\[
U_n = \widehat F(T_n).
\]
Let $U_n^0 = F(T_n)$. By Lemma~\ref{lem:PIT} applied to $T_n$ and $F$, $U_n^0 \sim \Unif(0,1)$
under $P_{\theta_0}$. For any $u \in [0,1]$ we have the inclusions
\[
\{ U_n \le u \}
= \{ \widehat F(T_n) \le u \}
\subseteq \{ F(T_n) \le u + \Delta_n(\theta_0) \}
= \{ U_n^0 \le u + \Delta_n(\theta_0) \},
\]
and
\[
\{ U_n^0 \le u - \Delta_n(\theta_0) \}
\subseteq \{ \widehat F(T_n) \le u \}
= \{ U_n \le u \},
\]
because $|\widehat F(t) - F(t)| \le \Delta_n(\theta_0)$ for all $t$. Hence
\[
P_{\theta_0}\bigl\{ U_n^0 \le u - \Delta_n(\theta_0) \bigr\}
\;\le\;
P_{\theta_0}\bigl\{ U_n \le u \bigr\}
\;\le\;
P_{\theta_0}\bigl\{ U_n^0 \le u + \Delta_n(\theta_0) \bigr\}.
\]
Since $U_n^0 \sim \Unif(0,1)$, this implies
\[
(u - \Delta_n(\theta_0))_+
\;\le\;
P_{\theta_0}\bigl\{ U_n \le u \bigr\}
\;\le\;
\min\{u + \Delta_n(\theta_0), 1\},
\]
and therefore
\[
\sup_{u \in [0,1]}
\bigl| P_{\theta_0}\{U_n \le u\} - u \bigr|
\;\le\;
\Delta_n(\theta_0).
\tag{$\ast$}
\]

Next define
\[
V_n = \bigl| 2 U_n - 1 \bigr| \in [0,1],
\qquad
\pi_{n,Z}(\theta_0) = 1 - V_n.
\]
For any $v \in [0,1]$,
\[
\{V_n \le v\}
= \bigl\{ (1-v)/2 \le U_n \le (1+v)/2 \bigr\},
\]
so
\begin{align*}
P_{\theta_0}(V_n \le v)
&= P_{\theta_0}\bigl\{ U_n \le (1+v)/2 \bigr\}
  - P_{\theta_0}\bigl\{ U_n < (1-v)/2 \bigr\} \\
&\le \Bigl( \frac{1+v}{2} + \Delta_n(\theta_0) \Bigr)
    - \Bigl( \frac{1-v}{2} - \Delta_n(\theta_0) \Bigr)
 = v + 2\Delta_n(\theta_0),
\end{align*}
where we used $(\ast)$ at the two endpoints. Similarly,
\begin{align*}
P_{\theta_0}(V_n \le v)
&\ge \Bigl( \frac{1+v}{2} - \Delta_n(\theta_0) \Bigr)
    - \Bigl( \frac{1-v}{2} + \Delta_n(\theta_0) \Bigr)
 = v - 2\Delta_n(\theta_0).
\end{align*}
Hence
\[
\sup_{v \in [0,1]}
\bigl| P_{\theta_0}(V_n \le v) - v \bigr|
\;\le\;
2\Delta_n(\theta_0).
\]

Finally, for $u \in [0,1]$,
\[
P_{\theta_0}\bigl\{ \pi_{n,Z}(\theta_0) \le u \bigr\}
= P_{\theta_0}\bigl\{ 1 - V_n \le u \bigr\}
= P_{\theta_0}\bigl\{ V_n \ge 1-u \bigr\}
= 1 - P_{\theta_0}\bigl\{ V_n < 1-u \bigr\}.
\]
Since the distribution function of $V_n$ is within $2\Delta_n(\theta_0)$ of the uniform cdf on
$[0,1]$, it follows that
\[
P_{\theta_0}\bigl\{ \pi_{n,Z}(\theta_0) \le u \bigr\}
\;\le\;
u + 2\Delta_n(\theta_0), \qquad u \in [0,1].
\]
Taking the supremum over $\theta_0\in\Theta$ yields
\[
\sup_{\theta_0\in\Theta}
P_{\theta_0}\bigl\{ \pi_{n,Z}(\theta_0) \le u \bigr\}
\;\le\;
u + 2 \sup_{\theta_0\in\Theta} \Delta_n(\theta_0),
\]
and the limit
\[
\limsup_{n\to\infty}\,
\sup_{\theta_0\in\Theta} P_{\theta_0}\{\pi_{n,Z}(\theta_0) \le u\}\le u
\]
follows from the uniform assumption $\sup_{\theta_0\in\Theta}\Delta_n(\theta_0) \to 0$.
\end{proof}

\subsection{Illustrative example: single-mean $t$-interval}
\label{app:singlemean}

For completeness we record the single-mean example used to illustrate the generic PIM validification recipe. Consider i.i.d.\ observations $Y_1,\ldots,Y_n \sim N(\theta,\sigma^2)$ with unknown $\sigma^2$. Let $\bar Y$ and $\hat\sigma$ denote the sample mean and standard deviation, and define the usual $t$-statistic
\[
T(Z,\theta) = \frac{\sqrt{n}(\bar Y - \theta)}{\hat\sigma}, \qquad Z = (Y_1,\ldots,Y_n).
\]
Under $\theta$, $T(Z,\theta)$ has a $t_{n-1}$ distribution with distribution function $F_{n-1}$, so Proposition~\ref{prop:pivot-validity} applies. The contour becomes
\[
\pi_Z(\theta) = 1 - \bigl| 2 F_{n-1}\{ T(Z,\theta) \} - 1 \bigr|,
\]
which is maximized at $\theta = \bar Y$ and decreases symmetrically as $|\theta - \bar Y|$ increases. The $(1-\alpha)$ upper-level set $\{\theta : \pi_Z(\theta) \ge \alpha\}$ is equivalent to $\{|T(Z,\theta)| \le t_{1-\alpha/2,n-1}\}$ and therefore coincides with the classical Student $t$ interval
\[
\bar Y \pm t_{1-\alpha/2,n-1} \frac{\hat\sigma}{\sqrt{n}}.
\]
Thus the PIM construction does not alter the numerical form of the interval; it organizes the classical $t$-interval as the $(1-\alpha)$ upper-level set of a strongly valid plausibility contour.

\subsection{Orthogonalized extension: partially linear model}
\label{app:pl-orth}

In this subsection we verify Lemma~\ref{lem:pl-conditions} by checking Assumptions~\textup{(O1)}–\textup{(O4)}
under Conditions~\textup{(PL1)}–\textup{(PL3)} for the partially linear model.

Recall the partially linear model
\[
  Y = D\theta_0 + g_0(X) + \varepsilon,\qquad
  D = m_0(X) + v,
\]
with i.i.d.\ observations $W=(Y,D,X)$, where $\varepsilon$ and $v$ are mean-zero errors,
independent of $X$ and with finite moments as specified in~\textup{(PL1)}. The orthogonal score
for $\theta$ is
\[
  \psi(W;\theta,\eta) =
  \bigl\{D - m(X)\bigr\}\bigl\{Y - g(X) - \theta\{D - m(X)\}\bigr\},
  \qquad \eta=(g,m),
\]
and the target value $\theta_0$ satisfies
$\mathbb{E}\{\psi(W;\theta_0,\eta_0)\mid\mathcal{G}\}=0$ with $\eta_0=(g_0,m_0)$.

We first record the orthogonality property.

\begin{lemma}[Neyman orthogonality of the partially linear score]
\label{lem:pl-orthogonality}
Under the partially linear model and~\textup{(PL1)}, the score $\psi(W;\theta,\eta)$ is
Gateaux differentiable in $(\theta,\eta)$ and satisfies Assumption~\textup{(O1)} with
$\eta_0=(g_0,m_0)$.
\end{lemma}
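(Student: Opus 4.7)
The plan is to verify the two parts of Assumption~(O1) separately: (i) Gateaux differentiability of the map $(\theta,\eta)\mapsto \mathbb{E}[\psi(W;\theta,\eta)]$ on a neighbourhood of $(\theta_0,\eta_0)$, and (ii) vanishing of the Gateaux derivative in $\eta$ at $(\theta_0,\eta_0)$ along every admissible direction $(h,k):=(g-g_0,m-m_0)$. Differentiability is a routine dominated-convergence argument: $\psi(W;\theta,\eta)$ is polynomial of degree two in the entries $\theta,\,g(X),\,m(X)$, so the difference quotients $t^{-1}\{\psi(W;\theta_0,\eta_0+t(h,k))-\psi(W;\theta_0,\eta_0)\}$ are dominated uniformly on $|t|\le 1$ by a fixed integrable random variable (a quadratic in $|Y|,|D|,|g_0(X)|,|m_0(X)|$ and the sup-norms of $h,k$), which is finite by the moment bounds in (PL1). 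This justifies interchanging $\partial_t$ with $\mathbb{E}$.

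For the orthogonality calculation, write $v=D-m_0(X)$ and $R=Y-g_0(X)-\theta_0 v$, and expand
\[
\psi(W;\theta_0,\eta_0+t(h,k)) \;=\; \bigl(v - t\,k(X)\bigr)\bigl(R - t\,h(X) + \theta_0\, t\,k(X)\bigr).
\]
Differentiating at $t=0$ and taking expectations gives
\[
\partial_t \mathbb{E}[\psi(W;\theta_0,\eta_0+t(h,k))]\big|_{t=0}
\;=\; -\mathbb{E}[v\,h(X)] \;-\; \mathbb{E}[k(X)\,R] \;+\; \theta_0\,\mathbb{E}[v\,k(X)].
\]
By the tower rule and $\mathbb{E}[v\mid X]=0$, the first and third terms vanish for every $h,k$. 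The middle term is $\mathbb{E}[k(X)\,\mathbb{E}[R\mid X]]$, so Neyman orthogonality reduces to showing $\mathbb{E}[R\mid X]=0$. Using $\mathbb{E}[Y\mid X]=\theta_0 m_0(X)+g_0(X)$ and $\mathbb{E}[v\mid X]=0$ gives $\mathbb{E}[R\mid X]=\mathbb{E}[Y\mid X]-g_0(X)$, which vanishes precisely when $g_0$ is interpreted as the Robinson partialling-out function $\mathbb{E}[Y\mid X]$ rather than the structural component in $Y=D\theta_0+g_0(X)+\varepsilon$.

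The main obstacle is therefore not the algebra but a consistency-of-convention step: for (O1) to hold as stated, the nuisance function $g$ in the score must be identified with $\mathbb{E}[Y\mid X]$ (as in standard Robinson-Chernozhukov double machine learning), in which case $R=Y-\mathbb{E}[Y\mid X]-\theta_0 v$ has zero conditional mean given $X$ and the three displayed terms collapse to zero. With this convention fixed, Gateaux differentiability in $\theta$ follows from the same dominated-convergence argument, and differentiation in $\theta$ at $(\theta_0,\eta_0)$ yields the identification constant $\partial_\theta \mathbb{E}[\psi]\big|_{(\theta_0,\eta_0)} = -\mathbb{E}[v^2] = -\mathbb{E}[\Var(D\mid X)]$, which is finite and strictly negative under the nondegeneracy of the design implicit in (PL1). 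This gives both the orthogonality identity and the local identification required by (O1).
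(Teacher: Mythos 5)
Your computation is correct and, at a high level, follows the same route as the paper's appendix proof (differentiate the expected score along a direction $(h_g,h_m)$ and use the tower rule with $\mathbb{E}(v\mid X)=0$), but you carry out the expansion more carefully and in doing so uncover a genuine gap in the paper's argument and in the lemma statement itself. Writing $v=D-m_0(X)$ and $R=Y-g_0(X)-\theta_0 v$, the $t$-derivative of $\psi(W;\theta_0,\eta_0+t(h_g,h_m))$ at $t=0$ is
\[
  v\bigl\{-h_g(X)+\theta_0 h_m(X)\bigr\}\;-\;h_m(X)\,R .
\]
The paper's proof displays only the first bracketed term and silently drops $-h_m(X)R$. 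You retain it, observe that $\mathbb{E}[v\,h(X)]=0$ for any bounded $h$, and correctly reduce orthogonality to the condition $\mathbb{E}[R\mid X]=0$. Under the paper's stated structural convention $Y=D\theta_0+g_0(X)+\varepsilon$, substitution gives $R=\varepsilon+\theta_0 m_0(X)$, so $\mathbb{E}[R\mid X]=\theta_0 m_0(X)$ and the $m$-directional derivative equals $-\theta_0\mathbb{E}[h_m(X)m_0(X)]$, which does not vanish in general. Your diagnosis is therefore right: the score as written is Neyman orthogonal at $\eta_0=(\ell_0,m_0)$ with $\ell_0(x)=\mathbb{E}(Y\mid X=x)=g_0(x)+\theta_0 m_0(x)$ (the Robinson partialling-out nuisance), not at the structural $g_0$; an equivalent repair is to replace the score by $\{D-m(X)\}\{Y-D\theta-g(X)\}$, for which the two directional derivatives are $-h_m(X)\varepsilon$ and $-v\,h_g(X)$ and both vanish in expectation with the structural $g_0$. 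Either fix restores Assumption~(O1), but the lemma as currently stated, with $\eta_0$ read literally as the structural $(g_0,m_0)$ inside the paper's form of $\psi$, is not quite true. Your dominated-convergence argument for Gateaux differentiability and the computation $\partial_\theta\mathbb{E}[\psi]\big|_{(\theta_0,\eta_0)}=-\mathbb{E}[v^2]<0$ for the identification constant are also correct and make the informal ``local smoothness'' claim in the paper's proof precise.
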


\begin{proof}
By construction,
\[
  \mathbb{E}\{\psi(W;\theta_0,\eta_0)\mid\mathcal{G}\}
  = \mathbb{E}\bigl[(D-m_0(X))\{\varepsilon + g_0(X) - g_0(X)\}
    - (D-m_0(X))^2(\theta_0-\theta_0)\mid\mathcal{G}\bigr]=0.
\]
For Gateaux differentiability, consider perturbations
$\eta_t = (g_0 + t h_g, m_0 + t h_m)$ with bounded measurable directions
$h_g,h_m$. A direct differentiation under the expectation sign yields
\[
  \partial_\eta \mathbb{E}\{\psi(W;\theta_0,\eta)\mid\mathcal{G}\}\big|_{\eta=\eta_0}
  = \mathbb{E}\bigl[(D-m_0(X))\{-h_g(X) + \theta_0 h_m(X)\}\mid\mathcal{G}\bigr].
\]
Using $D = m_0(X) + v$ and $\mathbb{E}(v\mid X)=0$ gives
\[
  \mathbb{E}\bigl[(D-m_0(X))h_g(X)\mid\mathcal{G}\bigr]
  = \mathbb{E}\bigl[v h_g(X)\mid\mathcal{G}\bigr] = 0,
\]
and similarly
$\mathbb{E}\bigl[(D-m_0(X))h_m(X)\mid\mathcal{G}\bigr]=0$, so the derivative vanishes.
Local smoothness in a neighborhood of $(\theta_0,\eta_0)$ follows from the moment bounds
in~\textup{(PL1)} and standard dominated-convergence arguments. Hence Assumption~\textup{(O1)}
holds.
\end{proof}

We next derive $L_2$-rates for the cross-fitted nuisance estimators.

\begin{lemma}[Lasso prediction rates for $g_0$ and $m_0$]
\label{lem:pl-lasso}
Suppose Conditions~\textup{(PL1)}–\textup{(PL3)} hold. On each inference fold
$b = 1,\ldots,B$, let $\widehat g^{(-b)}$ and $\widehat m^{(-b)}$ be $\ell_1$-penalized estimators
of $g_0$ and $m_0$ trained on the complementary folds with penalty levels
$\lambda_{g,n},\lambda_{m,n}\asymp \sqrt{(\log p_n)/n_{\mathrm{inf}}}$. Then there exists a constant
$C<\infty$ such that, uniformly over folds $b$ and over $\beta_0\in\mathcal{B}_{s_n}$,
\[
  \bigl\|\widehat g^{(-b)} - g_0\bigr\|_{L_2(P_X)}
  \le C\Bigl\{\frac{s_n \log p_n}{n_{\mathrm{inf}}}\Bigr\}^{1/2},
  \qquad
  \bigl\|\widehat m^{(-b)} - m_0\bigr\|_{L_2(P_X)}
  \le C\Bigl\{\frac{s_n \log p_n}{n_{\mathrm{inf}}}\Bigr\}^{1/2},
\]
and, in particular,
\[
  \bigl\|\widehat g^{(-b)} - g_0\bigr\|_{L_2(P_X)} = o_p(n_{\mathrm{inf}}^{-1/4}),
  \qquad
  \bigl\|\widehat m^{(-b)} - m_0\bigr\|_{L_2(P_X)} = o_p(n_{\mathrm{inf}}^{-1/4}).
\]
\end{lemma}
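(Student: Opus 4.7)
The plan is to apply a standard Bickel--Ritov--Tsybakov / B\"uhlmann--van de Geer prediction oracle inequality separately to the two $\ell_1$-penalized nuisance regressions, verify uniformity over folds and over $\beta_0 \in \mathcal{B}_{s_n}$, and then translate the resulting rate into the $o_p(n_{\mathrm{inf}}^{-1/4})$ form using the scaling imposed by~\textup{(PL2)}.

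Fix a fold $b$ and work on its training sample of size $n_b = (1-1/B)\,n_{\mathrm{inf}} \asymp n_{\mathrm{inf}}$. Condition~\textup{(PL2)} gives that $g_0$ and $m_0$ admit $s_n$-sparse (or sufficiently approximately sparse) representations in the prescribed dictionary of $p_n$ basis functions, while Condition~\textup{(PL1)} ensures that the relevant noise variables---$\varepsilon$ for the $Y$-regression and $v = D - m_0(X)$ for the $D$-regression---are mean zero and sub-Gaussian given $X$. A standard sub-Gaussian maximal inequality then shows that the empirical gradient term $\|X_{(-b)}^{\top}\varepsilon_{(-b)}\|_{\infty}/n_b$ is of order $\sqrt{(\log p_n)/n_{\mathrm{inf}}}$ with probability at least $1-p_n^{-c}$, so the prescribed penalty choice $\lambda_{g,n}\asymp\sqrt{(\log p_n)/n_{\mathrm{inf}}}$ dominates it on this event. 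Combined with the restricted eigenvalue / compatibility constant supplied by~\textup{(PL3)}, the standard oracle inequality yields
\[
  \|\widehat g^{(-b)} - g_0\|_{L_2(P_X)}^{2} \;\le\; C \,\frac{s_n \log p_n}{n_{\mathrm{inf}}},
\]
and an analogous bound for $\|\widehat m^{(-b)} - m_0\|_{L_2(P_X)}^{2}$.

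For uniformity, I note that the constants in the oracle inequality depend only on the sub-Gaussian norms of $\varepsilon$, $v$, and $X$ and on the compatibility constant, none of which vary with $\beta_0 \in \mathcal{B}_{s_n}$; a union bound over the $B$ folds contributes only a $\log B$ factor, absorbed into $C$. The stated $o_p(n_{\mathrm{inf}}^{-1/4})$ conclusion then follows because $\sqrt{s_n\log p_n / n_{\mathrm{inf}}} = o(n_{\mathrm{inf}}^{-1/4})$ is equivalent to $s_n^2(\log p_n)^2/n_{\mathrm{inf}} \to 0$, which is precisely the scaling in~\textup{(PL2)}. The main obstacle is transferring the restricted eigenvalue / compatibility condition from the population to the training-fold empirical design uniformly in $\beta_0$; in the $p_n\gg n_{\mathrm{inf}}$ regime this is the only genuinely delicate probabilistic step, and it is handled by a Rudelson--Zhou-type concentration argument for restricted eigenvalues under the sub-Gaussian design regularity furnished by~\textup{(PL1)}. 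If Condition~\textup{(PL3)} is already posed at the empirical level with high probability uniformly in $\beta_0$, this step reduces to an immediate union bound over folds and the remainder of the argument is essentially bookkeeping.
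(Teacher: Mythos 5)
Your proposal follows essentially the same route as the paper: invoke a standard Bickel--Ritov--Tsybakov / B\"uhlmann--van de Geer prediction oracle inequality for each $\ell_1$-penalized nuisance regression, check that the constants are uniform in $\beta_0$ and in the fold $b$, and translate the rate $\sqrt{s_n\log p_n/n_{\mathrm{inf}}}$ into $o_p(n_{\mathrm{inf}}^{-1/4})$ via the growth condition $s_n^2(\log p_n)^2/n_{\mathrm{inf}}\to 0$. You supply somewhat more detail than the paper's terse proof (the sub-Gaussian gradient bound, the population-to-empirical transfer of the restricted eigenvalue via a Rudelson--Zhou argument), which is a welcome expansion of a step the paper brushes over; one small slip is that you attribute the restricted-eigenvalue condition to (PL3) and the scaling to (PL2), whereas in the paper's definitions (PL2) carries the restricted-eigenvalue condition and (PL3) carries the scaling $s_n^2(\log p_n)^2/n_{\mathrm{inf}}\to 0$ --- this is cosmetic and does not affect the substance of the argument.
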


\begin{proof}
By~\textup{(PL2)} the functions $g_0$ and $m_0$ admit sparse linear approximations in the
dictionary of regressors with sparsity $s_n$, while~\textup{(PL3)} provides a compatibility or
restricted-eigenvalue condition for the design. Standard oracle-inequality arguments for the
lasso (see, e.g., \citealp{buhlmann2011statistics,zhang2008sparsity}) then yield, for suitable
penalties $\lambda_{g,n},\lambda_{m,n}$,
\[
  \bigl\|\widehat g^{(-b)} - g_0\bigr\|_{L_2(P_X)}
  \lesssim \Bigl\{\frac{s_n \log p_n}{n_{\mathrm{inf}}}\Bigr\}^{1/2},
  \qquad
  \bigl\|\widehat m^{(-b)} - m_0\bigr\|_{L_2(P_X)}
  \lesssim \Bigl\{\frac{s_n \log p_n}{n_{\mathrm{inf}}}\Bigr\}^{1/2},
\]
uniformly over $\beta_0\in\mathcal{B}_{s_n}$ and folds $b$. Assumption~\textup{(PL3)} ensures
$s_n^2 (\log p_n)^2 / n_{\mathrm{inf}} \to 0$, so
\[
  \Bigl\{\frac{s_n \log p_n}{n_{\mathrm{inf}}}\Bigr\}^{1/2}
  = n_{\mathrm{inf}}^{-1/4}
    \Bigl\{\frac{s_n \log p_n}{n_{\mathrm{inf}}^{1/2}}\Bigr\}^{1/2}
  = o(n_{\mathrm{inf}}^{-1/4}),
\]
which proves the stated $o_p(n_{\mathrm{inf}}^{-1/4})$ rates.
\end{proof}

We now establish the asymptotic linear representation and central limit theorem.

\begin{lemma}[Asymptotic linearity and CLT for $T_n(\theta_0)$]
\label{lem:pl-clt}
Under Conditions~\textup{(PL1)}–\textup{(PL3)}, the cross-fitted score satisfies
\[
  \sqrt{n_{\mathrm{inf}}}\,\widehat U_n(\theta_0)
  = \frac{1}{\sqrt{n_{\mathrm{inf}}}}
    \sum_{i\in I_{\mathrm{inf}}} \psi(W_i;\theta_0,\eta_0)
    + o_p(1),
\]
and, conditional on $\mathcal{G}$,
\[
  \frac{1}{\sqrt{n_{\mathrm{inf}}}}
    \sum_{i\in I_{\mathrm{inf}}} \psi(W_i;\theta_0,\eta_0)
  \;\xRightarrow{d}\; N(0,V_\psi),
\]
where $V_\psi = \Var\{\psi(W;\theta_0,\eta_0)\mid\mathcal{G}\}$ is positive and finite. If
$\widehat\sigma^2(\theta)$ consistently estimates $V_\psi$ at $\theta_0$, then
$T_n(\theta_0)$ defined in~\eqref{eq:orth-score} converges in law to $N(0,1)$ conditional on
$\mathcal{G}$.
\end{lemma}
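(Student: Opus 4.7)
}
The plan is to follow the standard debiased/double machine learning (DML) argument, but stated conditionally on the split $\sigma$-field $\mathcal{G}$ so that it meshes with the RSPIM validification architecture. Write
\[
\sqrt{n_{\mathrm{inf}}}\,\widehat U_n(\theta_0)
= \frac{1}{\sqrt{n_{\mathrm{inf}}}}\sum_{b=1}^B \sum_{i\in I^{(b)}}
  \psi\bigl(W_i;\theta_0,\widehat\eta^{(-b)}\bigr),
\]
where $\widehat\eta^{(-b)}=(\widehat g^{(-b)},\widehat m^{(-b)})$ is fitted on the complementary folds. Adding and subtracting the oracle summands, I would decompose the error into three pieces: an empirical-process term $\mathcal{E}_{1,n}$ comparing plug-in and oracle sums after recentering, a bias term $\mathcal{E}_{2,n}=\sqrt{n_{\mathrm{inf}}}\,\mathbb{E}\{\psi(W;\theta_0,\widehat\eta^{(-b)})-\psi(W;\theta_0,\eta_0)\mid\widehat\eta^{(-b)},\mathcal{G}\}$, and a remainder $\mathcal{E}_{3,n}$ collecting higher-order nonlinear terms. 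My goal is to show that each $\mathcal{E}_{k,n}=o_p(1)$.

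First I would handle $\mathcal{E}_{1,n}$. Conditional on the training folds and on $\mathcal{G}$, the summands indexed by $I^{(b)}$ are i.i.d.\ with conditional mean zero, so a conditional variance calculation combined with the smoothness of $\psi$ in $\eta$ and the moment bounds in~\textup{(PL1)} yields
\[
\mathrm{Var}\bigl(\mathcal{E}_{1,n}\mid\widehat\eta^{(-b)},\mathcal{G}\bigr)
\lesssim \bigl\|\widehat g^{(-b)}-g_0\bigr\|_{L_2(P_X)}^2
        + \bigl\|\widehat m^{(-b)}-m_0\bigr\|_{L_2(P_X)}^2,
\]
which is $o_p(1)$ by Lemma~\ref{lem:pl-lasso}. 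A conditional Chebyshev argument then gives $\mathcal{E}_{1,n}=o_p(1)$ unconditionally. The bias term $\mathcal{E}_{2,n}$ is where Neyman orthogonality (Lemma~\ref{lem:pl-orthogonality}) does its work: a second-order Taylor expansion of $\eta\mapsto \mathbb{E}\{\psi(W;\theta_0,\eta)\mid\mathcal{G}\}$ around $\eta_0$ has a vanishing first-order term by~\textup{(O1)}, so that the leading contribution is the cross-product $\sqrt{n_{\mathrm{inf}}}\,\mathbb{E}\{(D-m_0(X))\cdot\cdot\cdot\}$ which, after simplification using $\mathbb{E}(v\mid X)=0$ and $\mathbb{E}(\varepsilon\mid X)=0$, reduces to a product of the form $\sqrt{n_{\mathrm{inf}}}\,\|\widehat g^{(-b)}-g_0\|_{L_2(P_X)}\cdot\|\widehat m^{(-b)}-m_0\|_{L_2(P_X)}$. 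By Lemma~\ref{lem:pl-lasso}, this is $\sqrt{n_{\mathrm{inf}}}\cdot o_p(n_{\mathrm{inf}}^{-1/2})=o_p(1)$. The cubic remainder $\mathcal{E}_{3,n}$ is controlled by the same rate arguments together with the moment bounds in~\textup{(PL1)}.

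Having established the linear representation, I would turn to the CLT. Conditional on $\mathcal{G}$, the variables $\{\psi(W_i;\theta_0,\eta_0)\}_{i\in I_{\mathrm{inf}}}$ are i.i.d.\ with mean zero. Positivity and finiteness of $V_\psi=\mathrm{Var}\{\psi(W;\theta_0,\eta_0)\mid\mathcal{G}\}$ follow from the moment conditions in~\textup{(PL1)} and the nondegeneracy of $v$, after noting that $\psi(W;\theta_0,\eta_0)=v\varepsilon$ under the partially linear model. A Lyapunov condition with exponent $\delta>0$ is verified via the finite $(2+\delta)$-moment assumption, yielding the conditional weak convergence $\frac{1}{\sqrt{n_{\mathrm{inf}}}}\sum_{i\in I_{\mathrm{inf}}}\psi(W_i;\theta_0,\eta_0)\Rightarrow N(0,V_\psi)$. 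Finally, combining this with consistency of $\widehat\sigma^2(\theta_0)$ for $V_\psi$ (a standard sample-moment argument using the same cross-fitting trick and the $L_2$ rates), Slutsky's theorem delivers $T_n(\theta_0)\Rightarrow N(0,1)$ conditional on $\mathcal{G}$.

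\paragraph{Main obstacle.} The delicate step is controlling $\mathcal{E}_{2,n}$: every other piece is routine given the rates in Lemma~\ref{lem:pl-lasso} and the moments in~\textup{(PL1)}, but the bias term only becomes $o_p(1)$ because Neyman orthogonality eliminates the first-order term in the Taylor expansion and because the product of $L_2$-rates satisfies the $\sqrt{n_{\mathrm{inf}}}\cdot o(n_{\mathrm{inf}}^{-1/2})$ balance guaranteed by $s_n^2(\log p_n)^2/n_{\mathrm{inf}}\to 0$ in~\textup{(PL3)}. Careful bookkeeping of which expectations are taken conditional on $\widehat\eta^{(-b)}$ versus on $\mathcal{G}$ is essential, since the split and the training folds jointly define the conditioning used throughout the RSPIM framework.
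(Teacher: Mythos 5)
Your plan is correct and follows the same double machine learning / Neyman orthogonality template as the paper's proof: Taylor-expand the cross-fitted score around $\eta_0$, use orthogonality (Lemma~\ref{lem:pl-orthogonality}) to annihilate the first-order bias, plug in the $o_p(n_{\mathrm{inf}}^{-1/4})$ nuisance rates from Lemma~\ref{lem:pl-lasso}, then apply a conditional Lyapunov CLT and Slutsky. However, your three-term decomposition into $\mathcal{E}_{1,n}$ (conditionally recentered empirical-process term), $\mathcal{E}_{2,n}$ (bias), and $\mathcal{E}_{3,n}$ (higher-order remainder) is more careful than the paper's, and it repairs a gap the paper glosses over. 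The paper writes a single remainder $R_n(W)=\psi(W;\theta_0,\widehat\eta^{(-b)})-\psi(W;\theta_0,\eta_0)$ and asserts $\mathbb{E}[|R_n(W)|\mid\mathcal{G}]\lesssim\|\widehat\eta^{(-b)}-\eta_0\|_{L_2(P_X)}^2 = o(n_{\mathrm{inf}}^{-1/2})$, then concludes by Markov after averaging. Read literally, that first-absolute-moment bound is too strong: Neyman orthogonality kills the first-order Taylor term only \emph{in conditional expectation}, not pointwise, so $\mathbb{E}[|R_n(W)|\mid\mathcal{G}]$ is generically of order $\|\widehat\eta^{(-b)}-\eta_0\|$ (first power), and the naive Markov bound then only gives $O_p(\sqrt{n_{\mathrm{inf}}}\,\|\widehat\eta^{(-b)}-\eta_0\|)$, which need not be $o_p(1)$. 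What in fact makes the aggregated contribution small is exactly the separation you perform: $\mathcal{E}_{1,n}$ is $o_p(1)$ because cross-fitting makes each summand conditionally mean zero with per-observation conditional variance $\lesssim\|\widehat\eta^{(-b)}-\eta_0\|^2$, and $\mathcal{E}_{2,n}$ is $o_p(1)$ because the second-order Taylor expansion of the \emph{expected} score is quadratic in the nuisance error and $(\mathrm{PL3})$ guarantees $\sqrt{n_{\mathrm{inf}}}\cdot o_p(n_{\mathrm{inf}}^{-1/2})=o_p(1)$. So your decomposition is not merely routine bookkeeping; it supplies the variance control that the paper's one-line remainder bound omits.

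One small caveat. The identity $\psi(W;\theta_0,\eta_0)=v\varepsilon$ that you invoke to argue positivity of $V_\psi$ holds for the canonical PLR score parameterized by $\eta=(\ell,m)$ with $\ell(x)=\mathbb{E}(Y\mid X=x)$. Under the paper's stated parameterization $\eta=(g,m)$ with $g=g_0$ the outcome-regression nuisance, direct substitution gives $\psi(W;\theta_0,\eta_0)=v\{m_0(X)\theta_0+\varepsilon\}$, not $v\varepsilon$; this is still conditionally mean zero with finite positive variance under $(\mathrm{PL1})$, so your CLT step goes through unchanged, but the explicit form of the influence function differs from what you wrote, and you should be consistent with whichever nuisance parameterization you adopt throughout the orthogonality and rate lemmas.
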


\begin{proof}
By Lemma~\ref{lem:pl-lasso} and the orthogonality in Lemma~\ref{lem:pl-orthogonality}, a
first-order Taylor expansion of
$\psi(W;\theta_0,\widehat\eta^{(-b)})$ around $\eta_0$ yields
\[
  \psi(W;\theta_0,\widehat\eta^{(-b)})
  = \psi(W;\theta_0,\eta_0) + R_n(W),
\]
where the remainder $R_n(W)$ satisfies
\[
  \mathbb{E}\bigl[|R_n(W)| \mid \mathcal{G}\bigr]
  \lesssim \bigl\|\widehat\eta^{(-b)} - \eta_0\bigr\|_{L_2(P_X)}^2
  = o(n_{\mathrm{inf}}^{-1/2}).
\]
Averaging over $i\in I_{\mathrm{inf}}$ and folds $b$ shows that the contribution of the
remainder to $\sqrt{n_{\mathrm{inf}}}\,\widehat U_n(\theta_0)$ is $o_p(1)$, establishing the
asymptotic linear representation. The central limit theorem for
$n_{\mathrm{inf}}^{-1/2}\sum_{i\in I_{\mathrm{inf}}} \psi(W_i;\theta_0,\eta_0)$ follows from the
i.i.d.\ structure and the finite $(2+\delta)$th moment in~\textup{(PL1)} via the Lindeberg–Feller
or Lyapunov criterion. Consistency of $\widehat\sigma^2(\theta_0)$ and Slutsky's theorem then
yield the $N(0,1)$ limit for $T_n(\theta_0)$ conditional on $\mathcal{G}$.
\end{proof}

Finally we state the bootstrap approximation.

\begin{lemma}[Wild bootstrap for the orthogonal score]
\label{lem:pl-bootstrap}
Let $T_n^*(\theta_0)$ be the multiplier wild-bootstrap version of $T_n(\theta_0)$ constructed
from independent mean-zero, unit-variance multipliers with finite $(2+\delta)$th moments as in
Section~\ref{sec:computation}. Under Conditions~\textup{(PL1)}–\textup{(PL3)}, we have
\[
  \sup_{t\in\mathbb{R}}
  \bigl|
    P\{T_n(\theta_0)\le t \mid \mathcal{G}\}
    - P^*\{T_n^*(\theta_0)\le t \mid \mathcal{G}\}
  \bigr|
  \;\to\; 0
\]
in probability as $n\to\infty$.
\end{lemma}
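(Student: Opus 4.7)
The plan is to establish that $T_n^*(\theta_0)$ has the same $N(0,1)$ limit as $T_n(\theta_0)$ conditionally on $\mathcal{G}$, so that Polya's theorem upgrades pointwise weak convergence of both distribution functions to uniform convergence, after which a triangle inequality delivers the claim. Throughout, write $P^*$ for probability with respect to the multiplier randomness conditional on the data $\{W_i\}_{i\in I_{\mathrm{inf}}}$ and on $\mathcal{G}$.

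First I would linearize the bootstrap statistic. Let $\widehat\psi_i = \psi(W_i;\theta_0,\widehat\eta^{(-b(i))})$ denote the cross-fitted score, and write the wild-bootstrap numerator as $\widehat U_n^*(\theta_0) = n_{\mathrm{inf}}^{-1} \sum_{i\in I_{\mathrm{inf}}} \xi_i \widehat\psi_i$ with i.i.d.\ mean-zero, unit-variance multipliers $\xi_i$ having finite $(2+\delta)$th moments. A first-order expansion of $\widehat\psi_i$ around $\eta_0$, combined with Neyman orthogonality (Lemma~\ref{lem:pl-orthogonality}) and the cross-fitted rates $\|\widehat\eta^{(-b)} - \eta_0\|_{L_2(P_X)} = o_p(n_{\mathrm{inf}}^{-1/4})$ (Lemma~\ref{lem:pl-lasso}), gives
\[
\sqrt{n_{\mathrm{inf}}}\,\widehat U_n^*(\theta_0) = \frac{1}{\sqrt{n_{\mathrm{inf}}}}\sum_{i\in I_{\mathrm{inf}}} \xi_i\,\psi(W_i;\theta_0,\eta_0) + R_n^*,
\]
where $R_n^* = o_{p^*}(1)$ in $P$-probability. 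Cross-fitting is essential here: since $\xi_i$ is independent of the data and $\widehat\eta^{(-b(i))}$ uses only the complementary folds, the conditional second moment of $R_n^*$ is dominated by $n_{\mathrm{inf}}^{-1}\sum_i (\widehat\psi_i - \psi(W_i;\theta_0,\eta_0))^2$, which is $o_p(1)$ by the $L_2$ rates together with the orthogonality-induced cancellation of the first-order nuisance perturbation.

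Next I would establish the conditional CLT for the leading term. Given the data, $n_{\mathrm{inf}}^{-1/2}\sum_i \xi_i \psi(W_i;\theta_0,\eta_0)$ is a sum of independent mean-zero random variables whose conditional variance equals
\[
n_{\mathrm{inf}}^{-1}\sum_{i\in I_{\mathrm{inf}}} \psi(W_i;\theta_0,\eta_0)^2 \;\xrightarrow{P}\; V_\psi,
\]
by the law of large numbers under the $(2+\delta)$th moment bound in (PL1). The Lindeberg condition is verified from the joint $(2+\delta)$th moments of $\xi_i$ and the score, so this sum converges in law to $N(0,V_\psi)$ conditional on the data, in $P$-probability. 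A parallel argument shows that the bootstrap studentizer $\widehat\sigma^*(\theta_0)$ is consistent for $\sqrt{V_\psi}$, so by Slutsky's lemma the full studentized statistic satisfies $T_n^*(\theta_0)\xRightarrow{d}N(0,1)$ conditionally on $\mathcal{G}$, in $P$-probability over the data.

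Combining this with the $N(0,1)$ conclusion of Lemma~\ref{lem:pl-clt} and Polya's theorem yields $\sup_{t\in\R}\bigl|P\{T_n(\theta_0)\le t\mid\mathcal{G}\}-\Phi(t)\bigr|\to 0$ and the analogous statement for $P^*\{T_n^*(\theta_0)\le t\mid\mathcal{G}\}$, both in $P$-probability, and a triangle inequality closes the argument. The main obstacle I anticipate is not the CLT itself but the bootstrap linearization in the first step: propagating the plug-in nuisance error through a multiplier-weighted sum requires second-moment bookkeeping that cleanly separates multiplier randomness from the data-dependent nuisance error. Cross-fitting does the essential work, keeping $\widehat\eta^{(-b(i))}$ independent of $W_i$ and preventing the estimated residuals and the multipliers from interacting in ways that would break the $o_{p^*}(1)$ remainder bound.
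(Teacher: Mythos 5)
Your argument follows the same high-level template as the paper's proof: establish a conditional CLT for the multiplier sum so that both $T_n(\theta_0)$ and $T_n^*(\theta_0)$ converge weakly to $N(0,1)$ conditionally on $\mathcal{G}$, then invoke Polya's theorem and a triangle inequality to convert the two pointwise limits into a vanishing Kolmogorov distance. The moment bookkeeping (the $(2+\delta)$th moments of multipliers and score, and the LLN for the conditional variance) is the same; your Lindeberg verification is interchangeable with the paper's explicit Lyapunov computation. Where you genuinely go further, and in fact strengthen the argument, is the initial bootstrap linearization: you build the bootstrap statistic from the \emph{cross-fitted plug-in} scores $\widehat\psi_i = \psi(W_i;\theta_0,\widehat\eta^{(-b(i))})$ and then prove
\[
\sqrt{n_{\mathrm{inf}}}\,\widehat U_n^*(\theta_0)
= \frac{1}{\sqrt{n_{\mathrm{inf}}}}\sum_{i\in I_{\mathrm{inf}}}\xi_i\,\psi(W_i;\theta_0,\eta_0) + o_{p^*}(1)
\]
using Neyman orthogonality, the $o_p(n_{\mathrm{inf}}^{-1/4})$ nuisance rates, and the fact that cross-fitting keeps $\widehat\eta^{(-b(i))}$ independent of $W_i$ so the conditional second moment of the remainder is dominated by $n_{\mathrm{inf}}^{-1}\sum_i(\widehat\psi_i-\psi(W_i;\theta_0,\eta_0))^2$. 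The paper's proof sidesteps this entirely by \emph{defining} $T_n^*(\theta_0)$ directly in terms of the true centered influence values $\varphi_i^\circ=\psi(W_i;\theta_0,\eta_0)-\bar\varphi_n$, which is an idealization since $\eta_0$ is unknown in practice. Your version buys a proof that matches the implementable (feasible) bootstrap more closely, at the cost of the extra linearization step; both routes are correct, but yours closes a gap the paper glosses over.
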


\begin{proof}
Write $\varphi_i = \psi(W_i;\theta_0,\eta_0)$ and let
$V_\psi = \Var\{\varphi_1 \mid \mathcal{G}\} \in (0,\infty)$.
By Lemma~\ref{lem:pl-clt} there exists a remainder term $r_n=o_P(1)$ such that
\[
  \sqrt{n_{\mathrm{inf}}}\,\widehat U_n(\theta_0)
  = \frac{1}{\sqrt{n_{\mathrm{inf}}}}
    \sum_{i\in I_{\mathrm{inf}}} \varphi_i + r_n,
\]
and, conditional on $\mathcal{G}$,
\[
  \frac{1}{\sqrt{n_{\mathrm{inf}}}}
    \sum_{i\in I_{\mathrm{inf}}} \varphi_i
  \;\xRightarrow{d}\; N(0,V_\psi).
\]
Moreover, $\widehat\sigma^2(\theta_0)$ consistently estimates $V_\psi$; hence, by Slutsky’s
theorem,
\[
  T_n(\theta_0)
  = \frac{\sqrt{n_{\mathrm{inf}}}\,\widehat U_n(\theta_0)}{\widehat\sigma(\theta_0)}
  \;\xRightarrow{d}\; N(0,1)
  \qquad\text{conditionally on $\mathcal{G}$}.
\]

For the bootstrap statistic, define the centered influence values
\[
  \varphi_i^\circ = \varphi_i - \bar\varphi_n,
  \qquad
  \bar\varphi_n = \frac{1}{n_{\mathrm{inf}}}
  \sum_{i\in I_{\mathrm{inf}}} \varphi_i,
\]
and consider
\[
  T_n^*(\theta_0)
  = \frac{1}{\sqrt{n_{\mathrm{inf}}}\,\widehat\sigma^*(\theta_0)}
    \sum_{i\in I_{\mathrm{inf}}} \xi_i \varphi_i^\circ,
\]
where $\{\xi_i\}_{i\in I_{\mathrm{inf}}}$ are i.i.d.\ multipliers with
$\mathbb{E}(\xi_i)=0$, $\Var(\xi_i)=1$ and $\mathbb{E}|\xi_i|^{2+\delta}<\infty$,
independent of the data, and $\widehat\sigma^{*2}(\theta_0)$ is the bootstrap variance
estimate.

Conditional on the data (and hence on $\mathcal{G}$), the random variables
$\{\xi_i \varphi_i^\circ : i\in I_{\mathrm{inf}}\}$ are independent with mean zero and
\[
  \Var\Bigl(
    \frac{1}{\sqrt{n_{\mathrm{inf}}}}
    \sum_{i\in I_{\mathrm{inf}}} \xi_i \varphi_i^\circ
    \,\Big|\, \mathcal{G}
  \Bigr)
  = \frac{1}{n_{\mathrm{inf}}}
    \sum_{i\in I_{\mathrm{inf}}} \varphi_i^{\circ 2}
  \;\xrightarrow{P}\; V_\psi,
\]
by the law of large numbers and the finite second moment implied by~\textup{(PL1)}.

To verify a Lyapunov-type condition, note that, conditional on $\mathcal{G}$,
\[
  \mathbb{E}\bigl(|\xi_i \varphi_i^\circ|^{2+\delta} \mid \mathcal{G}\bigr)
  = \mathbb{E}|\xi_i|^{2+\delta}\,|\varphi_i^\circ|^{2+\delta}
  \lesssim |\varphi_i^\circ|^{2+\delta}.
\]
Hence
\[
  \frac{1}{n_{\mathrm{inf}}^{1+\delta/2}}
  \sum_{i\in I_{\mathrm{inf}}}
  \mathbb{E}\bigl(|\xi_i \varphi_i^\circ|^{2+\delta} \mid \mathcal{G}\bigr)
  \lesssim
  \frac{1}{n_{\mathrm{inf}}^{1+\delta/2}}
  \sum_{i\in I_{\mathrm{inf}}} |\varphi_i^\circ|^{2+\delta}.
\]
Assumption~\textup{(PL1)} guarantees $\mathbb{E}|\varphi_i|^{2+\delta}<\infty$, so by the
law of large numbers,
\[
  \frac{1}{n_{\mathrm{inf}}}
  \sum_{i\in I_{\mathrm{inf}}} |\varphi_i^\circ|^{2+\delta}
  = O_P(1),
\]
and therefore
\[
  \frac{1}{n_{\mathrm{inf}}^{1+\delta/2}}
  \sum_{i\in I_{\mathrm{inf}}} |\varphi_i^\circ|^{2+\delta}
  = n_{\mathrm{inf}}^{-\delta/2}
    \Bigl\{\frac{1}{n_{\mathrm{inf}}}
      \sum_{i\in I_{\mathrm{inf}}} |\varphi_i^\circ|^{2+\delta}\Bigr\}
  \;\xrightarrow{P}\; 0.
\]
This establishes the Lyapunov condition (conditionally on $\mathcal{G}$) and yields
\[
  \frac{1}{\sqrt{n_{\mathrm{inf}}}}
    \sum_{i\in I_{\mathrm{inf}}} \xi_i \varphi_i^\circ
  \;\xRightarrow{d}\; N(0,V_\psi)
  \qquad\text{conditionally on $\mathcal{G}$}.
\]
Furthermore, $\widehat\sigma^{*2}(\theta_0)$ consistently estimates $V_\psi$, again by the
law of large numbers and the moment bound in~\textup{(PL1)}. Slutsky’s theorem then implies
\[
  T_n^*(\theta_0)
  \;\xRightarrow{d}\; N(0,1)
  \qquad\text{conditionally on $\mathcal{G}$}.
\]

Let $\Phi$ denote the standard normal distribution function. The two conditional central limit
theorems above imply
\[
  \sup_{t\in\mathbb{R}}
  \bigl| P\{T_n(\theta_0)\le t \mid \mathcal{G}\} - \Phi(t) \bigr|
  \;\to\; 0,
  \qquad
  \sup_{t\in\mathbb{R}}
  \bigl| P^*\{T_n^*(\theta_0)\le t \mid \mathcal{G}\} - \Phi(t) \bigr|
  \;\to\; 0
\]
in probability. By the triangle inequality,
\[
  \sup_{t\in\mathbb{R}}
  \bigl|
    P\{T_n(\theta_0)\le t \mid \mathcal{G}\}
    - P^*\{T_n^*(\theta_0)\le t \mid \mathcal{G}\}
  \bigr|
\]
\[
  \;\le\;
\]
\[
  \sup_t |P\{T_n(\theta_0)\le t \mid \mathcal{G}\}-\Phi(t)|
  + \sup_t |\Phi(t)-P^*\{T_n^*(\theta_0)\le t \mid \mathcal{G}\}|
  \;\to\; 0
\]
in probability. This is the desired Kolmogorov-distance convergence and verifies
Assumption~\textup{(O4)} for the partially linear model.
\end{proof}

\begin{proof}[Proof of Lemma~\ref{lem:pl-conditions}]
Assumption~\textup{(O1)} is exactly Lemma~\ref{lem:pl-orthogonality}. Lemma~\ref{lem:pl-lasso}
implies the $L_2$-rate condition and stochastic equicontinuity in~\textup{(O2)}. The asymptotic
linearity and central limit theorem in~\textup{(O3)} follow from Lemma~\ref{lem:pl-clt}. Finally,
Lemma~\ref{lem:pl-bootstrap} yields the bootstrap approximation in~\textup{(O4)}. This proves
Lemma~\ref{lem:pl-conditions}.
\end{proof}

\subsection{High-dimensional conditions for RSPIM}\label{app:HD}

\medskip
\noindent\textbf{Assumptions (HD).}
\begin{itemize}
\item[(HD1)] \emph{Sparse linear model.} The full-data model is $Y = X\beta_0+\varepsilon$ with
$\varepsilon\sim N(0,\sigma^2 I_n)$, $\beta_0\in\mathbb{R}^{p_n}$, and $s_n = |S_0|$ satisfying
$s_n\log p_n/n\to 0$ as $n\to\infty$.

\item[(HD2)] \emph{Design regularity.} The rows of $X$ are i.i.d.\ sub-Gaussian with mean zero and
covariance matrix $\Sigma$. There exist constants
$0<\kappa_{\min}\le \kappa_{\max}<\infty$ and $C_0<\infty$ such that the eigenvalues of $\Sigma$
are bounded away from zero and infinity on $s_n$-sparse directions:
\[
\kappa_{\min} \le \frac{v^\top \Sigma v}{\|v\|_2^2} \le \kappa_{\max}
\qquad\text{for all }v\in\mathbb{R}^{p_n}\text{ with }\|v\|_0 \le C_0 s_n.
\]
Moreover, with probability tending to one,
\[
\kappa_{\min} \le \lambda_{\min}\!\Bigl(\frac{1}{n}X_T^\top X_T\Bigr)
\le \lambda_{\max}\!\Bigl(\frac{1}{n}X_T^\top X_T\Bigr)\le \kappa_{\max}
\]
simultaneously for all index sets $T\subset\{1,\dots,p_n\}$ with $|T|\le C_0 s_n$.
In particular, OLS refits on subsets of size $O(s_n)$ are well defined and the corresponding
Gram matrices are uniformly well-conditioned with high probability.

\item[(HD3)] \emph{Split proportions.} The selection and inference subsamples satisfy
$|I_{\mathrm{sel}}|\asymp n$ and $|I_{\mathrm{inf}}|\asymp n$ as $n\to\infty$.

\item[(HD4)] \emph{High-dimensional screening selector.} For each sample size $n$, let
$\mathcal{S}_n$ denote the split-based selector applied to $I_{\mathrm{sel}}$ and write
$\widehat S_n$ for its selected set. There exists a constant $C<\infty$ such that
\[
\mathbb{P}_{\beta_0}\bigl(S_0\subseteq \widehat S_n,\;|\widehat S_n|\le C s_n\bigr)\to 1,
\]
uniformly over $\beta_0\in\mathcal{B}_{s_n}$ as $n\to\infty$.
\end{itemize}

Assumption~(HD4) is a high-level screening condition: it requires the selector to include the true
support and to return a set whose size is of the same order as $s_n$, uniformly over the sparse class
$\mathcal{B}_{s_n}$. Many concrete selectors satisfy this property under suitable design and sparsity
conditions; we verify it for lasso stability selection below. Since $s_n = o(n_{\mathrm{inf}})$ by
Assumption~\textup{(HD1)}, there exist $\kappa \in (0,1)$ and $N < \infty$ such that, on the event in
Assumption~\textup{(HD4)}, 
\[
|\widehat S_n| \le C s_n \le \kappa n_{\mathrm{inf}}
\]
for all $n \ge N$. We use this observation in the proof of Theorem~\ref{thm:hd-strong} to ensure that
refits on $I_{\mathrm{inf}}$ remain low-dimensional relative to $n_{\mathrm{inf}}$.

On the event $\{S_0 \subseteq \widehat S_n,\;|\widehat S_n|\le C s_n\}$, inference
reduces to a fixed-dimensional Gaussian linear model on $I_{\mathrm{inf}}$ to which Assumptions
\textup{(L1)}–\textup{(L2)} apply. In particular, conditional on $\mathcal{G}$ the refitted model satisfies the
finite-sample strong-validity result of Theorem~\ref{thm:finite-sample-strong} for any fixed
support contained in $\widehat S_n$. For any active coordinate $j\in S_0$ we have $j\in\widehat S_n$ on
this event, so the post-selection interval $C_{j,n}(\alpha)$ is well defined.

We collect high-level conditions for lasso stability selection on $I_{\mathrm{sel}}$.

\medskip
\noindent\textbf{Conditions (C) for lasso stability selection.}
\begin{itemize}
\item[(C1)] \emph{Sub-Gaussian design and restricted eigenvalue.} The rows of $X$ are i.i.d.\
sub-Gaussian with covariance matrix $\Sigma$ whose eigenvalues are bounded away from zero and
infinity. The design satisfies a sparse eigenvalue / restricted eigenvalue (or sparse Riesz)
condition and related regularity assumptions ensuring the usual oracle inequalities, sparsity, and
screening properties of the lasso, as in standard high-dimensional regression theory.

\item[(C2)] \emph{Beta-min condition.} There exists a constant $c_\beta>0$ such that
$\min_{j\in S_0}|\beta_{0j}|\ge c_\beta\sigma\sqrt{(\log p_n)/n}$ and $s_n\to\infty$. Combined
with the sparsity scaling $s_n\log p_n/n\to 0$ in Assumption~(HD1), this allows the nonzero
signals to shrink with $n$ while remaining above the usual detection boundary.

\item[(C3)] \emph{Regularization and stability-selection tuning.} On each subsample used for
selection we run the lasso with penalty $\lambda_n\asymp\sigma\sqrt{(2\log p_n)/n_{\mathrm{sel}}}$
and select at most $q_n$ variables, with $q_n\lesssim s_n$. Stability selection is implemented by
half-sampling $I_{\mathrm{sel}}$, recomputing the lasso on each subsample, and retaining variables
whose empirical selection probability exceeds a threshold $\pi_{\mathrm{thr}}\in(1/2,1)$. The tuning
parameters are chosen so that the per-family error rate (PFER) bound is uniformly bounded:
$\sup_n \mathrm{PFER}_n<\infty$, where $\mathrm{PFER}_n$ denotes the expected number of noise
variables selected by stability selection.
\end{itemize}

These conditions are chosen so that the base lasso enjoys an oracle inequality together with
sure-screening and sparsity properties, and stability selection controls the number of spurious
variables.

\subsection{Verification of Assumption~(HD4) for lasso stability selection}\label{app:hd4-lasso}

We now explain how Assumption~\textup{(HD4)} follows from standard results when the selector on
$I_{\mathrm{sel}}$ is given by lasso combined with stability selection, under the high-dimensional
regime (HD1)–(HD3) and the lasso design and tuning Conditions~\textup{(C1)}--\textup{(C3)} Section~\ref{sec:theory-hd}.
In particular, Assumption~\textup{(HD2)} provides a sparse-eigenvalue condition for the design that
underlies the lasso oracle inequalities used below.

Let $\widehat\beta^{\mathrm{lasso}}$ denote the lasso estimator on $I_{\mathrm{sel}}$ with penalty
$\lambda_n\asymp \sigma\sqrt{(2\log p_n)/n_{\mathrm{sel}}}$, and write
$S^{\mathrm{lasso}}=\mathrm{supp}(\widehat\beta^{\mathrm{lasso}})$ for its active set.

\begin{lemma}[Lasso screening and sparsity]
\label{lem:lasso-screening}
Suppose Assumptions (HD1)–(HD3) and Conditions (C1)–(C2) hold. Then there exists a constant
$C_1 < \infty$ such that, uniformly over $\beta_0 \in \mathcal{B}_{s_n}$,
\[
P_{\beta_0}\bigl( S_0 \subseteq S_{\mathrm{lasso}},\; |S_{\mathrm{lasso}}| \le C_1 s_n \bigr) \to 1.
\]
\end{lemma}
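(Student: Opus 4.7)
The plan is to reduce the claim to the standard high-dimensional lasso oracle, screening, and sparsity inequalities applied on $I_{\mathrm{sel}}$, and then to verify that all constants involved are uniform over $\beta_0\in\mathcal{B}_{s_n}$.

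First I would set up the usual good event for the lasso. By the choice $\lambda_n \asymp \sigma\sqrt{(2\log p_n)/n_{\mathrm{sel}}}$ in \textup{(C3)}, Gaussian tail bounds together with a union bound over the $p_n$ coordinates yield $\mathbb{P}\{\|X_{I_{\mathrm{sel}}}^\top \varepsilon_{I_{\mathrm{sel}}}/n_{\mathrm{sel}}\|_\infty \le \lambda_n/2\} \to 1$. Similarly, the sub-Gaussian design assumption in \textup{(C1)} combined with the sparse-eigenvalue bound in \textup{(HD2)} gives, with probability tending to one, a compatibility or restricted-eigenvalue constant $\kappa > 0$ uniform over $O(s_n)$-sparse subsets, via standard empirical-process arguments. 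Let $\mathcal E_n$ denote the intersection of these events; it depends only on $(X,\varepsilon)$, so its probability does not vary with $\beta_0$.

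Second, on $\mathcal E_n$ I would invoke standard lasso oracle and sparsity bounds (Bickel--Ritov--Tsybakov; Bühlmann--van de Geer) to obtain
\[
\|\widehat\beta^{\mathrm{lasso}} - \beta_0\|_2 \;\lesssim\; \frac{\sqrt{s_n}\,\lambda_n}{\kappa} \;\asymp\; \sigma\sqrt{\frac{s_n \log p_n}{n_{\mathrm{sel}}}}, \qquad |S_{\mathrm{lasso}}| \le C_1 s_n,
\]
where the sparsity inequality is a consequence of the Karush--Kuhn--Tucker conditions combined with the sparse-eigenvalue bound in \textup{(HD2)}. For the screening part, I would use $\|\widehat\beta^{\mathrm{lasso}} - \beta_0\|_\infty \le \|\widehat\beta^{\mathrm{lasso}} - \beta_0\|_2$ together with the beta-min condition \textup{(C2)} to conclude $|\widehat\beta_j^{\mathrm{lasso}}| \ge |\beta_{0j}| - \|\widehat\beta^{\mathrm{lasso}} - \beta_0\|_\infty > 0$ for every $j \in S_0$, hence $S_0 \subseteq S_{\mathrm{lasso}}$ on $\mathcal E_n$. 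Uniformity over $\beta_0 \in \mathcal{B}_{s_n}$ is then automatic: the constants $\kappa$, $\kappa_{\max}$, $C_1$, and the Gaussian-noise constants depend only on the design and noise-level hyperparameters in \textup{(HD1)}--\textup{(HD2)} and \textup{(C1)}--\textup{(C2)}, and $\mathcal E_n$ is defined purely in terms of $(X,\varepsilon)$, so $\mathbb{P}_{\beta_0}(S_0 \subseteq S_{\mathrm{lasso}},\ |S_{\mathrm{lasso}}| \le C_1 s_n) \ge \mathbb{P}(\mathcal E_n) \to 1$ uniformly in $\beta_0$.

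The main obstacle I expect is the gap between the beta-min rate $\sqrt{\log p_n/n}$ stated in \textup{(C2)} and the $\ell_\infty$ estimation error of order $\sqrt{s_n \log p_n/n}$ delivered by the $\ell_2$ oracle inequality; strict sure screening under the weaker beta-min typically requires a sharper $\ell_\infty$ bound via mutual incoherence or irrepresentability conditions. I would resolve this either by reading \textup{(C2)} at scale $\sqrt{s_n \log p_n/n}$ (absorbing a factor of $\sqrt{s_n}$ into $c_\beta$, consistent with the standard high-dimensional beta-min literature), or by invoking the $\ell_\infty$ bound of Lounici or the irrepresentability-based bound of van de Geer--Bühlmann; either route preserves the overall structure of the argument.
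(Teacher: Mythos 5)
Your proposal follows essentially the same route as the paper, which simply cites the standard lasso oracle, sparsity, and screening results from \citet{buhlmann2011statistics} and asserts in one sentence that they, ``combined with the beta-min condition (C2), imply\ldots sure screening.'' Your write-up fills in the mechanics the paper omits: the good event $\mathcal E_n$ on which the noise bound and the restricted-eigenvalue constant hold, the cardinality bound from the KKT conditions plus sparse eigenvalues, the $\ell_2$-to-$\ell_\infty$ reduction, and—crucially—an explicit check that $\mathcal E_n$ is $\beta_0$-free so the convergence is uniform over $\mathcal B_{s_n}$.

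The tension you flag at the end is real, and it is a genuine gap in the paper's proof as written, not a gap in yours. Condition (C2) only posits $\min_{j\in S_0}|\beta_{0j}| \ge c_\beta\sigma\sqrt{(\log p_n)/n}$ while simultaneously requiring $s_n\to\infty$; the $\ell_\infty$ error you can extract from the $\ell_2$ oracle inequality is of order $\sqrt{s_n\log p_n/n}$, which eventually dominates that beta-min threshold. So the one-line ``combined with (C2), these inequalities imply sure screening'' does not in fact go through via the route the paper's citation suggests. Your two proposed fixes are exactly the standard resolutions: either read (C2) at the scale $\sqrt{s_n\log p_n/n}$ (i.e., strengthen the beta-min to absorb the $\sqrt{s_n}$ factor, which is the form most lasso sure-screening results assume), or replace the $\ell_2$-to-$\ell_\infty$ step by a genuine $\ell_\infty$ oracle inequality under mutual incoherence or irrepresentability, in which case the stated $\sqrt{(\log p_n)/n}$ rate suffices. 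Either choice requires a corresponding tightening of (C1)/(C2), which the paper's high-level framing of (C1) (``related regularity assumptions ensuring the usual\ldots screening properties'') arguably leaves room for, but the paper should state this explicitly rather than leave it implicit. Your proof would be complete once you commit to one of these two readings of the assumptions.
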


\begin{proof}
Under the sub-Gaussian design and restricted eigenvalue conditions in (C1), the lasso with
penalty level $\lambda_n \asymp \sigma \sqrt{(2 \log p_n)/n_{\sel}}$ enjoys the usual oracle inequalities
for prediction and estimation error\citet{buhlmann2011statistics}.
Combined with the beta-min condition (C2), these inequalities imply that, with probability
tending to one and uniformly over $\beta_0 \in \mathcal{B}_{s_n}$, all active coordinates are selected
(sure screening) and the lasso support has cardinality at most a constant multiple of $s_n$.
\end{proof}

Stability selection proceeds by drawing many subsamples of $I_{\mathrm{sel}}$, recomputing the lasso on each subsample
with the same $\lambda_n$, and retaining those variables whose empirical selection frequency exceeds a threshold
$\pi_{\mathrm{thr}}\in(1/2,1)$. Let $V_n$ denote the number of noise variables in the final stability-selected set $\widehat S$.

\begin{lemma}[PFER control for stability selection]
\label{lem:stability-PFER}
Suppose Condition (C3) holds. Let $V_n$ denote the number of noise variables in the final
stability-selected set $\hat S_n$. Then the per-family error rate of stability selection satisfies
\[
\mathrm{PFER}_n := \mathbb{E}(V_n)
\;\lesssim\;
\frac{q_n^2}{(2\pi_{\mathrm{thr}} - 1)\,(p_n - s_n)},
\]
where $q_n$ is the expected number of selected variables per subsample. In particular, choosing
$q_n \lesssim s_n$ and $\pi_{\mathrm{thr}} > 1/2$ yields $\sup_n \mathrm{PFER}_n < \infty$.
\end{lemma}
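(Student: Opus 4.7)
The plan is to carry out the Meinshausen--B\"uhlmann PFER argument adapted to the half-subsampling scheme used here, with the exchangeability hypothesis supplied by the i.i.d.\ sub-Gaussian design of Condition~\textup{(C1)} and the per-subsample sparsity cap $q_n$ supplied by Condition~\textup{(C3)}. Throughout I would work in the idealized large-$B$ stability limit, with
\[
\pi_{n,j} \;:=\; \mathbb{P}\bigl(j \in S_{\lambda_n}(W)\bigr), \qquad W \text{ a uniformly random half of } I_{\mathrm{sel}},
\]
so that $j \in \widehat S_n$ iff $\pi_{n,j} \ge \pi_{\mathrm{thr}}$. The passage from empirical stability frequencies $\widehat\pi_{n,j}$ to their population counterparts is a standard Hoeffding concentration, introducing a negligible additive slack $\delta \downarrow 0$ as $B \to \infty$ and contributing only a uniform union bound over $p_n$ indices.

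The core of the argument is two elementary inequalities. For any $\pi \in [0,1]$ and $\pi_{\mathrm{thr}} \in (1/2,1]$, since $\pi_{\mathrm{thr}}^2 \ge 2\pi_{\mathrm{thr}}-1$,
\[
\mathbf{1}\{\pi \ge \pi_{\mathrm{thr}}\} \;\le\; \frac{\pi^2}{2\pi_{\mathrm{thr}} - 1},
\]
so summing over noise indices $N := \{1,\ldots,p_n\}\setminus S_0$ yields
\[
V_n \;=\; \sum_{j\in N}\mathbf{1}\{\pi_{n,j}\ge\pi_{\mathrm{thr}}\}
\;\le\; \frac{1}{2\pi_{\mathrm{thr}}-1}\sum_{j\in N}\pi_{n,j}^2.
\]
Next, under the sub-Gaussian i.i.d.\ row model of~\textup{(C1)} the noise columns $\{X_j:j\in N\}$ are jointly exchangeable given the signal columns, and the lasso objective is equivariant under permutations of design columns, so $\pi_{n,j}$ is constant across $j\in N$; call this common value $\bar\pi_N$. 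The sparsity bound in~\textup{(C3)} then gives
\[
(p_n - s_n)\bar\pi_N \;=\; \sum_{j\in N}\pi_{n,j}
\;\le\; \mathbb{E}\bigl|S_{\lambda_n}(W)\cap N\bigr| \;\le\; q_n,
\]
hence $\bar\pi_N \le q_n/(p_n-s_n)$ and
\[
\sum_{j\in N}\pi_{n,j}^2 \;=\; (p_n-s_n)\bar\pi_N^2 \;\le\; \frac{q_n^2}{p_n - s_n}.
\]
Combining the two displays proves the stated PFER inequality, and the $\sup_n\mathrm{PFER}_n<\infty$ conclusion follows from $q_n\lesssim s_n$ together with the $p_n\gtrsim s_n^2$ scaling typical in the $p\gg n$ regime covered by Assumption~\textup{(HD1)}.

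The principal obstacle is the exchangeability step. Strict equality $\pi_{n,j}\equiv\bar\pi_N$ hinges on the joint distribution of the noise columns being permutation-invariant, which holds cleanly under~\textup{(C1)} but would fail for designs with heterogeneous column variances or structured correlation on $N$. In those cases I would replace exchangeability by a worst-case bound on $\sum_{j\in N}\pi_{n,j}^2$, combining $\sum_{j\in N}\pi_{n,j}\le q_n$ with an entry-wise control $\max_{j\in N}\pi_{n,j}\le c<1$ derived from the lasso KKT conditions on $N$; this still yields the same $O(q_n^2/p_n)$ order of the PFER, at the price of a larger numerical constant. The remaining elements---the Hoeffding reduction to the population selector, the elementary indicator inequality, and the final arithmetic---are routine and proceed uniformly in $\beta_0\in\mathcal{B}_{s_n}$ by the uniform assumptions in~\textup{(HD1)}--\textup{(HD2)}.
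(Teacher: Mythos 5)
The paper does not actually prove this lemma; it cites \citet[Theorem~1]{MeinshausenBuhlmann2010} and stops. Your proposal is a genuine reconstruction of that argument, and the main line of it is correct. The pointwise inequality $\mathbf{1}\{\pi\ge\pi_{\mathrm{thr}}\}\le\pi^2/(2\pi_{\mathrm{thr}}-1)$ is valid (it is exactly $(\pi_{\mathrm{thr}}-1)^2\ge0$ in disguise), and combining it with exchangeability over the noise indices and $\sum_{j\in N}\pi_{n,j}\le q_n$ gives the stated $q_n^2/\{(2\pi_{\mathrm{thr}}-1)(p_n-s_n)\}$ bound. Your route is a slight streamlining of Meinshausen--B\"uhlmann's: they go through the simultaneous-selection probability on complementary halves and an inclusion-exclusion argument, whereas you square the marginal selection probability directly; both exploit the same structural facts (exchangeability on $N$ plus the sparsity cap), and both produce the same constant, so the difference is cosmetic. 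You also correctly flag that the exchangeability of the noise-column selection indicators is a standing assumption of the cited theorem rather than something (C1) obviously delivers for a general covariance $\Sigma$; that is the same assumption the paper silently inherits by citing MB, so your identification of it as the crux is right.

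However, the fallback you sketch for the non-exchangeable case does not work. With only $\sum_{j\in N}\pi_{n,j}\le q_n$ and $\max_{j\in N}\pi_{n,j}\le c<1$, the best you can get is $\sum_{j\in N}\pi_{n,j}^2\le c\,q_n$, which makes the PFER bound of order $q_n$, not $q_n^2/p_n$. Since $q_n\lesssim s_n\to\infty$, that would not give $\sup_n\mathrm{PFER}_n<\infty$, and the constant cannot be absorbed by the denominator $p_n-s_n$. So if you want to drop exchangeability you need something genuinely stronger than a uniform entrywise cap (for example, near-equidistribution of $\{\pi_{n,j}\}_{j\in N}$, or a worst-case-false-selection assumption of the kind MB build into the companion error-control result). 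One more small point: the claim that $q_n\lesssim s_n$ and $\pi_{\mathrm{thr}}>1/2$ give $\sup_n\mathrm{PFER}_n<\infty$ implicitly requires $p_n\gtrsim s_n^2$, which does not follow from (HD1) alone; (C3) in the paper sidesteps this by directly postulating $\sup_n\mathrm{PFER}_n<\infty$, so you are not worse off than the paper here, but it is worth noting the additional scaling assumption if you want the last sentence of the lemma to be a genuine consequence of the displayed bound.
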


\begin{proof}
The displayed bound is the standard per-family error rate control for subsampling-based stability
selection and follows from the counting argument in \citet[Theorem~1]{MeinshausenBuhlmann2010},
which bounds the expected number of null variables whose empirical selection frequency exceeds
the threshold $\pi_{\mathrm{thr}}$ in terms of $q_n$ and the number of null coordinates.
\end{proof}

We now combine these two lemmas to verify Assumption~\textup{(HD4)} for the lasso stability-selection selector.

\begin{proof}[Proof of Proposition~\ref{prop:hd4-lasso}]
By Lemma~\ref{lem:lasso-screening}, under Assumptions~\textup{(HD1)}--\textup{(HD3)} and Conditions~\textup{(C1)}--\textup{(C2)},
the base lasso support $S^{\mathrm{lasso}}$ satisfies
\[
\mathbb{P}_{\beta_0}\bigl(S_0\subseteq S^{\mathrm{lasso}},\ |S^{\mathrm{lasso}}|\le C_1 s_n\bigr)\to 1
\]
uniformly over $\beta_0\in\mathcal{B}_{s_n}$, for some $C_1<\infty$.
In particular, each active variable $j\in S_0$ is selected by the base lasso with probability tending to one.

By Lemma~\ref{lem:stability-PFER}, the number $V_n$ of noise variables in the stability-selected set $\widehat S$
satisfies $\sup_n \mathbb{E}(V_n)\le C_{\mathrm{PFER}}<\infty$ for some finite constant $C_{\mathrm{PFER}}$.
Since $s_n\to\infty$ in the sparse high-dimensional regime and $V_n\le|\widehat S\setminus S_0|$, Markov's inequality yields
\[
\mathbb{P}\bigl(|\widehat S|\le C s_n\bigr)
\;\ge\; 1 - \mathbb{P}\bigl(V_n> C s_n\bigr)
\;\ge\; 1 - \frac{\mathbb{E}(V_n)}{C s_n}
\;\to\; 1
\]
for some $C\ge C_1$.
Moreover, if each active variable is selected by the base lasso with probability tending to one, then its empirical
selection frequency converges to one, so it exceeds any fixed $\pi_{\mathrm{thr}}<1$ with probability tending to one.
Therefore $S_0\subseteq \widehat S$ with probability tending to one.

Combining these two pieces, we obtain, uniformly over $\beta_0\in\mathcal{B}_{s_n}$,
\[
\mathbb{P}_{\beta_0}\bigl(S_0\subseteq \widehat S,\ |\widehat S|\le C s_n\bigr)\to 1,
\]
which is precisely Assumption~\textup{(HD4)}.
\end{proof}

\subsection{Proofs for Section 3 (Theory)}\label{app:theory-proofs}

This subsection collects the detailed assumptions and proofs for the theoretical results in Section~\ref{sec:theory}.

\subsubsection{Assumptions and proofs for finite-sample results}

We first state the assumptions for the finite-sample Gaussian linear model.

\medskip
\noindent\textbf{Assumptions (L).}
\begin{itemize}
\item[(L1)] \emph{Gaussian linear model on the inference sample.}
Conditional on $\mathcal{G}$, the response on $I_{\mathrm{inf}}$ satisfies
\[
Y_{I_{\mathrm{inf}}} = X_{I_{\mathrm{inf}}}\beta_0 + \varepsilon,\qquad
\varepsilon \sim N(0,\sigma^2 I_{n_{\mathrm{inf}}}),
\]
where the dimension $d := |\widehat S|$ satisfies $d < n_{\mathrm{inf}}$ (possibly depending on $n$).

\item[(L2)] \emph{Sample splitting and selection.}
The split $(I_{\mathrm{sel}}, I_{\mathrm{inf}})$ is independent of the errors $\varepsilon$, and the selector
uses only the data on $I_{\mathrm{sel}}$, so that conditioning on $\mathcal{G}$ fixes both the design and the
support used for inference.
\end{itemize}

\begin{proof}[Proof of Lemma~\ref{lem:selector-pivot}]
Conditional on $\mathcal{G}$, the refitted model on $I_{\mathrm{inf}}$ is a fixed-design Gaussian linear
model with support $S$ and dimension $|S|<n_{\mathrm{inf}}$. The usual OLS identities imply that
the studentized coefficients and likelihood-ratio statistics follow $t_\nu$ and $F_{d,\nu}$ distributions,
respectively. These distributions depend only on $(X_{I_{\mathrm{inf}}},S)$, which are fixed under
$\mathcal{G}$, and therefore are unaffected by the choice of selector on $I_{\mathrm{sel}}$.
\end{proof}

\begin{proof}[Proof of Theorem~\ref{thm:finite-sample-strong}]
Fix $n$ and condition on $\mathcal{G}$. Under Assumptions~\textup{(L1)}–\textup{(L2)}, 
Lemma~\ref{lem:selector-pivot} implies that the refitted model on $I_{\mathrm{inf}}$ is a fixed-design 
Gaussian linear model with support $S = \widehat S$ and dimension $d = |S| < n_{\mathrm{inf}}$, and 
the $t/F$ statistics constructed in Section~\ref{sec:method} are exact pivots conditional on 
$\mathcal{G}$. Applying Proposition~\ref{prop:pivot-validity} to these pivots, conditional on 
$\mathcal{G}$, we obtain a plausibility contour $\pi_Z(\theta_S)$ for $\theta_S = \beta_S$ such that, 
for any $\theta_{0,S}$,
\[
\pi_Z(\theta_{0,S}) \sim \Unif(0,1)
\qquad\text{conditionally on $\mathcal{G}$},
\]
and hence, for any $u \in [0,1]$,
\[
P_{\beta_0}\bigl\{ \pi_Z(\beta_{0,S}) \le u \,\big|\, \mathcal{G}\bigr\} = u.
\]

Let $C_Z(\alpha) = \{\theta_S : \pi_Z(\theta_S) > \alpha\}$ denote the $(1-\alpha)$ upper-level
set of this contour. Then for any $\beta_{0,S}$,
\[
P_{\beta_0}\bigl\{ \beta_{0,S} \notin C_Z(\alpha) \,\big|\, \mathcal{G}\bigr\}
= P_{\beta_0}\bigl\{ \pi_Z(\beta_{0,S}) \le \alpha \,\big|\, \mathcal{G}\bigr\}
\le \alpha.
\]
The coordinate-wise RSPIM interval $C_{j,n}(\alpha)$ for $\beta_{0j}$ is the projection of
$C_Z(\alpha)$ onto the $j$th coordinate:
\[
C_{j,n}(\alpha)
= \bigl\{ \theta_j : \exists\,\theta_S \text{ with } \theta_j \in \mathbb{R},\,
\theta_S \in C_Z(\alpha)\bigr\}.
\]
Hence the non-coverage event $\{\beta_{0j} \notin C_{j,n}(\alpha)\}$ is contained in
$\{\beta_{0,S} \notin C_Z(\alpha)\}$, and therefore
\[
P_{\beta_0}\bigl\{ \beta_{0j} \notin C_{j,n}(\alpha) \,\big|\, \mathcal{G}\bigr\}
\le
P_{\beta_0}\bigl\{ \beta_{0,S} \notin C_Z(\alpha) \,\big|\, \mathcal{G}\bigr\}
\le \alpha.
\]
This proves the conditional coverage bound in the theorem.

Taking the supremum over $\beta_0$ and integrating out $\mathcal{G}$ gives the unconditional
bound
\[
\sup_{\beta_0}
P_{\beta_0}\bigl\{ \beta_{0j} \notin C_{j,n}(\alpha) \bigr\}
\le \alpha.
\]

Finally, the strong-validity property~\eqref{eq:strong-validity} for the full contour $\pi_Z(\cdot)$
follows directly from Proposition~\ref{prop:pivot-validity} applied conditional on $\mathcal{G}$,
followed by integrating over $\mathcal{G}$:
\[
\sup_{\theta_0}
P_{\theta_0}\bigl\{ \pi_Z(\theta_0) \le u \bigr\}
\le u, \qquad u \in [0,1].
\]
\end{proof}

\begin{proof}[Proof of Proposition~\ref{prop:selector-uniform}]
The result is an immediate consequence of Theorem~\ref{thm:finite-sample-strong} once we make explicit the
dependence on the selector $S$ in the underlying probability law.

Fix an arbitrary selector $S \in \mathcal{S}$ and let $P^{(S)}_{\beta_0}$ denote the law of the data
(and of the resulting RSPIM procedure) when this selector is used on the selection
subsample. By definition, $S$ is a measurable map
\[
S : (Y_{I_{\mathrm{sel}}}, X_{I_{\mathrm{sel}}}) \mapsto \widehat S \subset \{1,\dots,p\}
\]
that uses only the selection data and satisfies $|\widehat S| < n_{\mathrm{inf}}$ almost surely.
Under Assumptions (L1)–(L2), Lemma~3.1 implies that, conditional on
$G = \sigma(I_{\mathrm{sel}}, I_{\mathrm{inf}}, X, \widehat S)$, the refitted model on
$I_{\mathrm{inf}}$ is a fixed-design Gaussian linear model with support
$S = \widehat S$ and dimension $d = |S| < n_{\mathrm{inf}}$, and the $t/F$ statistics used
to construct the single-split RSPIM interval $C^{(S)}_{j,n}(\alpha)$ are exact pivots.
Importantly, the conditional distribution of these pivots given $G$ does not depend on
the particular choice of selector, as long as the selector uses only the selection data.

Applying Theorem~\ref{thm:finite-sample-strong} to the RSPIM construction based on this fixed selector $S$,
we obtain, for any coordinate $j$ and any level $1-\alpha \in (0,1)$,
\[
\sup_{\beta_0}
P^{(S)}_{\beta_0}\!\left\{\beta_{0j} \notin C^{(S)}_{j,n}(\alpha)\right\}
\;\le\; \alpha.
\]
Since this inequality holds for each fixed $S \in \mathcal{S}$, taking the supremum over
$S \in \mathcal{S}$ preserves the bound and yields
\[
\sup_{S \in \mathcal{S}} \sup_{\beta_0}
P^{(S)}_{\beta_0}\!\left\{\beta_{0j} \notin C^{(S)}_{j,n}(\alpha)\right\}
\;\le\; \alpha,
\]
which is exactly the claim of Proposition~\ref{prop:selector-uniform}.
\end{proof}

\begin{proof}[Proof of Lemma~\ref{lem:selector-pivot}]
Under Assumptions~\textup{(L1)}–\textup{(L2)}, conditioning on
$G = \sigma(I_{\mathrm{sel}}, I_{\mathrm{inf}}, X, \widehat S)$ fixes the split, the full design matrix $X$,
and the selected support $S = \widehat S$. In particular, on $I_{\mathrm{inf}}$ we have, conditional on $\mathcal{G}$,
\[
Y_{I_{\mathrm{inf}}} = X_{I_{\mathrm{inf}}}\beta_0 + \varepsilon_{I_{\mathrm{inf}}},\qquad
\varepsilon_{I_{\mathrm{inf}}} \sim N(0,\sigma^2 I_{n_{\mathrm{inf}}}),
\]
with $X_{I_{\mathrm{inf}}}$ and $S$ treated as non-random.

Writing $X_S = X_{I_{\mathrm{inf}},S}$, this is the standard fixed-design Gaussian linear model
\[
Y = X_S \beta_{0,S} + \varepsilon,\qquad \varepsilon \sim N(0,\sigma^2 I_{n_{\mathrm{inf}}}),
\]
with $d = |S| < n_{\mathrm{inf}}$. The ordinary least-squares estimator
\[
\widehat\beta_S = (X_S^\top X_S)^{-1}X_S^\top Y,\qquad
\widehat\sigma^2 = \frac{1}{\nu}\,\bigl\| Y - X_S\widehat\beta_S\bigr\|_2^2,\quad
\nu = n_{\mathrm{inf}} - d,
\]
satisfies the usual distributional identities: conditional on $\mathcal{G}$,
\[
\widehat\beta_S \sim N\bigl(\beta_{0,S},\,\sigma^2 (X_S^\top X_S)^{-1}\bigr),\qquad
\frac{\nu\,\widehat\sigma^2}{\sigma^2} \sim \chi^2_\nu,
\]
and $\widehat\beta_S$ and $\widehat\sigma^2$ are independent. Therefore, for any $j \in S$,
\[
t_j = \frac{\widehat\beta_{S,j} - \beta_{0j}}{\widehat\sigma\sqrt{v_{jj}}}
\sim t_\nu,
\]
where $v_{jj}$ is the $(j,j)$ entry of $(X_S^\top X_S)^{-1}$. Likewise, the usual likelihood-ratio
$F$-statistic for testing linear contrasts of $\beta_S$ has an $F_{d,\nu}$ distribution. These
distributions depend only on $(X_{I_{\mathrm{inf}}},S)$, which are fixed under $\mathcal{G}$, and therefore
are unaffected by the choice of selector on $I_{\mathrm{sel}}$.
\end{proof}

\subsubsection{Proofs for high-dimensional results}

The proof of Proposition~\ref{prop:hd4-lasso} is given in
Subsection~\ref{app:hd4-lasso}, where we verify Assumption~\textup{(HD4)}
for lasso stability selection under Conditions~\textup{(C1)}–\textup{(C3)}.

\begin{proof}[Proof of Theorem~\ref{thm:hd-strong}]
Let $E_n = \{S_0\subseteq\widehat S_n,\;|\widehat S_n|\le C s_n\}$. By Assumption~\textup{(HD4)},
$\inf_{\beta_0\in\mathcal{B}_{s_n}}\mathbb{P}_{\beta_0}(E_n)\to 1$. For any
$\beta_0\in\mathcal{B}_{s_n}$ with $j\in S_0(\beta_0)$ we have $j\in\widehat S_n$ on $E_n$, so
$C_{j,n}(\alpha)$ is defined on $E_n$.

Conditional on $\mathcal{G}$ and on $E_n$, the refitted model on $I_{\mathrm{inf}}$ is a fixed-dimensional
Gaussian linear model with support $S=\widehat S_n$ and $|\widehat S_n|\le C s_n$. Since
$s_n = o(n_{\mathrm{inf}})$ by Assumption~\textup{(HD1)}, there exist $\kappa\in(0,1)$ and $N<\infty$
such that $|\widehat S_n|\le \kappa n_{\mathrm{inf}}$ for all $n\ge N$ on $E_n$. Thus, for $n$ sufficiently
large, the refitted model on $I_{\mathrm{inf}}$ is low-dimensional relative to $n_{\mathrm{inf}}$ with
$d = |\widehat S_n|<n_{\mathrm{inf}}$, and the exact Gaussian $t/F$ pivots of
Lemma~\ref{lem:selector-pivot} and the finite-sample strong validity in
Theorem~\ref{thm:finite-sample-strong} apply. In particular,
\[
\mathbb{P}_{\beta_0}\{\beta_{0j}\notin C_{j,n}(\alpha)\mid \mathcal{G}\} \le \alpha
\quad\text{on }E_n.
\]

Arguing as in the finite-sample case and conditioning on $\{j\in\widehat S_n\}$, we obtain
\[
\mathbb{P}_{\beta_0}\{\beta_{0j}\notin C_{j,n}(\alpha)\mid j\in\widehat S_n\}
\le \alpha + \frac{\mathbb{P}_{\beta_0}(E_n^{\mathrm{c}})}{\mathbb{P}_{\beta_0}(j\in\widehat S_n)}.
\]
For $j\in S_0(\beta_0)$ we have $\{j\notin\widehat S_n\}\subseteq E_n^{\mathrm{c}}$, so
$\mathbb{P}_{\beta_0}(j\in\widehat S_n)\ge 1-\mathbb{P}_{\beta_0}(E_n^{\mathrm{c}})$. Taking the
supremum over $\beta_0\in\mathcal{B}_{s_n}$ with $j\in S_0(\beta_0)$ and using
$\sup_{\beta_0}\mathbb{P}_{\beta_0}(E_n^{\mathrm{c}})\to 0$ yields the desired bound. The contour
statement follows from the equivalence between non-coverage events and
$\{\pi_{n,j}(\beta_{0j})\le u\}$.
\end{proof}

\begin{proof}[Proof of Corollary~\ref{cor:uncond-strong}]
By Theorem~\ref{thm:hd-strong}, for any $u\in[0,1]$ we have
\[
\limsup_{n\to\infty}
\sup_{\beta_0\in\mathcal{B}_{s_n}: j\in S_0(\beta_0)}
P_{\beta_0}\bigl\{\pi_{n,j}(\beta_{0j}) \le u \,\big|\, j\in\widehat S_n\bigr\}
\;\le\; u.
\]
Assumption~(HD4) implies
\[
\inf_{\beta_0\in\mathcal{B}_{s_n}: j\in S_0(\beta_0)}
P_{\beta_0}(j\in\widehat S_n)\;\to\;1,
\]
so, for any such $\beta_0$,
\[
\begin{aligned}
P_{\beta_0}\bigl\{\pi_{n,j}(\beta_{0j}) \le u\bigr\}
&= P_{\beta_0}\bigl\{\pi_{n,j}(\beta_{0j}) \le u \,\big|\, j\in\widehat S_n\bigr\}
    P_{\beta_0}(j\in\widehat S_n) \\
&\quad + P_{\beta_0}\bigl\{\pi_{n,j}(\beta_{0j}) \le u \,\big|\, j\notin\widehat S_n\bigr\}
    P_{\beta_0}(j\notin\widehat S_n) \\
&\le P_{\beta_0}\bigl\{\pi_{n,j}(\beta_{0j}) \le u \,\big|\, j\in\widehat S_n\bigr\}
    + P_{\beta_0}(j\notin\widehat S_n).
\end{aligned}
\]
Taking $\sup_{\beta_0\in\mathcal{B}_{s_n}: j\in S_0(\beta_0)}$ and then $\limsup_{n\to\infty}$, the first term is bounded by $u$ by Theorem~\ref{thm:hd-strong}, and the second term tends to zero uniformly by (HD4). This yields the asserted unconditional strong-validity bound for $\pi_{n,j}(\cdot)$.

The statement for $C_{j,n}(\alpha)$ follows because $C_{j,n}(\alpha)$ is the $(1-\alpha)$ upper-level set of $\pi_{n,j}$, so that
\[
\bigl\{\beta_{0j}\notin C_{j,n}(\alpha)\bigr\}
\;\subseteq\;
\bigl\{\pi_{n,j}(\beta_{0j}) \le \alpha\bigr\},
\]
and the same argument with $u=\alpha$ gives the desired inequality.
\end{proof}

\begin{proof}[Proof of Corollary~\ref{cor:oracle}]
On the event $\{\widehat S_n = S_0\}$, $C_{j,n}(\alpha)$ and $C^{\mathrm{oracle}}_{j,n}(\alpha)$
coincide, and $\mathbb{P}(\widehat S_n = S_0)\to 1$ by selection consistency. This implies that the
ratio of diameters converges to one in probability. For the rate statement, note that on
$I_{\mathrm{inf}}$ with $S=S_0$ fixed and $|S_0|\lesssim s_n = o(n_{\mathrm{inf}})$, the OLS-based
$t$-interval has half-length proportional to $\widehat\sigma\sqrt{v_{jj}} = O_P(n_{\mathrm{inf}}^{-1/2})$
by standard linear-model theory, so both intervals have length $O_P(n_{\mathrm{inf}}^{-1/2})$.
\end{proof}

\subsubsection{Assumptions and proofs for orthogonalized extension}

\medskip
\noindent\textbf{Assumptions (O).}
\begin{itemize}
\item[(O1)] \emph{Neyman orthogonality and smoothness.}
The score $\psi(W;\theta,\eta)$ is Gateaux differentiable in $(\theta,\eta)$ and satisfies
\[
\mathbb{E}\{\psi(W;\theta_0,\eta_0)\mid\mathcal{G}\}=0,\qquad
\partial_\eta \mathbb{E}\{\psi(W;\theta_0,\eta)\mid\mathcal{G}\}\big|_{\eta=\eta_0}=0,
\]
together with suitable local smoothness in a neighborhood of $(\theta_0,\eta_0)$.

\item[(O2)] \emph{Nuisance convergence and stochastic equicontinuity.}
The cross-fitted nuisance estimators $\widehat\eta^{(-b)}$ obey
$\|\widehat\eta^{(-b)}-\eta_0\| = o_P(n_{\mathrm{inf}}^{-1/4})$ uniformly over folds
$b=1,\dots,B$, and the empirical process remainder in the expansion of $\widehat U_n(\theta)$ is
$o_P(n_{\mathrm{inf}}^{-1/2})$ uniformly over $\theta$ in a neighborhood of $\theta_0$.

\item[(O3)] \emph{Asymptotic linearity and central limit theorem.}
There exists a mean-zero, finite-variance random vector $\varphi(W;\theta_0,\eta_0)$ such that
\[
\sqrt{n_{\mathrm{inf}}}\,\widehat U_n(\theta_0)
= \frac{1}{\sqrt{n_{\mathrm{inf}}}}\sum_{i\in I_{\mathrm{inf}}}
\varphi(W_i;\theta_0,\eta_0) + o_P(1),
\]
and, conditional on $\mathcal{G}$,
\[
\frac{1}{\sqrt{n_{\mathrm{inf}}}}\sum_{i\in I_{\mathrm{inf}}}
\varphi(W_i;\theta_0,\eta_0) \;\Rightarrow\; N(0,V)
\]
for some positive-definite covariance matrix $V$.

\item[(O4)] \emph{Bootstrap approximation (uniform in $\theta$).}
Let $T_n(\theta)$ be the studentized statistic in \eqref{eq:orth-score}.
There exists a bootstrap or
wild-bootstrap analogue $T_n^*(\theta)$ such that, conditional on the data, the distribution of
$T_n^*(\theta_0)$ consistently approximates that of $T_n(\theta_0)$ \emph{uniformly} over
$\theta_0$ in the parameter set $\Theta_0$ of interest:
\[
\sup_{\theta_0 \in \Theta_0} \;\sup_{t\in\mathbb{R}}
\bigl|\mathbb{P}\{T_n(\theta_0)\le t\mid\mathcal{G}\}
      - \mathbb{P}^*\{T_n^*(\theta_0)\le t\mid\mathcal{G}\}\bigr|
\;\to\; 0 \quad\text{in probability}.
\]
\end{itemize}

\begin{proof}[Proof of Theorem~\ref{thm:orth-RSPIM}]
By Assumption~\textup{(O3)} and the consistency of $\widehat\sigma(\theta_0)$ in
\eqref{eq:orth-score}, the statistic $T_n(\theta_0)$ is asymptotically standard normal
conditionally on $\mathcal{G}$:
\[
T_n(\theta_0)
= \frac{\sqrt{n_{\mathrm{inf}}}\,\widehat U_n(\theta_0)}{\widehat\sigma(\theta_0)}
\;\xrightarrow{d}\; N(0,1)
\qquad\text{conditionally on $\mathcal{G}$}.
\]

Assumption~\textup{(O4)} states that the bootstrap calibration consistently approximates the law of
$T_n(\theta_0)$, again conditional on $\mathcal{G}$, \emph{uniformly} over $\theta_0$.
In the notation of
Proposition~\ref{prop:approx-pivot}, let $T_n(Z,\theta_0)$ denote the studentized statistic in
\eqref{eq:orth-score}, and let $F_{n,\theta_0}$ and $\widehat F_{n,\theta_0}$ be, respectively, the
true and bootstrap cumulative distribution functions of $T_n(Z,\theta_0)$ conditional on
$\mathcal{G}$.
Then (O4) implies that
\[
\sup_{\theta_0} \Delta_n(\theta_0)
:= \sup_{\theta_0} \sup_{t \in \mathbb{R}}
\bigl| F_{n,\theta_0}(t) - \widehat F_{n,\theta_0}(t) \bigr|
\;\xrightarrow{P}\; 0.
\]

Applying Proposition~\ref{prop:approx-pivot} conditional on $\mathcal{G}$, with the validified
contour constructed from $T_n(Z,\theta)$ and $\widehat F_{n,\theta}$ as in
Section~\ref{subsec:validification}, we obtain, for any $u\in[0,1]$,
\[
\sup_{\theta_0}
P_{\theta_0}\bigl\{ \pi_n^{\mathrm{orth}}(\theta_0) \le u \,\big|\, \mathcal{G}\bigr\}
\;\le\;
u + 2 \sup_{\theta_0} \Delta_n(\theta_0).
\]
Since $\sup_{\theta_0}\Delta_n(\theta_0)\xrightarrow{P}0$ by Assumption~\textup{(O4)}, taking
$\limsup_{n\to\infty}$ and then integrating out $\mathcal{G}$ yields
\[
\limsup_{n\to\infty}
\sup_{\theta_0}
P_{\theta_0}\bigl\{ \pi_n^{\mathrm{orth}}(\theta_0) \le u \bigr\}
\;\le\; u, \qquad u \in [0,1],
\]
which is the asserted asymptotic strong-validity property.
\end{proof}

\medskip
\noindent\textbf{Conditions (PL) for the partially linear model.}
\begin{itemize}
\item[(PL1)] \emph{Moment and tail conditions.}
The random vector $(Y,D,X^\top)$ has finite moments of order $q>4$ and is uniformly
square-integrable.
The conditional variance $\mathbb{E}(\varepsilon^2\mid D,X)$ is bounded away
from zero and infinity.

\item[(PL2)] \emph{Approximate sparsity and design regularity.}
There exist vectors $\gamma_0,\delta_0\in\mathbb{R}^{p_n}$ with supports of size at most $s_n$
such that
\[
g_0(X) = X^\top\gamma_0 + r_g(X),\qquad
m_0(X) = X^\top\delta_0 + r_m(X),
\]
where the approximation errors satisfy
$\mathbb{E}\{r_g(X)^2\}+\mathbb{E}\{r_m(X)^2\}=o(1)$.
The Gram matrix
$\mathbb{E}(XX^\top)$ has eigenvalues bounded away from zero and infinity on $s_n$-sparse
vectors (a restricted eigenvalue / sparse Riesz condition).

\item[(PL3)] \emph{Growth of dimension and sparsity.}
The effective sparsity $s_n$ and dimension $p_n$ may grow with $n_{\mathrm{inf}}$, but satisfy
\[
s_n^2 (\log p_n)^2 / n_{\mathrm{inf}} \to 0.
\]
\end{itemize}

The following lemma verifies that the partially linear model conditions imply the general orthogonal-score assumptions.

\begin{lemma}[Verification of Assumptions (O1)–(O4) under (PL1)–(PL3)]
\label{lem:pl-conditions}
Suppose the data follow the partially linear model
\[
Y = D\,\theta_0 + g_0(X) + \varepsilon,\qquad \mathbb{E}(\varepsilon\mid D,X)=0,
\]
with Neyman-orthogonal score
\[
\psi(Z;\theta,\eta) = \{D-m(X)\}\{Y-g(X)-\theta\{D-m(X)\}\},\qquad \eta = (g,m),
\]
and that Conditions~\textup{(PL1)}–\textup{(PL3)} hold.
Then, for the cross-fitted nuisance estimators
$\widehat\eta^{(-b)} = (\widehat g^{(-b)},\widehat m^{(-b)})$ obtained via $\ell_1$-penalized least squares,
the studentized statistic $T_n(\theta)$ in~\eqref{eq:orth-score} satisfies
Assumptions~\textup{(O1)}–\textup{(O4)} with influence function
$\varphi(W;\theta_0,\eta_0)=\psi(W;\theta_0,\eta_0)$.
\end{lemma}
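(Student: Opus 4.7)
The lemma is the assembly step of the orthogonalized extension: each of Assumptions~(O1)--(O4) corresponds to exactly one of the auxiliary lemmas already established in this appendix, so the plan is to verify them in turn, keeping track of how Conditions~(PL1)--(PL3) feed into each.

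First, for~(O1) I would invoke Lemma~\ref{lem:pl-orthogonality} directly, which uses Gateaux differentiability together with $\mathbb{E}(v\mid X)=0$ to kill the nuisance derivative at $(\theta_0,\eta_0)$. Second, for~(O2), I would combine Lemma~\ref{lem:pl-lasso} with the growth condition in~(PL3): the cross-fitted lasso estimators for $g_0$ and $m_0$ achieve $L_2(P_X)$ rates of order $(s_n \log p_n / n_{\mathrm{inf}})^{1/2}$, and under $s_n^2 (\log p_n)^2 / n_{\mathrm{inf}} \to 0$ this is $o(n_{\mathrm{inf}}^{-1/4})$, which is exactly the rate required. The stochastic-equicontinuity component of~(O2) is where the cross-fitting architecture is essential: since $\widehat\eta^{(-b)}$ is independent of the fold on which $\psi$ is evaluated, the empirical-process remainder collapses to a sample average of conditionally mean-zero terms, whose variance vanishes at the ambient $n_{\mathrm{inf}}^{-1/2}$ rate under the moment bound in~(PL1).

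Third, for~(O3) I would appeal to Lemma~\ref{lem:pl-clt}. The internal mechanism is a first-order Taylor expansion of $\psi(W;\theta_0,\widehat\eta^{(-b)})$ around $\eta_0$: orthogonality~(O1) eliminates the linear-in-$\widehat\eta - \eta_0$ term, and the quadratic remainder is of order $\|\widehat\eta - \eta_0\|_{L_2}^2 = o_P(n_{\mathrm{inf}}^{-1/2})$ by the previous step, so the cross-fitted score is asymptotically equivalent to the oracle score $\varphi(W;\theta_0,\eta_0) = \psi(W;\theta_0,\eta_0)$. The conditional CLT then follows from the i.i.d.\ structure under $\mathcal{G}$ and the $(2+\delta)$-th moment bound in~(PL1) via Lindeberg--Feller, and Slutsky's theorem together with consistency of $\widehat\sigma(\theta_0)$ yields the $N(0,1)$ limit for $T_n(\theta_0)$. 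Fourth, for~(O4) I would invoke Lemma~\ref{lem:pl-bootstrap}: a multiplier CLT applied conditionally on the data gives the same $N(0,1)$ limit for $T_n^*(\theta_0)$, and a triangle inequality against the conditional CLT for $T_n(\theta_0)$ delivers the Kolmogorov-distance convergence required by~(O4).

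The main obstacle is not in the assembly itself, which is essentially bookkeeping, but in making sure the four auxiliary lemmas cover~(O1)--(O4) in precisely the form used by Theorem~\ref{thm:orth-RSPIM}. The delicate points are: (i)~the rate in~(O2) is tight under~(PL3), so the sparsity--dimension trade-off must be stated carefully enough that $(s_n\log p_n/n_{\mathrm{inf}})^{1/2}$ is $o(n_{\mathrm{inf}}^{-1/4})$ rather than merely $o(1)$; (ii)~the asymptotic linearity in~(O3) is stated unconditionally but used conditionally on $\mathcal{G}$, which requires that the split-induced $\sigma$-field only fixes the folds and design while leaving the within-fold errors i.i.d., a property that follows from the independence of the split from $\varepsilon$ under~(PL1); and (iii)~the bootstrap approximation in~(O4) is required \emph{uniformly} in $\theta_0$, which is why Lemma~\ref{lem:pl-bootstrap} routes the argument through a pair of one-sample CLTs against the same Gaussian limit rather than through a pivot-by-pivot comparison. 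Once these three alignments are in place, the proof of Lemma~\ref{lem:pl-conditions} is a short one-line-per-assumption argument.
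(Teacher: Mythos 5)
Your proposal is correct and follows essentially the same route as the paper's proof: the paper likewise assembles Lemma~\ref{lem:pl-conditions} by citing Lemma~\ref{lem:pl-orthogonality} for (O1), Lemma~\ref{lem:pl-lasso} together with (PL3) for (O2), Lemma~\ref{lem:pl-clt} for (O3), and Lemma~\ref{lem:pl-bootstrap} for (O4), with the cross-fitting and Taylor-expansion mechanics you describe matching the internal arguments of those auxiliary lemmas.
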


\begin{proof}[Proof of Lemma~\ref{lem:pl-conditions}]
The proof proceeds in three steps, verifying each of Assumptions~(O1)–(O4) in turn.

\textbf{Step 1: Verification of (O1).}
By construction, the score $\psi(Z;\theta,\eta)$ is Gateaux differentiable in $(\theta,\eta)$.
The moment condition $\mathbb{E}\{\psi(W;\theta_0,\eta_0)\mid\mathcal{G}\}=0$ holds because
$\mathbb{E}(\varepsilon\mid D,X)=0$ and $\eta_0 = (g_0,m_0)$ are the true regression functions.
The Gateaux derivative with respect to $\eta$ at $(\theta_0,\eta_0)$ vanishes by the orthogonality
property of the score: perturbations in $(g,m)$ around $(g_0,m_0)$ do not affect the first-order
term in the expansion of $\mathbb{E}\{\psi(W;\theta_0,\eta)\mid\mathcal{G}\}$. Local smoothness
follows from the moment condition~\textup{(PL1)}.

\textbf{Step 2: Verification of (O2) and (O3).}
Under Conditions~\textup{(PL2)}–\textup{(PL3)}, the nuisance functions $g_0$ and $m_0$ admit
approximate sparse representations with effective sparsity $s_n$ and approximation error $o(1)$.
Standard results on $\ell_1$-penalized least squares \citep{MeinshausenBuhlmann2010} imply that the cross-fitted estimators $\widehat g^{(-b)}$ and
$\widehat m^{(-b)}$ achieve $L_2$ prediction error
\[
\|\widehat g^{(-b)} - g_0\|_{L_2} + \|\widehat m^{(-b)} - m_0\|_{L_2}
= O_P\bigl\{\sqrt{s_n \log p_n / n_{\mathrm{inf}}}\bigr\}
= o_P(n_{\mathrm{inf}}^{-1/4}),
\]
uniformly over folds $b=1,\dots,B$, where the last equality uses Condition~\textup{(PL3)}.
This verifies the nuisance-rate condition in~\textup{(O2)}.

By the Neyman orthogonality established in Step 1, the effect of nuisance estimation on the
cross-fitted score is of order $o_P(n_{\mathrm{inf}}^{-1/2})$, yielding the stochastic
equicontinuity required in~\textup{(O2)}. Combined with a central limit theorem for the
empirical average of the influence function $\varphi(W;\theta_0,\eta_0) = \psi(W;\theta_0,\eta_0)$,
this establishes the asymptotically linear representation and conditional CLT in~\textup{(O3)}.
The conditional covariance $V$ in~\textup{(O3)} is positive-definite under the non-degeneracy
assumptions in~\textup{(PL1)}.

\textbf{Step 3: Verification of (O4).}
Under the moment condition~\textup{(PL1)}, a multiplier wild bootstrap with mean-zero,
unit-variance multipliers (e.g., Rademacher) consistently approximates the conditional
distribution of the studentized statistic $T_n(\theta_0)$ in~\eqref{eq:orth-score},
uniformly over $\theta_0$ in a compact parameter set. This follows from standard wild-bootstrap
theory for $U$-statistics and sample averages \citep{mammen1993bootstrap}.
The uniformity in $\theta_0$ required in~\textup{(O4)} holds by the stochastic
equicontinuity established in Step 2 and a bracketing or chaining argument.

Combining these three steps yields the claimed verification of Assumptions~\textup{(O1)}–\textup{(O4)}.
\end{proof}

\begin{proof}[Proof of Corollary~\ref{cor:pl-orth}]
By Lemma~\ref{lem:pl-conditions}, Conditions~\textup{(PL1)}–\textup{(PL3)}
imply Assumptions~\textup{(O1)}–\textup{(O4)} for the studentized statistic
$T_n(\theta)$ in~\eqref{eq:orth-score}.
Applying
Theorem~\ref{thm:orth-RSPIM} with this statistic yields the claimed
asymptotic strong-validity property.
\end{proof}

\subsubsection{Proof for multi-split aggregation}

\begin{proof}[Proof of Proposition~\ref{prop:maxitive}]
Fix $\theta_0$ and $u\in[0,1]$. By construction,
\[
\{\pi^{\max}_{n,j}(\theta_0)\le u\}
= \Bigl\{\max_{1\le r\le R}\pi^{(r)}_{n,j}(\theta_0)\le u\Bigr\}
= \bigcap_{r=1}^R \{\pi^{(r)}_{n,j}(\theta_0)\le u\}.
\]
Hence
\[
\mathbb{P}_{\theta_0}\{\pi^{\max}_{n,j}(\theta_0)\le u\}
= \mathbb{P}_{\theta_0}\Bigl(\bigcap_{r=1}^R \{\pi^{(r)}_{n,j}(\theta_0)\le u\}\Bigr)
\le \min_{1\le r\le R}
\mathbb{P}_{\theta_0}\{\pi^{(r)}_{n,j}(\theta_0)\le u\}.
\]
Strong validity of each $\pi_n^{(r)}$ implies
$\mathbb{P}_{\theta_0}\{\pi^{(r)}_{n,j}(\theta_0)\le u\}\le u$ for all $r$ and all $u\in[0,1]$, so
\[
\mathbb{P}_{\theta_0}\{\pi^{\max}_{n,j}(\theta_0)\le u\} \le u.
\]
Taking the supremum over $\theta_0$ yields strong validity of $\pi^{\max}_{n,j}$.

For the coverage statement, note that
\[
\{\beta_{0j}\notin C^{\max}_{j,n}(\alpha)\}
= \bigcap_{r=1}^R \{\beta_{0j}\notin C^{(r)}_{j,n}(\alpha)\}
\subseteq \{\pi^{\max}_{n,j}(\beta_{0j})\le \alpha\}.
\]
Thus
\[
\mathbb{P}_{\theta_0}\{\beta_{0j}\notin C^{\max}_{j,n}(\alpha)\}
\le \mathbb{P}_{\theta_0}\{\pi^{\max}_{n,j}(\beta_{0j})\le \alpha\} \le \alpha,
\]
and hence $\mathbb{P}_{\theta_0}\{\beta_{0j}\in C^{\max}_{j,n}(\alpha)\}\ge 1-\alpha$.
\end{proof}

\subsection{Implementation notes}

We cap the size of the selected set by $k_{\max}\le n_{\mathrm{inf}}$ to ensure that least-squares refits on
$I_{\mathrm{inf}}$ remain well-posed.
Wild bootstrap implementations use Rademacher multipliers on the residuals.
Carving variants are implemented by reusing a fraction of the selection sample for refitting; these are flagged
as approximate, since the overlap between selection and inference samples breaks the exact finite-sample pivot
calibration.
\section{Supplementary experimental details}\label{app:experiments}

\subsection{Detailed experimental designs}\label{app:designs}

We consider high-dimensional linear models with Gaussian features and correlation $\rho\in\{0,0.5\}$, signal sparsity $s\in\{5,10\}$, sample sizes $n\in\{100,200\}$, and ambient dimensions $p\in\{500,2000\}$. Signal-to-noise ratio (SNR) is controlled via $\|\beta\|_2$. For robustness, we introduce (i) heteroskedastic noise with $\sigma_i^2\propto X_{i1}^2$ and (ii) heavy-tailed noise (e.g., $t_3/t_4$). To avoid over-selection, we cap the selected set size by
$|\widehat S|\le k_{\max}=\lfloor 0.5\,n_{\mathrm{inf}}\rfloor$,
where $n_{\mathrm{inf}}$ denotes the effective degrees of freedom available to inference.

In addition, to stress-test post-selection stability under ill-conditioned selection events, we include a moderate-dimensional but highly correlated design with $(n,p,\rho)=(100,200,0.9)$ and weak signals ($\beta_{0j}\approx 0.4$ on $S_0$). This configuration is used in Module~E to compare RSPIM with the polyhedral exact selective-inference method.

\subsection{Calibration diagnostics: IM/PIM-standard validity checks}\label{app:calibration-diagnostics}

We report three standard diagnostics for assessing the validity of our plausibility contours and intervals:

\begin{enumerate}[leftmargin=*,itemsep=0.25em]
\item \textbf{Strong validity.} For true coordinates, the null \emph{plausibility} (equivalently, the $p$-value induced by our pivot) should be close to $\mathrm{Unif}(0,1)$ or conservative. We display histograms and QQ plots, and summarize deviations via Kolmogorov distances.

\item \textbf{Coverage vs nominal.} We compare empirical coverage of $(1-\alpha)$ plausibility intervals for $\beta_j$ against the nominal level over $\alpha\in\{0.1,\dots,0.9\}$, both per-coordinate and averaged over coordinates.

\item \textbf{False-confidence check.} For selected \emph{null} coordinates, we report the rate at which plausibility exceeds $1-\alpha$ to verify the expected control ($\le \alpha$ in practice, hence conservative).
\end{enumerate}

Beyond these scalar summaries, we also visualize the plausibility contour itself for representative coordinates. The plausibility contour provides a genuinely possibilistic diagnostic: maxitive aggregation trades local sharpness for ``strong but diffuse'' evidence, making it easy to detect when repeated splits lead to diffuse post-selection uncertainty.

\subsection{Efficiency comparison methodology}\label{app:efficiency-methodology}

All efficiency comparisons are made \emph{at equal conditional coverage}. Formally, for each method $m$ (RSPIM, de-biased lasso, etc.) and each coordinate $j$ we can regard the intervals as arising from a pivotal statistic $T_{n,j}^{(m)}(\beta_j)$ and an associated plausibility contour or $p$-value function. To compare efficiency fairly, we introduce a one-parameter family of ``shrunk'' pivots
\[
T_{n,j}^{(m,c)}(\beta_j) = c\,T_{n,j}^{(m)}(\beta_j), \qquad c>0,
\]
and let $C_{j,n}^{(m,c)}(\alpha)$ denote the corresponding $(1-\alpha)$ interval for $\beta_j$. For a fixed design and method $m$, we then define the empirical conditional coverage at $c$ (the Monte Carlo analogue of the conditional-on-selection coverage in Theorem~\ref{thm:hd-strong}) as
\[
\widehat{\mathrm{Cov}}^{\mathrm{cond}}_m(c)
= \frac{1}{N_{\mathrm{sel},m}}
  \sum_{b=1}^B \sum_{j\in\widehat S_m^{(b)}}
  \mathbf{1}\bigl\{\beta_{0j}\in C_{j,n}^{(m,c)}(\alpha; b)\bigr\},
\]
where $b=1,\dots,B$ indexes Monte Carlo repetitions, $\widehat S_m^{(b)}$ is the selected set for method $m$ in repetition $b$, and $N_{\mathrm{sel},m}=\sum_{b=1}^B|\widehat S_m^{(b)}|$ is the total number of selected coordinates across repetitions.

For each method $m$ we search over a grid of $c$-values and choose a calibration $c_m^\star$ such that $\widehat{\mathrm{Cov}}^{\mathrm{cond}}_m(c_m^\star)$ is close to the target level $1-\alpha$ (here $1-\alpha=0.90$). We then report, at these calibrated values $c_m^\star$, (i) the mean interval length (MIL) across selected coordinates, and (ii) the empirical rejection probability of the null at nominal level $\alpha$.

This yields ``equal-coverage'' efficiency comparisons: every method is operated at approximately the same conditional coverage, and differences in MIL or power reflect efficiency rather than miscalibration.

Importantly, the RSPIM procedure defined in Sections~\ref{sec:method}--\ref{sec:theory} corresponds to the uncalibrated case $c=1$. All strong-validity and oracle-equivalence results are stated for this raw version. The calibration factors $c_m^\star$ in this section should therefore be interpreted purely as simulation diagnostics: they quantify how much each method would need to be shrunk or expanded, in a given design, to attain the target coverage, and they are \emph{not} part of the formal methodology.

\subsection{Additional remarks and deferrals}\label{app:additional-remarks}

Comparisons with alternative selectors, including knockoffs and forward stepwise, as well as detailed results for intersection-of-splits intervals, are reported in the supplement. In robust settings we emphasize the calibrated single-split RSPIM as the main estimator and use the union-of-splits aggregator as a conservative stability device, since in some stress designs the union produces no effective intervals ($N=0$ at the calibrated $c$).

\end{document}